\documentclass{article}
\usepackage[margin=1in]{geometry}
\usepackage{graphicx} 
\usepackage{amsmath}
\usepackage{amssymb}
\usepackage{amsfonts}
\usepackage{amsthm}
\usepackage{array}
\usepackage{braket}
\usepackage{color}
\usepackage[T1]{fontenc}
\usepackage{lmodern}
\usepackage{microtype}
\usepackage{url}
\usepackage{mathtools}
\usepackage[hidelinks]{hyperref}
\hypersetup{pdftitle={Planted Cliques vs Quantum Symmetry-Adapted Measurements},
  pdfauthor={Vojtech Havlicek, Jordan Docter, Subhash Khot}}

\newtheorem{lemma}{Lemma}
\newtheorem{conjecture}{Conjecture}
\newtheorem{corollary}{Corollary}
\newtheorem{problem}{Problem}
\newtheorem{proposition}{Proposition}

\providecommand{\ketbra}[2]{\ket{#1}\!\bra{#2}}

\title{Planted Cliques and Quantum Symmetry-Adapted Measurements}

\author{
  Vojtech Havlicek\thanks{IBM Research.} \and
  Jordan Docter\thanks{Stanford University.} \and
  Subhash Khot\footnotemark[1]
}

\date{September 2026}

\begin{document}

\maketitle

\begin{abstract}
The planted clique problem is a promising candidate for quantum advantage with a wide computational-statistical gap and substantial evidence for classical hardness. We study two quantum encodings of classical samples, a natural binary phase state encoding and symmetry-adapted measurements, and determine if they preserve enough information for planted-clique detection, as well as discuss their potential towards algorithmic efficiency. 

For the binary phase state encoding, we show that constant-advantage detection requires $\Omega(n^{1+2\varepsilon}\ln^2 n)$ copies, even under arbitrary joint measurements. Repeated measurements on $O(n^2)$ copies suffice statistically above the logarithmic clique threshold.  

The symmetry-adapted measurements on the full graph register arise naturally from the Schur transform. We show that the outcome distribution of weak Schur sampling depends on the sampled graph only through its edge count and fails to distinguish the distributions; whereas retaining the representation label and Specht register after discarding multiplicity preserves distance $1-o(1)$. Near-perfect distinguishability survives even if the label is also discarded. We calculate the retained states, providing concrete targets for efficient measurement. 

Finally, we show that one supplied coherent quantum sample enables an efficient quantum distinguisher, which yields a conditional computational separation from one classical sample under quantum planted-clique hardness. Our results are structural and information-theoretic; efficient detection from one classical graph in the conjectured hard regime remains open. 
\end{abstract}

\clearpage
\tableofcontents
\clearpage

\section{Introduction}

A central question in quantum computing is which problems can be
solved more efficiently with quantum resources than with classical
computation. Proving such an advantage requires both a quantum
algorithm and a classical lower bound. Since general classical
lower bounds are scarce, many proposed advantages are stated
relative to known algorithms or computational hardness conjectures. Statistical inference provides a concrete setting for this question as
some inference problems exhibit a conjectured
\emph{computational--statistical gap}: a range of parameters in
which accurate inference is statistically possible, but no
polynomial-time algorithm is known~\cite{BPW18,KWB19}.
Motivated by whether quantum computation can make inference efficient
within such a regime, we study the information preserved by quantum
encodings and symmetry-adapted measurements of a planted-clique instance.

We focus on the problem of planted-clique detection
\cite{Jerrum92,Kucera95, Shuichi24, Alon98}.  Under the null hypothesis
$P_0$, the input is a graph drawn from $G(n,1/2)$; an $n$-vertex Erd\H{o}s--R\'enyi graph with
edge probability $1/2$. Under the
alternative $P_1$, we draw the same random graph, choose a
uniformly random set of $k$ vertices, and add all missing edges
within that set. A distinguisher $f$ receives one graph and outputs $1$ to accept
planting. With equal prior probabilities for the two hypotheses,
its advantage is
\begin{align}
    \operatorname{Adv}_n(f)
    &:=\Pr[f(X)=b]-\frac12
    =\frac12\left(
        \Pr_{X\sim P_1}[f(X)=1]
        -\Pr_{X\sim P_0}[f(X)=1]\right),
    \label{eq:detection-advantage}
\end{align}
where $b$ is uniform on $\{0,1\}$, $X\sim P_b$, and the
probabilities include the distinguisher's internal randomness.
Constant advantage means
$\operatorname{Adv}_n(f)\geqslant\eta$ for some fixed $\eta>0$
and all sufficiently large $n$.

Planted clique has a large conjectured computational--statistical
gap. For every fixed $\varepsilon>0$, detection with success
probability $1-o(1)$ is statistically possible when
$k\geqslant(2+\varepsilon)\log_2 n$, because a null graph almost
never contains such a clique. Enumerating all vertex sets of size
$r=\lceil(2+\varepsilon)\log_2 n\rceil$ and checking whether any
forms a clique gives a quasipolynomial-time distinguisher, with runtime
$n^{O(\log n)}$. Polynomial-time algorithms are known for much larger clique sizes.
For constant-advantage \emph{detection}, simply counting edges suffices
when $k=c\sqrt n$ for any fixed $c>0$~\cite[Appendix B]{Shuichi24}.
The advantage may depend on $c$, but remains bounded away from
zero as $n$ grows. Stronger guarantees are available for
\emph{recovery}, where the goal is to identify the planted clique vertices.
Selecting the vertices of highest degree succeeds with high
probability when $k\geqslant C\sqrt{n\log n}$ for a sufficiently
large constant $C$~\cite{Kucera95}. A basic spectral algorithm
reduces this requirement to $k\geqslant C\sqrt n$~\cite{Alon98},
and message passing achieves recovery with high probability
when $k\geqslant(1+\varepsilon)\sqrt{n/e}$ for any fixed
$\varepsilon>0$~\cite{DM13}.
These recovery algorithms also yield \emph{distinguishers} by checking
whether the returned vertices form a clique. The conjectured barrier concerns clique sizes
smaller than $\sqrt n$.

\begin{conjecture}[Planted-clique detection conjecture
{\cite[Conjecture~1]{brennan2019reducibilitycomputationallowerbounds}}]
\label{conj:classical-pc}
For every fixed $0<\varepsilon<1/2$, set
$k=\lfloor n^{1/2-\varepsilon}\rfloor$.
Every uniform randomized polynomial-time classical algorithm
$A$ satisfies:
\begin{align*}
    \left|
        \Pr_{X\sim P_1}[A(X)=1]
        -\Pr_{X\sim P_0}[A(X)=1]
    \right|&=o(1).
\end{align*}
The probabilities include the randomness of the graph and
of the algorithm. Equivalently, every such distinguisher has
vanishing advantage under equal priors.
\end{conjecture}

A uniform polynomial-time \emph{quantum} algorithm that achieves
constant-advantage detection at $k=\lfloor n^{1/2-\varepsilon}\rfloor$,
for even one fixed $0<\varepsilon<1/2$, would establish an
average-case quantum advantage under the classical conjecture.
Here both algorithms receive the same input: one classical graph.
Whether such a quantum algorithm exists remains open. Very recently,
Strelchuk, Subramanian, and Weso{\l}owski have formulated
a quantum planted-clique detection conjecture
\cite[Assumption~2]{SSW26}, which extends the hardness assumption
to polynomial-time \emph{quantum} algorithms. This is an additional
assumption, not known to follow from classical planted-clique hardness. Quantum algorithms already give substantial \emph{runtime} improvements
for related inference problems. Hastings~\cite{Hastings20}
obtains a quartic speedup for tensor principal component analysis
over the corresponding classical spectral algorithm, together
with an exponential reduction in space. Schmidhuber, O'Donnell,
Kothari, and Babbush~\cite{Schmidhuber24} obtain a nearly quartic
speedup for planted noisy XOR, or sparse learning parity with
noise, over the corresponding Kikuchi algorithm at the same
constraint density. Their approach combines an instance-dependent
guiding state, phase estimation, and amplitude amplification. These results improve runtime at a given signal strength. The recovery-threshold improvement in Hastings' analysis also
applies to his classical algorithms. More recently, Schmidhuber and
Hastings~\cite{SchmidhuberHastings26} remove logarithmic losses
in the classical Kikuchi hierarchy for $k$XOR and give a quartic
quantum speedup over the resulting spectral algorithms. The Kikuchi construction in
Ref.~\cite[Sec.~III.B]{Schmidhuber24} does not directly cover
planted clique detection.

Quantum computers can efficiently implement representation-theoretic
transforms, including the Schur transform and the quantum Fourier
transform over the symmetric group~\cite{Beals97,BCH06, Krovi2019efficienthigh, burchardt2025highdimensionalquantumschurtransforms}.
These organize quantum states into representation
registers and provide a concrete starting point for designing
measurements that exploit symmetry. Planted clique distributions are symmetric: both hypotheses are invariant under
vertex relabeling, the null is also invariant under arbitrary
edge-coordinate permutations but the planted ensemble breaks this symmetry. This motivated us to ask whether, and how, measurements built
from these efficiently implementable transforms can expose the
planted structure. Our input model is one \emph{classical graph}
drawn from either distribution. As a baseline, we study a compact binary
phase encoding, then symmetry-adapted measurements on the full
computational-basis graph register. Our results quantify the
information retained by these encodings and readouts, and specify
measurement tasks whose efficient implementation would enable
detection. While these results leave our motivating question about computational--statistical gap unresolved, we hope that they prepare a basis to build on in the future work.

\paragraph{Clique qsample access is powerful.} To emphasize the role of input being a single classical sample of a graph, Subsection~\ref{subsec:clique_qsamples}
shows that a single qsample permits an $O(n^2)$-time
test with success $1-o(1)$ above the logarithmic threshold.
Under quantum planted-clique hardness, this yields a conditional
computational separation from one classical sample even when both
experiments allow quantum processing.

\paragraph{Copy complexity of binary phase states.}
Section~\ref{sec:info} studies encodings of the graph into binary phase states; the edges are signs in the amplitudes of a quantum state on $O(\log n)$ qubits. We show that for every fixed
$0<\varepsilon<1/2$, at
$k=\lfloor n^{1/2-\varepsilon}\rfloor$, constant-advantage
detection requires
$\Omega(n^{1+2\varepsilon}\ln^2 n)$ copies of such state, even under
unrestricted joint measurements
(Theorem~\ref{thm:phase-hidden-copy}).
Conversely, $O(n^2)$ copies suffice for success probability
$1-o(1)$ using separate measurements followed by, possibly inefficient, classical
postprocessing
(Proposition~\ref{prop:phase-linear-upper}).
The upper bound holds throughout
$k\geqslant(2+\delta)\log_2 n$ for fixed $\delta>0$.
Together with the lower bound, it gives sharp $\Theta(n^2)$ copy
complexity at $k=\lceil c\log_2 n\rceil$ for every fixed $c>2$.
In the large-copy limit, the encoding reveals the graph up to
complementation. Its likelihood ratio is proportional to the total
number of $k$-cliques and $k$-vertex independent sets, giving
explicit Helstrom and pretty good measurement decision rules
(Section~\ref{sec:phase-optimal-limit}).
Although we ultimately show that $t = O(n^2)$ copies are sufficient, the circular nature of the Helstrom and PGM measurements doesn't illuminate a path towards an efficient quantum algorithm. One reason perhaps is that the encodings themselves do not depend in any way on the distributions. We view the phase state as a natural encoding to consider, but the analyses do not suggest a path to algorithmic advantage (or rule it out). This precisely motivated us to consider measurements where the encoding \textit{does} depend on the symmetries of the distributions.

\paragraph{Schur measurements and retained information.}
Section~\ref{sec:symmetries} considers the full graph register
on $M=\binom n2$ qubits. Permuting its edge coordinates gives
the usual Schur decomposition into an isotypic label, a Specht
register, and a multiplicity register. Weak Schur sampling measures only the isotypic label.
We show that its outcome distribution depends on the graph
only through its edge count, and prove a bound on its statistical power:
$D_{\textsf{L}}=O(k^4/n^2)$ for $k=o(\sqrt n)$
(Theorem~\ref{thm:weak-schur-vanishing}). We show that additional measurement of the multiplicity label recovers the edge count exactly. We then show that the Specht register alone
retains near-perfect distinguishability even after both the
isotypic label and multiplicity registers are discarded
(Corollary~\ref{thm:retain-specht}).
Throughout
$(2+\delta)\log_2 n\leqslant k=o(\sqrt n)$, for fixed $\delta>0$,
we obtain:
\begin{align*}
    D_{\textsf{L}}&=O(k^4/n^2)=o(1),&
    D_{\textsf{S}}&=1-o(1).
\end{align*}
Here $D_{\textsf{S}}$ is the
trace distance between the retained Specht states: edge-arrangement
information remains accessible after the other registers are
discarded. The retention result already follows from a general rank bound for
discarding a register when the null state is maximally mixed.

We calculate the reduced Specht states explicitly
(Proposition~\ref{prop:specht-projector-mixture}).
This identifies a possible route to detection: apply the efficient Schur transform, discard the isotypic label and multiplicity
registers and measure the remaining Specht state.
Subsection~\ref{sec:operator-symmetry} describes the signal and measurement effects under \emph{conjugation} by edge permutations that are
invisible to weak Schur sampling. These correlations provide
targets for more informative measurements.

Subsection~\ref{sec:level-set-symmetries} studies graph permutations
that preserve clique count. We show that the planted density matrix has rank negligible compared with the graph-register dimension, and that a single isotypic outcome yields near-perfect detection. The group description supplies no efficient implementation;
performing this readout on arbitrary graphs would decide clique existence. For smaller subgroups, Lemma~\ref{lem:outcome-stabilizers}
expresses isotypic outcome probabilities through graph stabilizers. 

\begin{table}[tb]
\centering
\small
\renewcommand{\arraystretch}{1.2}
\begin{tabular}{@{}>{\raggedright\arraybackslash}p{0.25\linewidth}
                  >{\raggedright\arraybackslash}p{0.39\linewidth}
                  >{\raggedright\arraybackslash}p{0.29\linewidth}@{}}
\hline
Input or retained output & Statistical guarantee & Computational status\\
\hline
One coherent qsample & Success $1-o(1)$ & $O(n^2)$-time readout\\
Phase-state copies & Copy lower bound; $O(n^2)$ copies suffice & Efficient inference open\\
Weak Schur labels & Distance $O(k^4/n^2)$, including repetitions & Efficient; vanishing signal\\
Specht alone & Distance $1-o(1)$ & Efficient preparation; readout open\\
Full level-set group label & Distance $1-o(1)$ & Efficient implementation open under the ensembles\\
\hline
\end{tabular}
\caption{Experiments at $k=\lfloor n^{1/2-\varepsilon}\rfloor$ for
fixed $0<\varepsilon<1/2$. Except for the supplied qsample, all
inputs and repeated preparations originate from \emph{one} classical graph drawn at random. Statistical guarantees allow unrestricted postprocessing.}
\label{tab:experiments}
\end{table}

\paragraph{Related work}
Our work was inspired by work identifying limitations of
representation-theoretic methods in quantum algorithms. A particularly relevant comparison is the work of Childs,
Harrow, and Wocjan~\cite{Childs06} on weak Fourier--Schur
sampling that establishes limitations of representation-label
measurements on coset states, while observing that the resulting
decomposition may still support informative measurements within
the retained subspaces. We establish a similar separation
between uninformative labels and informative retained states
for this different action, and give explicit formulas for the
reduced states whose measurement remains to be understood.  Moore, Russell, and Schulman~\cite{MRS05} show that
measurements on individual coset states reveal too little
information in the symmetric-group setting relevant to graph
isomorphism. Moore, Russell, and {\'S}niady~\cite{Moore_2007}
establish a superpolynomial lower bound for a class of adaptive
quantum sieve algorithms. These results constrain specific
algorithmic approaches while leaving general quantum algorithms
unresolved. We investigate related questions for the planted clique. 

\newpage
\section{Preliminaries}
\label{sec:preliminaries}

\subsection{Detection from one classical graph}

Let $\mathcal E=\binom{[n]}2$ be the possible edges, and write
$M=\binom n2$, $m=\binom k2$, and $N=2^M$.
A graph is represented by $x\in\{0,1\}^M$, with $x_e=1$
when edge $e$ is present. The null law is
$P_0=\mathcal G(n,1/2)$, the Erd\"{o}s-Reny\'{i} graph on $n$ vertices. Given $C\in\binom{[n]}k$, let $P_C$ be the planted clique distribution obtained by drawing a sample from $\mathcal G(n, 1/2)$ and subsequently forcing every edge within $C$ to be present. The planted law is $P_1=\mathbb E_C P_C$, where $C$ is uniform over the $k$-vertex sets.

\begin{problem}[Planted-clique detection]
\label{problem:decisional_planted_clique}
Given one graph $x \sim P_b$, with equal prior probabilities
on $b\in\{0,1\}$, decide if it was drawn from $P_1$ or $P_0$.
\end{problem}
Let $f:\{0,1\}^M\to[0,1]$ be the \emph{hypothesis test}, such that $f(x)$ is the probability of accepting $b=1$ on input $x$. Typically, we can restrict the range of $f$ to $\lbrace 0,1 \rbrace$ in which case the test is deterministic. The advantage of $f$ is:
\begin{align*}
    \operatorname{Adv}_n(f)
       &=\Pr[f\text{ correctly identifies }b]-\frac12
        =\frac12\bigl(P_1(f)-P_0(f)\bigr),
\end{align*}
where $P(f)=\mathbb E_{x\sim P}f(x)$.
A constant advantage means
$\operatorname{Adv}_n(f)\geqslant\eta$ for some fixed
$\eta>0$ and all sufficiently large $n$.  

For density matrices $\rho_0,\rho_1$, let:
\begin{align*}
    D(\rho_1,\rho_0)&:=\frac12\|\rho_1-\rho_0\|_1,
\end{align*}
be the \emph{trace distance} and for classical laws, write:
\begin{align*}
    \operatorname{TV}(P_1,P_0)
       &:=\frac12\sum_x|P_1(x)-P_0(x)|.
\end{align*}
for the \emph{total variational distance}.
These distances equal the optimal acceptance gap in their
respective experiments. $D$ gives optimal
equal-prior success probability $(1+D)/2$ and advantage $D/2$. 

A quantum algorithm that implements a hypothesis test $f$ may prepare states from the observed graph,
use known ancillas, and perform entangling operations and
adaptive measurements subject to the restrictions stated
in each result.
We reserve $\rho_b^{(t)}$ for phase-state ensembles, defined in Sec.~\ref{sec:info}, and
$\sigma_b$ for the diagonal computational-basis ensembles:
\begin{align*}
    \sigma_b&:=\sum_{x \in \lbrace 0, 1 \rbrace^M} P_b(x)\ket{x}\!\bra{x},&
    D(\sigma_1,\sigma_0)&=\operatorname{TV}(P_1,P_0).
\end{align*}
The computational-basis encoding preserves all classical
information. Other encodings and partial readouts may lose
information, and any claim of efficient detection must account
for both preparation and readout. 


\subsection{Worst-case planted locations and the uniform mixture}
\label{sec:minmax}
Planted clique is often described as a recovery problem: find a clique planted at an unknown location in a random graph.
In Problem~\ref{problem:decisional_planted_clique}, we study the corresponding detection problem and explicitly average over planted locations to define $P_1$. We now justify this definition. Let $P_C$ be the graph distribution with a clique planted on a fixed subset $C \in {[n] \choose k}$. Vertex relabeling acts
transitively on the clique locations and leaves both $P_0$ and $P_1$ invariant. Consequently,
the worst-case-location and uniform-mixture detection problems have the
same optimal advantage. The reason is that a distinguisher can randomly relabel
the input before applying its test: this makes its performance independent
of where the clique was planted, without changing its performance.
The following lemmas make this reduction precise.

\begin{lemma}
\label{lem:classical-minmax}
For tests $f$ taking values in $\lbrace 0,1 \rbrace$,
\begin{align}
\sup_f\inf_C\bigl(P_C(f)-P_0(f)\bigr)
=\sup_f\bigl(P_1(f)-P_0(f)\bigr),
\label{eq:classical-minmax}
\end{align}
\end{lemma}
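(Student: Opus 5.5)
We prove the two inequalities separately. The easy direction is ``$\leqslant$''. For any fixed test $f$ we have $P_1(f)=\mathbb E_C P_C(f)\geqslant\inf_C P_C(f)$, since $P_1=\mathbb E_C P_C$ and $f\mapsto P(f)$ is linear in the law. Subtracting $P_0(f)$ and taking the supremum over $f$ gives
\begin{align*}
\sup_f\inf_C\bigl(P_C(f)-P_0(f)\bigr)\leqslant\sup_f\bigl(P_1(f)-P_0(f)\bigr).
\end{align*}

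For ``$\geqslant$'' we symmetrize. For a permutation $\pi\in S_n$, let it act on graphs $x\in\{0,1\}^M$ by relabeling vertices, $(\pi x)_{\{u,v\}}=x_{\{\pi^{-1}(u),\pi^{-1}(v)\}}$. Given a test $f$, define the randomized test
\begin{align*}
\bar f(x):=\mathbb E_{\pi\sim S_n}f(\pi x).
\end{align*}
This test draws a uniform relabeling and then applies $f$. Two invariance facts drive the argument. First, $P_0$ is invariant under vertex relabeling, so $P_0(f\circ\pi)=P_0(f)$ and hence $P_0(\bar f)=P_0(f)$. Second, if $x\sim P_C$ then $\pi x\sim P_{\pi(C)}$, because relabeling sends $G(n,1/2)$ to itself and the forced clique on $C$ to one on $\pi(C)$. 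Therefore
\begin{align*}
P_C(\bar f)=\mathbb E_\pi P_{\pi(C)}(f)=\mathbb E_{C'}P_{C'}(f)=P_1(f).
\end{align*}
The middle equality holds because $S_n$ acts transitively on $\binom{[n]}k$, so $\pi(C)$ is uniform when $\pi$ is uniform. Consequently $P_C(\bar f)-P_0(\bar f)=P_1(f)-P_0(f)$ for every $C$. The infimum over $C$ of the left side therefore equals $P_1(f)-P_0(f)$, and taking the supremum over $f$ yields ``$\geqslant$'', provided $\bar f$ is an admissible test.

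The main obstacle is that admissibility: the lemma restricts to $\{0,1\}$-valued tests, while $\bar f$ is $[0,1]$-valued. I would handle this first by noting that for the two-point comparisons here the supremum over randomized tests equals the supremum over deterministic ones, since both sides are linear in $f$ and extreme points of $[0,1]$-valued tests are $\{0,1\}$-valued. That argument suffices for $\sup_f(P_1(f)-P_0(f))$. For the max-min side, which is not linear, I would instead derandomize directly.

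Write $\bar f=\frac1{n!}\sum_\pi f\circ\pi$. Since $\inf_C$ of an average is at least the average of the $\inf_C$, it would be tempting to pick a single good $\pi$, but each $f\circ\pi$ alone has the same worst case as $f$. So instead I would argue that, since $P_1(f)-P_0(f)$ is maximized by a deterministic test, we may take it to be the likelihood-ratio test $f^\ast=\mathbf 1[P_1(x)>P_0(x)]$. This test is itself vertex-relabeling invariant because both $P_0$ and $P_1$ are, so $\bar f^\ast=f^\ast$ is $\{0,1\}$-valued. Applying the symmetrization identity to $f^\ast$ then gives $\inf_C(P_C(f^\ast)-P_0(f^\ast))=P_1(f^\ast)-P_0(f^\ast)=\sup_f(P_1(f)-P_0(f))$, which closes the argument within deterministic tests.
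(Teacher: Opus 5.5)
Your proof is correct and follows the paper's argument: the infimum is at most the average for ``$\leqslant$'', and symmetrization by a uniformly random vertex relabeling, using transitivity and null invariance, for ``$\geqslant$''. Your additional observation that the likelihood-ratio test $\mathbf 1[P_1>P_0]$ is already relabeling-invariant, so symmetrizing it leaves a $\{0,1\}$-valued test, handles admissibility more carefully than the paper does. The paper's symmetrized $g_{\mathrm{sym}}$ is in general randomized, and its admissibility is justified only by a brief remark about total variation.
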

\begin{proof}
The infimum is at most the average, which gives:
\begin{align*}
    \sup_f \inf_C \left( P_C(f) - P_0(f) \right) &\leqslant \sup_f (P_1(f) - P_0(f)).
\end{align*}
Conversely, set
$f_{\rm sym}(x)=\mathbb E_{\pi\in S_n}f(\pi x)$, where $\pi x$ is the
\emph{vertex} relabeled graph. Transitivity gives
$P_C(f_{\rm sym})=P_1(f)$ for every $C$, while null invariance gives
$P_0(f_{\rm sym})=P_0(f)$. We have that for any $g$: 
\begin{align*}
    \sup_{f} \inf_{C} \left( P_C(f) - P_0(f) \right) \geqslant \inf_{C} \left( P_C(g_\text{sym}) -P_0(g_\text{sym})\right) = \inf_{C} \left( P_1(g) -P_0(g)\right)
\end{align*}
The equality follows by taking the supremum. Since the right-hand side of the above equation is the definition of total-variational distance, we can assume that the optimal test $f$ is deterministic.
\end{proof}
If $\psi_x$ be an encoding of a graph $x$ as a pure state and $\rho_C = \mathbb{E}_{x \sim P_C}[\psi_x]$ and $\rho_0=\mathbb E_{x\sim P_0}[\psi_x]$. Assume that there is an action of $S_n$ on $\rho_C$ by relabeling of the graph vertices: $U_{\pi}\rho_CU_{\pi}^\dag = \rho_{\pi C}$. Then:
\begin{lemma}
\label{lem:quantum-minmax}
\begin{align}
\sup_{0\leqslant M\leqslant I}\inf_C
 \operatorname{Tr}\bigl[M(\rho_C-\rho_0)\bigr]
=\sup_{0\leqslant M\leqslant I}\operatorname{Tr}(M (\mathbb{E}_C \rho_C - \rho_0)).
\end{align}
\end{lemma}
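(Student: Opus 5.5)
The plan mirrors the proof of Lemma~\ref{lem:classical-minmax}, with the random relabeling now implemented by the unitaries $U_\pi$. We prove the two inequalities separately. For the easy direction, fix any effect $0\leqslant M\leqslant I$. By linearity of the trace, $\operatorname{Tr}[M(\mathbb E_C\rho_C-\rho_0)]=\mathbb E_C\operatorname{Tr}[M(\rho_C-\rho_0)]$. An infimum over $C$ never exceeds the uniform average, so
\begin{align*}
\inf_C\operatorname{Tr}[M(\rho_C-\rho_0)]\leqslant\operatorname{Tr}[M(\mathbb E_C\rho_C-\rho_0)].
\end{align*}
Taking the supremum over $M$ on both sides gives the ``$\leqslant$'' direction.

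For the reverse inequality, we symmetrize the measurement. Given any effect $M$, we define
\begin{align*}
M_{\rm sym}:=\mathbb E_{\pi\in S_n}U_\pi^\dagger M U_\pi .
\end{align*}
This is a convex combination of unitary conjugates of $M$, and each conjugate satisfies $0\leqslant U_\pi^\dagger MU_\pi\leqslant I$, so $M_{\rm sym}$ is again a valid effect. By cyclicity of the trace and the covariance assumption $U_\pi\rho_CU_\pi^\dagger=\rho_{\pi C}$, we get
\begin{align*}
\operatorname{Tr}[M_{\rm sym}\rho_C]=\mathbb E_\pi\operatorname{Tr}[M\rho_{\pi C}].
\end{align*}
The action of $S_n$ on $\binom{[n]}k$ is transitive, so for fixed $C$ the set $\pi C$ is uniformly distributed when $\pi$ is uniform. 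The right-hand side is therefore $\operatorname{Tr}[M\,\mathbb E_{C'}\rho_{C'}]$, the same value for every $C$.

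We also need $\operatorname{Tr}[M_{\rm sym}\rho_0]=\operatorname{Tr}[M\rho_0]$, which amounts to the null invariance $U_\pi\rho_0U_\pi^\dagger=\rho_0$. We would justify this from the construction: $P_0$ is invariant under vertex relabeling, and the relabeling acts on the encoding by $U_\pi\psi_xU_\pi^\dagger=\psi_{\pi x}$. This is the same mechanism that yields $U_\pi\rho_CU_\pi^\dagger=\rho_{\pi C}$. Averaging $\psi_{\pi x}$ over $x\sim P_0$ then returns $\rho_0$. With both facts in hand,
\begin{align*}
\inf_C\operatorname{Tr}[M_{\rm sym}(\rho_C-\rho_0)]=\operatorname{Tr}[M(\mathbb E_C\rho_C-\rho_0)].
\end{align*}
The left-hand side is at most $\sup_{M'}\inf_C\operatorname{Tr}[M'(\rho_C-\rho_0)]$. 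Taking the supremum over $M$ on the right gives ``$\geqslant$''.

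The only delicate step is the null invariance $U_\pi\rho_0U_\pi^\dagger=\rho_0$. The lemma as stated only posits the action on the $\rho_C$, so we would make explicit that it comes from a unitary representation satisfying $U_\pi\psi_xU_\pi^\dagger=\psi_{\pi x}$. That holds for the encodings used later, namely phase states and computational-basis states, where $U_\pi$ permutes the edge or vertex labels. One could try to avoid the extra assumption by writing $\rho_0$ as a limit or combination of the $\rho_C$, but that does not work in general, so we would simply state the covariance of the encoding as the standing hypothesis. Everything else is linearity of the trace and transitivity.
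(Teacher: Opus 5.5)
Your proposal is correct and follows essentially the same route as the paper: the infimum is at most the uniform average, and conversely the effect is symmetrized over $S_n$, so covariance $U_\pi\rho_CU_\pi^\dagger=\rho_{\pi C}$, transitivity, and null invariance make $\operatorname{Tr}[M_{\rm sym}(\rho_C-\rho_0)]$ equal the mixture advantage for every $C$. Two of your points go slightly beyond the paper. The first is that null invariance $U_\pi\rho_0U_\pi^\dagger=\rho_0$ must come from covariance of the encoding itself, $U_\pi\psi_xU_\pi^\dagger=\psi_{\pi x}$; the paper uses it without listing it as a hypothesis, and the lemma can fail without it. The second, a minor one, is that your $U_\pi^\dagger MU_\pi$ versus the paper's $U_\pi MU_\pi^\dagger$ makes no difference, since averaging over $\pi$ or over $\pi^{-1}$ gives the same result.
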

\begin{proof}
As above, the infimum is at most the average:
\begin{align}
    \sup_{0\leqslant M\leqslant I}\inf_C
 \operatorname{Tr}\bigl[M(\rho_C-\rho_0)\bigr] \leqslant \sup_{0\leqslant M\leqslant I}
 \operatorname{Tr}\bigl[M( \mathbb{E}_C\rho_C-\rho_0)\bigr].
\end{align}

For the reverse direction, using symmetries 
symmetrize an effect as $M_{\rm sym}=\mathbb E_{\pi\in S_n}
 U_\pi M U_\pi^\dagger$.
The covariance
$U_\pi \rho_C U_\pi^\dagger
=\rho_{\pi C}$ implies
$\operatorname{Tr}(M_{\rm sym}\rho_C)
=\operatorname{Tr}(M \mathbb{E}_C \rho_C)$ for every $C$.
Null invariance similarly gives
$\operatorname{Tr}(E_{\rm sym}\rho_0)
=\operatorname{Tr}(E\rho_0) $. Thus the symmetrized effect achieves
the mixture advantage against every clique location. Taking the supremum
proves the equality. 
\end{proof}

\subsection{Quantum samples versus quantum processing of a sample}
\label{subsec:clique_qsamples}

Our input throughout is one classical graph drawn from $P_0$ or $P_1$.
Supplying a coherent quantum sample of the unknown distribution
changes this access model:
\begin{align*}
    \ket{q_b}&:=\sum_x\sqrt{P_b(x)}\ket{x}.
\end{align*}
Measuring $\ket{q_b}$ in the graph basis produces an ordinary sample
from $P_b$, but retaining its coherence permits a different test.
In fact, one qsample makes detection efficient throughout the regime
where the null graph is unlikely to contain a $k$-clique. When combined
with the quantum planted-clique conjecture, this gives a conditional
computational separation at a fixed one-sample budget.

\begin{proposition}[Detection from one qsample]
\label{prop:clique-qsample-detection}
Let $M=\binom n2$, $m=\binom k2$, and
$\mu_k=\binom nk2^{-m}$. Given one copy of $\ket{q_b}$, a measurement
using $M$ Hadamard gates followed by computational-basis measurements
has zero error under $P_0$ and error at most $\mu_k$ under $P_1$.
In particular, for every fixed $\delta>0$ and
$(2+\delta)\log_2n\leqslant k\leqslant n$, its equal-prior success
probability is $1-o(1)$.
\end{proposition}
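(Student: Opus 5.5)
The plan is to compute the Hadamard-basis outcome distribution of $\ket{q_b}$ explicitly and observe that under the null it is a point mass.

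\emph{Null hypothesis.} Under $P_0$ the distribution is uniform, $P_0(x)=2^{-M}$, so $\ket{q_0}=2^{-M/2}\sum_x\ket{x}=\ket{+}^{\otimes M}$. Applying $H^{\otimes M}$ maps it to $\ket{0^M}$, and the measurement returns $0^M$ with certainty. The test is therefore: output $b=0$ if the outcome is $0^M$, and $b=1$ otherwise. This test has zero error under $P_0$.

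\emph{Planted hypothesis.} The error under $P_1$ is the probability of outcome $0^M$, namely
\[
|\langle +^{\otimes M}|q_1\rangle|^2=\Bigl(2^{-M/2}\sum_x\sqrt{P_1(x)}\Bigr)^2=F(P_1,P_0),
\]
the classical fidelity (squared Bhattacharyya coefficient) between $P_1$ and the uniform law. To bound it, let $Z(x)$ be the number of $k$-cliques in $x$. The planted law has the closed form $P_1(x)=2^{-M}\,Z(x)\,2^{m}/\binom nk$: each fixed $C$ contributes $2^{-(M-m)}$ when $C$ is a clique of $x$, averaged over $\binom nk$ choices of $C$. Writing $L(x)=Z(x)/\mu_k$ for the likelihood ratio, we have $\mathbb E_{P_0}Z=\mu_k$, so $\mathbb E_{P_0}L=1$. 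The fidelity is $\bigl(\mathbb E_{P_0}\sqrt{L}\bigr)^2$. The key bound is
\[
\mathbb E_{P_0}\sqrt L=\mathbb E_{P_0}\bigl[\sqrt L\,\mathbf 1[Z\ge1]\bigr]\le \sqrt{\Pr_{P_0}[Z\ge1]}\,\sqrt{\mathbb E_{P_0}L}\le\sqrt{\mu_k},
\]
which uses Cauchy--Schwarz and then Markov's inequality, $\Pr_{P_0}[Z\ge1]\le\mathbb E Z=\mu_k$. Squaring gives $F\le\mu_k$, so the error under $P_1$ is at most $\mu_k$. The main point to get right is the $\mathbf 1[Z\ge1]$ restriction, since $L$ vanishes off that event. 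I expect this to be the only nontrivial step.

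\emph{Asymptotics and cost.} The equal-prior success probability is at least $1-\mu_k/2$. It remains to show $\mu_k=o(1)$ uniformly for $(2+\delta)\log_2 n\le k\le n$. Using $\binom nk\le n^k$,
\[
\log_2\mu_k\le k\Bigl(\log_2 n-\tfrac{k-1}{2}\Bigr).
\]
For $k\ge(2+\delta)\log_2 n$ the bracket is at most $-(\delta/2)\log_2 n+1/2$, so $\log_2\mu_k\le -\Omega(\log^2 n)$ and $\mu_k\to0$. For larger $k$ the bound only improves. Finally, the circuit consists of $M=O(n^2)$ Hadamard gates, one depth-one layer, followed by $M$ single-qubit measurements and a check that all outcomes are zero. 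This gives the claimed $O(n^2)$-time test.
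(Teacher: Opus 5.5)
Your proposal is correct and follows essentially the same route as the paper. Both arguments project onto $\ket{+}^{\otimes M}$ via Hadamards and apply Cauchy--Schwarz restricted to the clique-containing event; your restriction to $\mathbf 1[Z\geqslant1]$ together with $\mathbb E_{P_0}L=1$ is the paper's use of $P_1(A)=1$, and your Markov step is its union bound. Your cruder estimate $\binom nk\leqslant n^k$ suffices for the asymptotics in place of the paper's $(en/k)^k$.
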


\begin{proof}
Since $P_0$ is uniform, $\ket{q_0}=\ket{+}^{\otimes M}$.
Apply $H^{\otimes M}$ and accept the null exactly when the
computational-basis outcome is $0^M$. This outcome occurs with
certainty under the null.
Let $A$ be the event that the graph contains a $k$-clique.
Since $P_1(A)=1$, Cauchy--Schwarz gives
\begin{align*}
    \Pr(0^M\mid b=1)
       &=|\braket{q_0 | q_1}|^2
         =\left(\sum_{x\in A}\sqrt{P_0(x)P_1(x)}\right)^2
         \nonumber\\
       &\leqslant
          \left(\sum_{x\in A}P_0(x)\right)
          \left(\sum_{x\in A}P_1(x)\right)
         =P_0(A)
         \leqslant\binom nk2^{-\binom k2}.
\end{align*}
The final inequality is a union bound over candidate cliques.
Using $\binom nk\leqslant(en/k)^k$, the assumed lower bound on $k$
implies
\begin{align*}
    \mu_k
       &\leqslant
          \left(\frac{en}{k}\,2^{-(k-1)/2}\right)^k \leqslant
          \left(\frac{e\sqrt2}{k}\,n^{-\delta/2}\right)^k
         =o(1).
\end{align*}
The equal-prior success probability is at least $1-\mu_k/2$.
\end{proof}

The measurement projects onto the easily prepared null qsample;
it never evaluates whether an individual graph contains a clique.
It requires only one supplied state, with no access to a
state-preparation unitary or its inverse.

\begin{corollary}[Conditional computational separation at one sample]
\label{cor:clique-qsample-separation}
Fix $0<\varepsilon<1/2$ and
$k=\lfloor n^{1/2-\varepsilon}\rfloor$, and assume the
constant-advantage quantum planted-clique conjecture of
Strelchuk, Subramanian, and Weso{\l}owski~\cite[Assumption~2]{SSW26}
in this regime.
Given one coherent qsample $\ket{q_b}$, a quantum algorithm
distinguishes $P_0$ from $P_1$ in $O(n^2)$ time with success
probability $1-o(1)$.
Given one classical graph $X\sim P_b$, no uniform polynomial-time
quantum algorithm achieves any fixed positive distinguishing
advantage.
\end{corollary}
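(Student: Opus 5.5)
The corollary has two halves, and I would prove them separately: a quantum upper bound from the qsample, and a lower bound from the classical graph that is simply the assumed conjecture.

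\emph{First half (qsample upper bound).} I will invoke Proposition~\ref{prop:clique-qsample-detection} directly. First I check that $k=\lfloor n^{1/2-\varepsilon}\rfloor$ lies in the range $(2+\delta)\log_2 n\leqslant k\leqslant n$ for some fixed $\delta>0$, say $\delta=1$, once $n$ is large enough. This holds because $n^{1/2-\varepsilon}$ grows polynomially while the lower end grows logarithmically. The proposition then gives zero error under $P_0$ and error at most $\mu_k=\binom nk 2^{-\binom k2}=o(1)$ under $P_1$, so the equal-prior success probability is at least $1-\mu_k/2=1-o(1)$.

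For the running time, the circuit consists of $M=\binom n2=O(n^2)$ single-qubit Hadamard gates applied in parallel and $M$ computational-basis measurements. It is followed by a classical check of whether the outcome equals $0^M$, which takes $O(M)=O(n^2)$ time. The circuit is uniform: for each $n$ it is just $H^{\otimes\binom n2}$ together with the all-zero test, so it is generated in time linear in its size. This establishes the first claim.

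\emph{Second half (classical-graph lower bound).} This is a restatement of the assumed hypothesis. Assumption~2 of~\cite{SSW26}, in the constant-advantage form stated in this regime, asserts that every uniform polynomial-time quantum algorithm receiving one graph $X\sim P_b$ satisfies $|\Pr_{P_1}[A=1]-\Pr_{P_0}[A=1]|=o(1)$. By \eqref{eq:detection-advantage}, the advantage is half this gap, so it is $o(1)$. Hence for every fixed $\eta>0$ the advantage falls below $\eta$ for all large $n$, and no fixed positive advantage is achieved.

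\emph{Main obstacle.} The only point needing care is matching the conventions of the external assumption to ours. The two sources may differ in whether the planted law averages over clique locations or uses a worst-case location, and in the exact form of $k$ (floor versus an asymptotic family). I would resolve the first mismatch with Lemma~\ref{lem:quantum-minmax}, or equivalently Lemma~\ref{lem:classical-minmax} applied after the algorithm's measurement: an algorithm that randomly relabels vertices before running achieves the same advantage against every fixed location as against the mixture. Hardness in either formulation therefore transfers to the other. For the second mismatch, I would note that the corollary's $k$ sits in the regime the assumption covers.
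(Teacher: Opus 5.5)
Your proposal is correct and takes the same approach as the paper, which leaves the corollary as an immediate consequence: the first half is Proposition~\ref{prop:clique-qsample-detection} applied once $k=\lfloor n^{1/2-\varepsilon}\rfloor\geqslant(2+\delta)\log_2 n$ for large $n$, with $M=\binom n2$ Hadamards plus an all-zero check giving $O(n^2)$ time, and the second half is the assumed conjecture restated. Your extra step reconciling worst-case and averaged clique locations via random vertex relabeling (Lemma~\ref{lem:quantum-minmax}) is a harmless robustness check that the paper does not need, since the corollary assumes the conjecture directly in this regime.
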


Both experiments permit quantum computation; the separation concerns
the input resource. It is computational rather than a lack of
statistical information in the classical input.

\paragraph{The computational value of qsample preparation.}
Corollary~\ref{cor:clique-qsample-separation} illustrates the
computational value of a single coherent sample. Layden et al.~\cite{LSHCN25} prepare qsamples from
suitable learned continuous flow models, while Temme and
Wocjan~\cite{TW25} prepare qsamples of stationary distributions by
quantizing suitable Markov-chain couplings. Our result provides a
concrete conditional example of the value of coherence in such
outputs: one qsample admits an efficient detection measurement,
whereas processing one classical sample remains conjecturally hard
even for a quantum computer. It therefore complements these
preparation results by identifying a task for which the coherent
output can be computationally more useful than a classical draw.
This distinction between statistical information, its quantum
encoding, and its efficient use also connects to work on
quantum-versus-classical learnability~\cite{SG04} and on entangled,
separable, and statistical access to quantum examples~\cite{QSQ23}.

\newpage
\section{Binary Phase States}
\label{sec:info}

Binary phase states encode one graph into $O(\log n)$ qubits per copy. We determine how many copies of the state are necessary and sufficient to discriminate the signal from the planted clique.
We prove a lower bound for joint measurements
and an $O(n^2)$ upper bound using separate measurements, giving
sharp $\Theta(n^2)$ copy complexity for
$k=\lceil c\log_2n\rceil$ with fixed $c>2$.
We then identify the limiting experiment as observing the graph
up to complementation and derive its Helstrom and pretty good
decision rules. These results quantify the information in the
encoding; efficient detection in the conjectured hard regime
remains open. Subsection~\ref{sec:phase-state-encoding} introduces the phase-state
encoding and its compression into parity coordinates.
Subsection~\ref{sec:phase-copy-lower} establishes lower and upper bounds
on the copy complexity. This includes matching bounds for logarithmic clique
sizes. Subsection~\ref{sec:phase-optimal-limit} characterizes the
large-copy limiting experiment, derives the corresponding Helstrom and
pretty good measurements, and gives finite-copy convergence guarantees.
Finally, Subsection~\ref{sec:roadblock} explains the computational scope
of these information-theoretic results and formulates the remaining
efficient-detection problem.

\subsection{Phase-state encoding}\label{sec:phase-state-encoding}
Let $n\geqslant3$, $2\leqslant k\leqslant n$, and
$\mathcal E=\binom{[n]}2$, with $M=|\mathcal E|=\binom n2$ and fix an indexing $\mathcal E\cong[M]$.
For a graph $G$ with adjacency string $x\in\{0,1\}^M$, put
$s_e(G)=(-1)^{\neg x_e}$. Its binary phase state is
\begin{align*}
\ket{\psi_G}=\frac1{\sqrt M}\sum_{e\in\mathcal E}s_e(G)\ket e,
\qquad \psi_G=\ket{\psi_G}\bra{\psi_G}.
\end{align*}
Given the adjacency
matrix, it is possible to efficiently prepare this state starting from a uniform superposition of edge labels followed by the standard sign operation
$\ket e\mapsto s_e(G)\ket e$. 
For each edge $e$, let $\mathbf1_{e}\in\{0,1\}^M$ be the
bitstring with a $1$ in the coordinate indexed by $e$ and zeros elsewhere. Let $J\ket e=\ket{\mathbf1_{e}}$ embed the edge-label space
into the Hamming weight-one subspace of $\lbrace 0, 1 \rbrace^M$, and let $\Pi_1$ project onto that subspace. Then:
\begin{align*}
J\ket{\psi_G}
    =\sqrt{\frac{2^M}{M}}\,\Pi_1H^{\otimes M}\ket{x}.
\end{align*}
The restriction makes the encoding compact and
different graph encodings non-orthogonal. The binary phase encoding represents an $M$-bit graph using only $\lceil\log_2 M\rceil$ qubits. Although a single copy cannot reveal the entire graph, it may retain information useful for planted-clique detection. We thus ask whether joint measurements on
multiple copies can exploit this information and how many copies are required.  For that reason, we study algorithms where the \emph{distinguisher} receives $t$ copies of the binary phase state that encode the same sampled graph, represented by the ensemble:
\begin{align*}
\rho_b^{(t)}&:=\mathbb E_{G\sim P_b}[\psi_G^{\otimes t}],
& b&\in\{0,1\},
\end{align*}
The optimal distinguishing distance is:
\begin{align*}
D_t&:=\tfrac12\|\Delta_t\|_1 &
\Delta_t&:=\rho_1^{(t)}-\rho_0^{(t)}.
\end{align*}
The optimal equal-prior success probability is $(1+D_t)/2$.
We emphasize that the expectation follows the tensor power: these are repeated
encodings of one instance, not independent graph samples.
Measuring an edge label gives a uniform outcome under either
hypothesis. More generally, every acceptance probability on $t$
copies is a polynomial of degree at most $2t$ in the edge signs. Complementation produces only a global sign:
\begin{align*}
\ket{\psi_{\bar G}}=-\ket{\psi_G},\qquad \psi_{\bar G}=\psi_G.
\end{align*}
The encoding therefore retains at most the graph up to
complementation and is inherently lossy.
All asymptotic statements concern $n\to\infty$, except where
$t\to\infty$ at fixed $n,k$ is specified.

\subsubsection{Parity compression}
\label{app:phase-parity}

In the $t$-fold tensor product, the amplitude associated with an ordered edge string $\mathbf e=(e_1,\ldots,e_t) \in {\mathcal E}^t$ contains the product $\prod_{j=1}^t s_{e_j}(G)$. Since $s_e(G)^2=1$,
repeated occurrences of an edge cancel in pairs. The product
therefore depends only on which edges appear an odd number of
times. We record these edges in the parity vector:
\begin{align*}
p(\mathbf e)
&:=\bigoplus_{j=1}^t\mathbf1_{{e_j}}
\in\mathbb F_2^M,
\end{align*}
where $\oplus$ denotes coordinatewise addition modulo two.
The coordinate indexed by an edge is $1$ precisely when that
edge occurs an odd number of times in $\mathbf e$.
With this notation,
\begin{align*}
\prod_{j=1}^t s_{e_j}(G)
&=(-1)^t(-1)^{\langle x,p(\mathbf e)\rangle}.
\end{align*}
For example, $(a,a,b)$ and $(c,c,b)$ both have parity vector
$\mathbf1_{{b}}$: the repeated pair cancels, leaving only
the sign associated with $b$.  Finally, let $w_t(z)$ be the probability of obtaining parity
vector $z \in \lbrace 0, 1 \rbrace^M$ when sampling $t$ edge labels independently and
uniformly, with replacement:
\begin{align}
w_t(z)
&:=\Pr_{\mathbf e\sim\operatorname{Unif}(\mathcal E^t)}
[p(\mathbf e)=z].
\label{Eq:weights}
\end{align}
Thus $w_t(z)$ is the fraction of ordered edge strings with
parity vector $z$. 

For a fixed graph $G$, all edge strings with the same parity
vector have the same amplitude in $\ket{\psi_G}^{\otimes t}$.
We can therefore group these strings into normalized uniform
superpositions:
\begin{align*}
    \ket{z;t}
       &:=\frac1{\sqrt{M^t w_t(z)}}
          \sum_{\mathbf e:p(\mathbf e)=z}\ket{\mathbf e},
    & w_t(z)&>0.
\end{align*}
The normalization reflects the number $M^t w_t(z)$ of strings
in each parity class. Distinct classes have disjoint supports,
so these vectors are orthonormal, and every $t$-copy graph state
lies in their span. We identify each such vector with its parity label through
the graph-independent isometry
\begin{align*}
    \mathcal V_t\ket{z;t}&=\ket z.
\end{align*}
Because the same isometry applies to both ensembles, this
change of coordinates preserves their trace distance. Every attainable parity vector satisfies $|z|\equiv t\pmod2$.
Define the corresponding parity coset by
\begin{align}
    H_t&:=\{z\in\mathbb F_2^M:|z|\equiv t\pmod2\},
    & H&:=H_0,
    & |H|&=2^{M-1}.
    \label{eq:parity_coset}
\end{align}
Setting $w_t(z)=0$ for unattainable labels, including those
with $|z|>t$, we obtain:
\begin{align}
    \mathcal V_t\ket{\psi_G}^{\otimes t}
       &=(-1)^t\sum_{z\in H_t}\sqrt{w_t(z)}
                   (-1)^{\langle x,z\rangle}\ket z.
    \label{eq:phase-compressed-character}
\end{align}
This expression separates the graph-dependent Fourier character
$(-1)^{\langle x,z\rangle}$ from the graph-independent weights
$\sqrt{w_t(z)}$. The distribution $w_t$ is the endpoint
distribution of a walk on $\mathbb F_2^M$ that starts at zero
and flips one uniformly chosen edge coordinate at each of
$t$ steps. 

We continue to write $\rho_b^{(t)}$ and $\Delta_t$ for the operators in these parity coordinates. This identification is used for analysis; no efficient implementation of $\mathcal V_t$ is assumed.

\subsection{Copy complexity}
\label{sec:phase-copy-lower}
We first prove a lower bound on the number of binary phase states needed to detect the planted cliques using arbitrary joint measurements. We then give a  $O(M)$ copy upper bound that works above the logarithmic clique threshold. For $k=\lceil c\log_2 n\rceil$ with fixed $c>2$, the bounds match at $\Theta(M)$ copies.

\subsubsection{Lower bound}
\label{sec:phase-lower-proof}
\begin{proposition}[Copy lower bound]
\label{thm:phase-hidden-copy}
Let $3\leqslant k<\sqrt n$. For every integer $t\geqslant1$ such that:
\begin{align}
    t\leqslant
    \frac{M}{16m}\left(\ln\frac{n}{k^2}\right)^2,
    \label{eq:phase-hidden-budget}
\end{align}
the phase-state ensembles satisfy:
\begin{align}
    D_t\leqslant
    \sqrt{\frac e2}\,\frac{k}{\sqrt n}.
    \label{eq:phase-hidden-distance}
\end{align}
Consequently, along any sequence $3\leqslant k=o(\sqrt n)$, constant
advantage requires:
\begin{align*}
    t=\Omega\!\left(\frac{M}{m}\ln^2\frac{n}{k^2}\right).
\end{align*}
In particular, for fixed
$0<\varepsilon<1/2$ and
$k=\lfloor n^{1/2-\varepsilon}\rfloor$, this is
$t=\Omega(n^{1+2\varepsilon}\ln^2 n)$. The constant here may depend on $\varepsilon$.
\end{proposition}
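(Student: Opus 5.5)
The plan is to bound $D_t$ by a quantum $\chi^2$-divergence against the null state, working in the parity coordinates of Eq.~\eqref{eq:phase-compressed-character}. This reduces the problem to a classical second-moment computation over the overlap of two independent clique locations.

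\textbf{Step 1: the two states in parity coordinates.} Averaging the character $(-1)^{\langle x,z+z'\rangle}$ over uniform $x$ kills every off-diagonal entry. Hence $\rho_0^{(t)}=\sum_{z\in H_t}w_t(z)\ketbra{z}{z}$ is diagonal, and $\rho_0^{(t)}$ is invertible on the support of $\rho_1^{(t)}$. Under $P_C$, the coordinates outside $E(C)=\binom C2$ are uniform and those inside are fixed to $1$. Therefore $\mathbb E_{x\sim P_C}(-1)^{\langle x,u\rangle}=(-1)^{|u|}\mathbf 1[u\subseteq E(C)]$, which gives
\begin{align*}
\rho_1^{(t)}(z,z')=\sqrt{w_t(z)w_t(z')}\;\mathbb E_C\bigl[(-1)^{|z+z'|}\mathbf 1[z+z'\subseteq E(C)]\bigr].
\end{align*}

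\textbf{Step 2: the $\chi^2$ bound.} I will use $D(\rho_1,\rho_0)\leqslant\frac12\sqrt{\chi^2}$ with $\chi^2+1=\operatorname{Tr}\bigl[\rho_1\rho_0^{-1/2}\rho_1\rho_0^{-1/2}\bigr]$. This follows from Cauchy--Schwarz in the $\rho_0$-weighted Hilbert--Schmidt inner product. Plugging in the entries from Step 1, the $w_t$ factors partially cancel. Introducing an independent copy $C'$ and writing $u=z+z'$, I get
\begin{align*}
\chi^2+1=\mathbb E_{C,C'}\sum_{u\subseteq E(C\cap C')}B_t(u),\qquad B_t(u):=\sum_z\sqrt{w_t(z)\,w_t(z+u)}.
\end{align*}
The sign factors square away. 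Here $B_t(u)$ is the Bhattacharyya coefficient between the $t$-step walk distribution and its shift by $u$, and $B_t(0)=1$ accounts for the ``$+1$''.

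\textbf{Step 3 (main obstacle): bounding $B_t(u)$.} I want $B_t(u)\leqslant(2\sqrt{t/M})^{|u|}$, or something of that shape. Heuristically, each coordinate of the walk has an approximately Poisson$(t/M)$ count. Its parity is odd with probability $p\approx t/M$, so flipping that one coordinate costs a factor $2\sqrt{p(1-p)}\leqslant2\sqrt{t/M}$.

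The multinomial coupling between coordinates is what makes this delicate. My first attempt is to Poissonize the number of steps, using $t'\sim\mathrm{Poisson}(t)$, which makes the coordinates independent. I would then transfer back, either through monotonicity in $t$ or by conditioning on $t'\geqslant t$, absorbing a constant. As a fallback, I would bound $B_t(u)$ by fixing the multiset of coordinates outside $u$ and directly counting ordered edge strings whose parity on $u$ is all-odd versus all-even.

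\textbf{Step 4: summing over overlaps.} Let $\ell=|C\cap C'|$, which is hypergeometric with $\Pr[\ell=j]\leqslant\binom kj^2\binom nj^{-1}\leqslant(k^2/n)^j/j!$. Summing over subsets $u$ of the $\binom\ell2$ shared edges gives
\begin{align*}
\chi^2+1\leqslant\mathbb E_\ell\bigl(1+2\sqrt{t/M}\bigr)^{\binom\ell2}\leqslant\mathbb E_\ell\exp\bigl(\tbinom\ell2\,2\sqrt{t/M}\bigr).
\end{align*}
The budget~\eqref{eq:phase-hidden-budget} gives $2\sqrt{t/M}\leqslant\ln(n/k^2)/(2\sqrt m)\leqslant\ln(n/k^2)/k$, roughly, for $k\geqslant3$. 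Since $\ell\leqslant k$, this yields $\binom\ell2\cdot2\sqrt{t/M}\leqslant\tfrac{\ell}{2}\ln(n/k^2)$. Hence
\begin{align*}
\chi^2\leqslant\sum_{j\geqslant1}\frac{(k^2/n)^{j/2}}{j!}\leqslant e\cdot\frac{k}{\sqrt n}
\end{align*}
or a similar quantity; I will tune the constants so that the final bound comes out as $\frac12\sqrt{\chi^2}$, or via $D\leqslant\sqrt{\chi^2/2}$, matching $\sqrt{e/2}\,k/\sqrt n$.

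The consequences then follow by inverting the budget. Along $k=o(\sqrt n)$, the distance bound tends to $0$, so constant advantage forces $t$ to exceed~\eqref{eq:phase-hidden-budget}. At $k=\lfloor n^{1/2-\varepsilon}\rfloor$ we have $M/m\asymp n^{1+2\varepsilon}$ and $\ln(n/k^2)\asymp2\varepsilon\ln n$, which gives $t=\Omega(n^{1+2\varepsilon}\ln^2 n)$.
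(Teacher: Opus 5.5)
Your overall architecture matches the paper's: parity coordinates, the $\alpha=\tfrac12$ quantum $\chi^2$ bound, Poissonization, and a sum over $|C\cap C'|$. You apply the middle two ingredients in the opposite order, though.

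\textbf{The paper's order.} The paper Poissonizes first, at the quantum level. The product states $\bigotimes_e(\sqrt{1-p}\ket0+s_e(G)\sqrt p\ket1)$ and the superpositions $\sum_j\sqrt{\Pr[T=j]}\ket j\ket{\psi_G}^{\otimes j}$, with $T\sim\operatorname{Poi}(2t)$, have identical Gram matrices. An isometry followed by a copy-discarding channel therefore gives $\gamma D_t\leqslant D(\omega_1,\omega_0)$, where $\gamma=\Pr[T\geqslant t]>\tfrac12$. The $\chi^2$ bound is then applied only to the product ensemble, where $\sum_z\sqrt{w(z)w(z\oplus d)}=\beta^{|d|}$ factorizes exactly.

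\textbf{Your order.} You apply $\chi^2$ directly to the $t$-copy states, so you must control the coupled-walk coefficients $B_t(u)$. Your Step~3 plan does go through, but it needs two facts you should state.
First, $B_{t'}(u)\geqslant B_t(u)$ for $t'\geqslant t$. Indeed $w_{t'}=w_t*w_{t'-t}$, and translation by $u$ commutes with convolution. So both laws arise from $w_t$ and its translate through one common channel, and the Bhattacharyya coefficient cannot decrease.
Second, by Cauchy--Schwarz, $\sum_j\Pr[T=j]B_j(u)$ is at most the Bhattacharyya coefficient of the Poisson mixture. That mixture is a product Bernoulli law, so it gives $\beta^{|u|}$ with $\beta=2\sqrt{p(1-p)}$.
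Together these give $\gamma B_t(u)\leqslant\sum_{j\geqslant t}\Pr[T=j]B_j(u)\leqslant\beta^{|u|}$.

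Use this only for $u\neq\varnothing$, and keep $B_t(\varnothing)=1$ exact. Putting a factor $\gamma^{-1}$ on that term would leave $\chi^2\leqslant\gamma^{-1}-1+\cdots$, which is useless.

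\textbf{What each order buys.} Your ordering puts the loss $\gamma^{-1}$ under the square root. This gives $D_t^2\leqslant\frac1{4\gamma}\sum_{u\neq\varnothing}\beta^{|u|}q(u)^2$, a constant factor better than the paper's $D_t^2\leqslant\sum_{d\neq\varnothing}\beta^{|d|}q(d)^2$. The paper's ordering never has to analyze the non-product walk law.

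\textbf{The error in Step~4.} As written, Step~4 does not prove the stated inequality. Summing from $j\geqslant1$ makes the leading term $(k^2/n)^{1/2}$, so you get $\chi^2=O(k/\sqrt n)$ and only $D_t=O\bigl((k^2/n)^{1/4}\bigr)$. That exponent is wrong, and no tuning of constants turns it into $\sqrt{e/2}\,k/\sqrt n$.

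The missing observation is that overlaps $\ell\leqslant1$ contribute nothing. For them $\binom\ell2=0$ leaves only $u=\varnothing$, which is exactly the $+1$ you subtract. So $\chi^2\leqslant\gamma^{-1}\,\mathbb E\bigl[(1+\beta)^{\binom\ell2}-1\bigr]$, and the sum starts at $j=2$. With $a=k^2/n$, the bounds $a^{j/2}\leqslant a$ and $j!\geqslant2(j-2)!$ give $\sum_{j\geqslant2}a^{j/2}/j!\leqslant\tfrac e2a$. With $\gamma\geqslant\tfrac12$ this yields $\chi^2\leqslant ea$ and $D_t\leqslant\tfrac{\sqrt e}2\,k/\sqrt n\leqslant\sqrt{e/2}\,k/\sqrt n$.

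\textbf{Choosing the Poisson mean.} You also need a Poisson mean that keeps both $\gamma\geqslant\tfrac12$ and the copy budget compatible.
With mean $2t$, as in the paper, $\Pr[T<t]\leqslant2^{t-1}e^{-t}<\tfrac12$. Under the budget, $\beta\leqslant\sqrt{8t/M}\leqslant\ln(n/k^2)/\sqrt{k(k-1)}$.
With mean $t$, you get $\beta\leqslant2\sqrt{t/M}$ as in your heuristic. You then need the fact that $\operatorname{Poi}(t)$ has median $t$ for integer $t$.
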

We prove Proposition~\ref{thm:phase-hidden-copy} in the three steps below: identify the correlations that remain in the planted case, bound the trace distance by a weighted sum of their squares, and estimate that sum through
the overlap of two candidate clique locations.

\paragraph{Step 1: Sign correlations.}
\label{sec:paritybasis}
For any edge set $d \subseteq \mathcal{E}$, the null expectation of
$\prod_{e\in d}s_e(G)$ vanishes unless $d=\varnothing$.
Conditional on a planted location $C \subset [n]$, the product has nonzero
expectation precisely when $d\subseteq\binom C2$. Define:
\begin{align*}
r(d)&:=\left|\bigcup_{e\in d}e\right|,&
q(d)&:=\Pr_C\left[d\subseteq\binom C2\right]
      =\frac{\binom{n-r(d)}{k-r(d)}}{\binom nk}.
\end{align*}
Thus $\mathbb E_{P_1}\prod_{e\in d}s_e(G)=q(d)$,
with $q(0)=1$ and $q(d)=0$ for $r(d)>k$.
The sign correlation depends on the number of vertices touched,
not just the number of edges: two incident edges touch three
vertices, whereas two disjoint edges touch four. For $z,z'\in H_t$, the difference $d=z\oplus z'$ has even
weight, so $(-1)^{\langle x,d\rangle}=\prod_{e\in d}s_e(G)$.
Averaging Eq.~\eqref{eq:phase-compressed-character} gives:
\begin{align*}
(\rho_0^{(t)})_{zz'}&=w_t(z)\mathbf1[z=z'],
&
(\rho_1^{(t)})_{zz'}&=\sqrt{w_t(z)w_t(z')}\,q(z\oplus z'),
\end{align*}
Writing $W_t=\operatorname{diag}(w_t)$ and
$K(z,z')=q(z\oplus z')$, we have:
\begin{align*}
\rho_0^{(t)}&=W_t, &
\rho_1^{(t)}&=W_t^{1/2}KW_t^{1/2},&
\Delta_t&=W_t^{1/2}(K-I)W_t^{1/2}.
\end{align*}
The two ensembles have the same diagonal which means that measuring 
parity label alone reveals no signal.

\paragraph{Step 2: Weighted Fourier estimate.}

\begin{lemma}
\label{lem:phase-weighted-trace-norm}
Let $W=\operatorname{diag}(w)$ be a positive definite density matrix.
For every matrix $X$ on the same space,
\begin{align*}
    \|X\|_1^2
    \leqslant\|W^{-1/4}XW^{-1/4}\|_2^2
    =\sum_{z,z'}\frac{|X_{z,z'}|^2}{\sqrt{w(z)w(z')}}.
\end{align*}
\end{lemma}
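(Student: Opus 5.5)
The plan is to factor $X$ symmetrically through $W^{1/4}$ and apply Hölder's inequality for Schatten norms. Put $Y:=W^{-1/4}XW^{-1/4}$, which is well defined because $W$ is diagonal with strictly positive entries. Then
\begin{align*}
    X&=W^{1/4}\,Y\,W^{1/4}.
\end{align*}
Hölder's inequality with exponents $\tfrac14+\tfrac12+\tfrac14=1$ gives
\begin{align*}
    \|X\|_1&\leqslant\|W^{1/4}\|_4\,\|Y\|_2\,\|W^{1/4}\|_4 .
\end{align*}
Since $W$ is positive, $\|W^{1/4}\|_4^4=\operatorname{Tr}W=1$, so $\|W^{1/4}\|_4=1$. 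This yields $\|X\|_1\leqslant\|Y\|_2$, and squaring gives the inequality in the lemma.

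If one prefers to avoid three-factor Hölder, the same bound follows from Cauchy--Schwarz in two steps. First, $\|X\|_1=\sup_{\|U\|_\infty\leqslant1}|\operatorname{Tr}(UX)|$. Writing $UX=(UW^{1/4})\,Y\,W^{1/4}$ and using the cyclicity of the trace,
\begin{align*}
    |\operatorname{Tr}(UX)|
    &=|\operatorname{Tr}(W^{1/4}UW^{1/4}\,Y)|
    \leqslant\|W^{1/4}UW^{1/4}\|_2\,\|Y\|_2 .
\end{align*}
It remains to check that $\|W^{1/4}UW^{1/4}\|_2^2=\operatorname{Tr}(W^{1/2}U^\dagger W^{1/2}U)\leqslant 1$. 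This follows from Cauchy--Schwarz applied to the pair $W^{1/4}U^\dagger W^{1/4}$ and $W^{1/4}UW^{1/4}$, together with $\operatorname{Tr}(W^{1/2}UWU^\dagger W^{1/2})\leqslant\|U\|_\infty^2\operatorname{Tr}W=1$. I expect this operator-norm bookkeeping to be the only step needing care, and the Hölder route avoids it entirely.

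Finally, the explicit formula for the right-hand side is immediate because $W$ is diagonal. The entries of $Y$ are
\begin{align*}
    Y_{z,z'}&=w(z)^{-1/4}\,X_{z,z'}\,w(z')^{-1/4},
\end{align*}
so $\|Y\|_2^2=\sum_{z,z'}|X_{z,z'}|^2/\sqrt{w(z)w(z')}$.

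In the application to $\Delta_t=W_t^{1/2}(K-I)W_t^{1/2}$, the weights $w_t$ may vanish off the attainable set. There one first restricts to the support of $W_t$, on which $\Delta_t$ is supported, and applies the lemma to the restricted (positive definite) density matrix.
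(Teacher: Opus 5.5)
Your main Hölder argument is the paper's proof: the same factorization $X=W^{1/4}YW^{1/4}$, three-factor Schatten Hölder with $\|W^{1/4}\|_4=(\operatorname{Tr}W)^{1/4}=1$, and the entrywise formula for $\|Y\|_2^2$. Your optional Cauchy--Schwarz alternative is justified circularly as written, because pairing $W^{1/4}U^\dagger W^{1/4}$ with $W^{1/4}UW^{1/4}$ only returns $\|W^{1/4}UW^{1/4}\|_2^2$; the fix is to pair $UW^{1/2}$ with $W^{1/2}U$, which gives $\operatorname{Tr}(W^{1/2}U^\dagger W^{1/2}U)\leqslant\|UW^{1/2}\|_2\,\|W^{1/2}U\|_2\leqslant\|U\|_\infty^2\operatorname{Tr}W$.
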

\begin{proof}
Set $Y=W^{-1/4}XW^{-1/4}$. Schatten H\"older's inequality gives
\begin{align*}
    \|X\|_1=\|W^{1/4}YW^{1/4}\|_1
    \leqslant\|W^{1/4}\|_4^2\|Y\|_2=\|Y\|_2,
\end{align*}
since $\|W^{1/4}\|_4^2=(\operatorname{Tr}W)^{1/2}=1$.
Since $W=\operatorname{diag}(w_i)$, the entries of $Y$ are
$Y_{ij}=X_{ij}/(w_iw_j)^{1/4}$. Squaring the preceding inequality
therefore gives
\begin{align*}
    \|X\|_1^2
        &\leqslant \|Y\|_2^2
         =\operatorname{Tr}(Y^\dagger Y) =\sum_{i,j}|Y_{ij}|^2
         =\sum_{i,j}\frac{|X_{ij}|^2}{\sqrt{w_iw_j}},
\end{align*}
which proves the claim.
For differences of density matrices, this is the
$\alpha=\tfrac12$ case of the quantum $\chi^2$ bound
in~\cite[Lemma 5 and Eq.~(7)]{Temme10}.
\end{proof}

\begin{lemma}[Weighted Fourier bound]
\label{lem:phase-overlap-bound}
For every integer $t\geqslant1$, set
$\beta=\sqrt{1-e^{-8t/M}}$. Then:
\begin{align*}
    D_t^2\leqslant
    \sum_{\substack{d\subseteq\mathcal E\\d\ne\varnothing}}
       \beta^{|d|}q(d)^2.
\end{align*}
\end{lemma}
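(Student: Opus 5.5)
The plan is to apply Lemma~\ref{lem:phase-weighted-trace-norm} to $X=\Delta_t=W_t^{1/2}(K-I)W_t^{1/2}$ in the parity coordinates of Step~1. $W_t$ is not positive definite on all of $\mathbb F_2^M$, but both $\rho_0^{(t)}$ and $\rho_1^{(t)}$ are supported on the span of the attainable labels $S_t=\{z:w_t(z)>0\}\subseteq H_t$. So I restrict to that subspace, where $W_t$ is a positive definite density matrix, and use the Hölder argument of the lemma there.

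The entries are $(\Delta_t)_{zz'}=\sqrt{w_t(z)w_t(z')}\,\bigl(q(z\oplus z')-\mathbf 1[z=z']\bigr)$. The diagonal vanishes because $q(\varnothing)=1$. The lemma then gives
\begin{align*}
\|\Delta_t\|_1^2\leqslant\sum_{z\neq z'}\sqrt{w_t(z)w_t(z')}\,q(z\oplus z')^2
=\sum_{d\neq\varnothing}q(d)^2\,B_t(d),
\qquad B_t(d):=\sum_{z}\sqrt{w_t(z)\,w_t(z\oplus d)}.
\end{align*}
Here $B_t(d)$ is the Bhattacharyya affinity between $w_t$ and its translate by $d$. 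Since $D_t^2=\tfrac14\|\Delta_t\|_1^2$, it suffices to prove $B_t(d)\leqslant 4\beta^{|d|}$. In fact I expect to show $B_t(d)\leqslant\beta^{|d|}$ or something close, so the factor $4$ is spare room for the next step.

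To bound $B_t(d)$, I would compare the discrete-time walk with a Poissonized walk on $\mathbb F_2^M$. In the Poissonized walk each coordinate flips independently, so the endpoint law is a product of $\mathrm{Bernoulli}(p)$ laws. Under the continuous-time walk run for time $\tau$ with unit total rate, $p=\tfrac12(1-e^{-2\tau/M})$. For a product law the affinity with a translate factorizes over coordinates. Each flipped coordinate contributes
\[
2\sqrt{p(1-p)}=\sqrt{1-(1-2p)^2}=\sqrt{1-e^{-4\tau/M}},
\]
and each unflipped coordinate contributes $1$. Taking $\tau=2t$ yields exactly $\beta^{|d|}$ with $\beta=\sqrt{1-e^{-8t/M}}$. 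This strongly suggests the intended route: Poissonization at twice the number of steps.

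The main obstacle is transferring this bound from the Poissonized law back to $w_t$ itself. Joint concavity of the affinity runs in the wrong direction for mixtures, so I would not write the Poisson law as a mixture of the $w_s$ directly. Instead I would use the reverse comparison. Let $N\sim\mathrm{Poisson}(2t)$; then $\Pr[N\geqslant t]$ is bounded below by a constant. Moreover $w_s$ for $s\geqslant t$ is obtained from $w_t$ by a further random-walk channel that commutes with translations, and affinities only increase under a common channel (data processing). So $B_s(d)\geqslant B_t(d)$ for all $s\geqslant t$ of the same parity. Concavity then gives
\[
\beta^{|d|}\geqslant\sum_s\Pr[N=s]\,B_s(d)\geqslant \Pr[N\geqslant t]\,B_t(d).
\]
A parity-matched version handles $s\not\equiv t$: affinity between a law and a shift of even weight, where odd $s$ contributes zero for an even-weight shift $d$. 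This yields $B_t(d)\leqslant c\,\beta^{|d|}$ with $c\leqslant 4$, since a median argument gives $\Pr[N\geqslant t]\geqslant 1/2$, and the factor $4$ built into $D_t^2=\tfrac14\|\Delta_t\|_1^2$ absorbs $c$. Combining this with the displayed sum proves the lemma. I expect the careful bookkeeping of the parity classes and this constant to be the delicate part.
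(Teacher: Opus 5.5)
Your argument is correct, but it places the Poissonization at a different level than the paper.

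\textbf{The paper's route.} The paper never touches $w_t$ in this proof. It notes that the product states $\ket{\varphi_G}$ with Bernoulli($p$) weights and the Poisson($2t$) copy superpositions $\ket{\Phi_G}$ have the same Gram matrix ($e^{-4th/M}$ for graphs at Hamming distance $h$). Hence a graph-independent isometry maps one family to the other. Measuring the copy number and keeping $t$ copies when $T\geqslant t$ gives a channel with $\gamma D_t\leqslant\tfrac12\|\omega_1-\omega_0\|_1$. Lemma~\ref{lem:phase-weighted-trace-norm} is then applied only to the product ensembles, where the affinity factorizes.

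\textbf{Your route.} You apply that lemma directly to $\Delta_t$ in parity coordinates. This reduces the statement to the classical walk inequality $B_t(d)\leqslant\beta^{|d|}/\Pr[N\geqslant t]$. You prove it with two tools: joint concavity of $\sqrt{ab}$ over the Poisson mixture, and data processing for the Bhattacharyya coefficient. Despite your remark, concavity points exactly the way you need.

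\textbf{Two cleanups.} First, monotonicity $B_s(d)\geqslant B_t(d)$ holds for every $s\geqslant t$, not only for equal parity. This is because $w_s=w_t*w_{s-t}$ and convolution commutes with translation by $d$, so no parity bookkeeping is needed. Second, your claim that odd $s$ contributes zero for even-weight $d$ is false; it is odd-weight $d$ that gives $B_s(d)=0$ for all $s$. Nothing in your argument depends on this claim.

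\textbf{Constant and trade-offs.} With $\Pr[N\geqslant t]=\gamma>1/2$ you obtain $D_t^2\leqslant(4\gamma)^{-1}\sum_{d\ne\varnothing}\beta^{|d|}q(d)^2<\tfrac12\sum_{d\ne\varnothing}\beta^{|d|}q(d)^2$. This is slightly sharper than the paper's effective factor $(4\gamma^2)^{-1}$. Your route avoids the infinite-dimensional isometry and the flagged channel, and it is purely classical after Step~1. However, it relies on the parity compression and on this particular $\chi^2$-type bound. The paper's route is a state-level simulation showing that the $t$-copy ensembles are a degraded version of the product ensembles. That reduction therefore transfers to any channel-monotone distinguishability measure.
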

\begin{proof}
\emph{Comparison with an auxiliary product state.}
Set:
\begin{align*}
    p&=\frac{1-e^{-4t/M}}2, &
    \ket{\varphi_G}
    &=\bigotimes_{e\in\mathcal E}
      \left(\sqrt{1-p}\ket0+s_e(G)\sqrt p\ket1\right).
\end{align*}
We construct a graph-independent channel that recovers $t$ phase copies
from $\ket{\varphi_G}$ with probability greater than $1/2$.
For this purpose, we introduce the coherent superposition
\begin{align*}
    \ket{\Phi_G}
    =\sum_{j=0}^{\infty}
       \sqrt{e^{-2t}\frac{(2t)^j}{j!}}\,
       \ket j\ket{\psi_G}^{\otimes j}.
\end{align*}
The sectors labelled by $j$ are orthogonal, the $j=0$ term is a
fixed vacuum, and the same graph $G$ appears in every sector.
The squared coefficients are the probabilities of
$T\sim\operatorname{Poi}(2t)$, so the state is normalized. If $G,G'$ differ on $h$ edge coordinates, then:
\begin{align*}
    \langle\varphi_G|\varphi_{G'}\rangle
      &=(1-2p)^h=e^{-4th/M},\\
    \langle\Phi_G|\Phi_{G'}\rangle
      &=e^{-2t}\sum_{j=0}^{\infty}
        \frac{(2t)^j}{j!}(1-2h/M)^j=e^{-4th/M}.
\end{align*}
Equality of all pairwise inner products gives an isometry
$V\ket{\varphi_G}=\ket{\Phi_G}$ independent of $G$.
After applying $V$, measure the copy-number label.
If $T\geqslant t$, retain $t$ copies, otherwise output a fixed
failure state. The success probability $\gamma \coloneq \Pr(T\geqslant t)$
is independent of $G$, and
\begin{align*}
    1-\gamma=\Pr(T<t)
    \leqslant2^{t-1}\mathbb E[2^{-T}]
    =2^{t-1}e^{-t}<\frac12.
\end{align*}
Define the averaged auxiliary density matrices by
\begin{align*}
    \omega_b=\mathbb E_{G\sim P_b}
       [\ket{\varphi_G}\bra{\varphi_G}].
\end{align*}
For each fixed $G$, the successful output is
$\gamma\psi_G^{\otimes t}$. By linearity, the full flagged channel
$\mathcal C$ therefore satisfies
\begin{align*}
    \mathcal C(\omega_b)
    =\gamma\ket{\mathrm{ok}}\bra{\mathrm{ok}}\otimes\rho_b^{(t)}
     +(1-\gamma)\ket{\mathrm{fail}}\bra{\mathrm{fail}}\otimes\tau,
\end{align*}
where $\tau$ is a fixed density matrix.
The failure blocks cancel when taking the difference of the hypotheses.
Trace-distance contractivity gives
\begin{align}
    \gamma D_t
    =\tfrac12\|\mathcal C(\omega_1)-\mathcal C(\omega_0)\|_1
    \leqslant\tfrac12\|\omega_1-\omega_0\|_1,
    \qquad D_t\leqslant\|\omega_1-\omega_0\|_1.
    \label{eq:phase-poisson-comparison}
\end{align}

\emph{Bounding the trace norm.}
Index the product-state basis by $z\in\{0,1\}^M$ and put
$w(z)=p^{|z|}(1-p)^{M-|z|}$. The amplitude at $z$ is
$\sqrt{w(z)}\prod_{e\in\operatorname{supp}(z)}s_e(G)$.
In a matrix entry indexed by $z,z'$, repeated signs cancel,
leaving the sign product on $d=z\oplus z'$.
Under the null hypothesis its expectation is zero unless $d=0$;
under the planted hypothesis it is $q(d)$. Thus
\begin{align*}
    \omega_0&=\operatorname{diag}(w), &
    (\omega_1-\omega_0)_{z,z'} &=
    \begin{cases}
       \sqrt{w(z)w(z')}\,q(z\oplus z'),&z\ne z',\\
       0,&z=z'.
    \end{cases}
\end{align*}
Applying Lemma~\ref{lem:phase-weighted-trace-norm} with
$W=\omega_0$ and $X=\omega_1-\omega_0$, and grouping entries
by their symmetric difference gives:
\begin{align*}
    \|\omega_1-\omega_0\|_1^2
    &\leqslant
      \sum_{z\ne z'}\sqrt{w(z)w(z')}\,q(z\oplus z')^2
    =\sum_{d\ne\varnothing}q(d)^2
       \sum_z\sqrt{w(z)w(z\oplus d)}
     =\sum_{d\ne\varnothing}\beta^{|d|}q(d)^2.
\end{align*}
For the last equality, the inner sum factors over coordinates.
A coordinate outside $d$ contributes $(1-p)+p=1$;
a coordinate in $d$ contributes
$2\sqrt{p(1-p)}=\sqrt{1-e^{-8t/M}}=\beta$.
Combining this with Eq.~\eqref{eq:phase-poisson-comparison}
proves the lemma.
\end{proof}

\paragraph{Step 3: Bounding clique overlaps.}

\begin{proof}[Completion of the proof of Proposition~\ref{thm:phase-hidden-copy}]
Let $C,C'$ be independent uniform $k$-vertex sets. Squaring $q(d)$ gives
\begin{align*}
    q(d)^2
        &=\Pr_{C,C'}\!\left[d\subseteq\binom{C\cap C'}2\right].
\end{align*}
Write
$R=|C\cap C'|$. The weighted sum in Lemma~\ref{lem:phase-overlap-bound}
can be evaluated by first fixing $C,C'$ and summing over
the nonempty subsets of their common clique edges:
\begin{align}
    D_t^2
        \leqslant
          \sum_{d\ne\emptyset}\beta^{|d|}q(d)^2
         &= \mathbb E_{C, C'} \left[\sum_{d \neq \emptyset} \beta^{|d|} 1\left[d \subseteq {C \cap C' \choose 2}\right] \right] = \mathbb E\!\left[
             (1+\beta)^{\binom R2}-1
           \right].
    \label{eq:phase-overlap-bound}
\end{align}
We used binomial theorem to evaluate the summation under the expectation.

Set $a=k^2/n<1$. We bound separately the probability of
an overlap of size $r$ and its contribution to the expectation.
Fix $C$ and a particular $r$-subset $A\subseteq C$.
The set $C'$ is chosen uniformly from the $\binom nk$
possible $k$-vertex sets. To contain $A$, it must include
all $r$ vertices of $A$ and choose its remaining $k-r$
vertices from the $n-r$ vertices outside $A$. There are
$\binom{n-r}{k-r}$ such choices, so
\begin{align*}
    \Pr(A\subseteq C')
        &=\frac{\binom{n-r}{k-r}}{\binom nk}
         =\prod_{j=0}^{r-1}\frac{k-j}{n-j}
         \leqslant\left(\frac{k}{n}\right)^r.
\end{align*}
The inequality holds because each factor
$(k-j)/(n-j)$ is at most $k/n$, as $k\leqslant n$.
A union bound over the $\binom kr$ such subsets therefore gives
\begin{align*}
    \Pr(R=r)
        &\leqslant \Pr(R\geqslant r)
         \leqslant \binom kr\left(\frac{k}{n}\right)^r
         \leqslant \frac{a^r}{r!}.
\end{align*}

We next bound the contribution from an overlap of size $r$.
Using $1-e^{-u}\leqslant u$, the copy budget in
Eq.~\eqref{eq:phase-hidden-budget}, and $2m=k(k-1)$, we obtain
\begin{align*}
    \beta
        &=\sqrt{1-e^{-8t/M}}
         \leqslant\sqrt{\frac{8t}{M}} \leqslant\frac{\ln(1/a)}{\sqrt{k(k-1)}}
         \leqslant\frac{\ln(1/a)}{k-1}.
\end{align*}
Thus $\beta(r-1)\leqslant\ln(1/a)$ for every
$2\leqslant r\leqslant k$. Applying
$1+\beta\leqslant e^\beta$ gives
\begin{align*}
    (1+\beta)^{\binom r2}
        &\leqslant
          \exp\!\left(\frac r2\,\beta(r-1)\right)
         \leqslant
          \exp\!\left(\frac r2\ln\frac1a\right)
         =a^{-r/2}.
\end{align*}

The integrand in Eq.~\eqref{eq:phase-overlap-bound} is zero
when $R=0$ or $R=1$. Combining the two bounds yields
\begin{align*}
    D_t^2
        &\leqslant
          \sum_{r=2}^k
          \Pr(R=r)\left[(1+\beta)^{\binom r2}-1\right] \leqslant
          \sum_{r=2}^k\frac{a^r}{r!}\,a^{-r/2}
         =\sum_{r=2}^k\frac{a^{r/2}}{r!} \leqslant
          \frac a2\sum_{r=2}^{\infty}\frac1{(r-2)!}
         =\frac e2\,a.
\end{align*}
The last inequality uses $a^{r/2}\leqslant a$ and
$r!\geqslant2(r-2)!$ for $r\geqslant2$.
Taking square roots proves Eq.~\eqref{eq:phase-hidden-distance}.

If $k=o(\sqrt n)$, this upper bound tends to zero.
Thus any sequence achieving a fixed positive distinguishing
advantage must eventually exceed the copy budget in
Eq.~\eqref{eq:phase-hidden-budget}, proving the claimed
$\Omega$ bound. Finally, for
$k=\lfloor n^{1/2-\varepsilon}\rfloor$,
\begin{align*}
    \frac Mm&=\Theta(n^{1+2\varepsilon}),
    &
    \ln\frac{n}{k^2}&=(2\varepsilon+o(1))\ln n,
\end{align*}
which gives
$t=\Omega(n^{1+2\varepsilon}\ln^2 n)$.
\end{proof}

\subsubsection{Upper bound}
\label{sec:phase-linear-detection}

\begin{proposition}[Statistical detection from $O(M)$ phase states]
\label{prop:phase-linear-upper}
Fix $\epsilon>0$. If $k\geqslant(2+\epsilon)\log_2n$,
then $O(M)=O(n^2)$ copies of the binary phase state of the same graph suffice for detection with error
$o(1)$, using separate measurements followed by classical
post-processing.
\end{proposition}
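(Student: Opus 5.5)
Our plan is to learn the graph up to complementation from $O(M)$ copies with separate measurements, and then run the exhaustive clique test on the learned graph. For each edge pair we estimate the relative sign $s_e(G)s_{e'}(G)$. This quantity is invariant under complementation, so it is all the encoding can reveal.

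\emph{Measurements.} Fix a reference edge $e_0$. For each other edge $e$, we measure a fresh copy of $\ket{\psi_G}$ in a basis containing the two vectors
\begin{align*}
\ket{\pm_{e}}=\tfrac{1}{\sqrt2}\left(\ket{e_0}\pm\ket{e}\right).
\end{align*}
The basis is completed arbitrarily. Outcome $\ket{+_e}$ has probability $\tfrac{2}{M}$ when $s_{e_0}s_e=1$ and probability $0$ otherwise; outcome $\ket{-_e}$ behaves the opposite way. So a single informative outcome determines the relative sign with certainty. However, each copy hits the two-dimensional subspace only with probability $2/M$. Using one copy per edge would then cost $\Theta(M^2)$ copies, which is too many.

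\emph{Reducing the cost to $O(M)$.} The fix is to pair edges using a fixed perfect matching of $\mathcal E$ (discarding one edge if $M$ is odd). For each copy we measure in the basis $\{(\ket{e}\pm\ket{e'})/\sqrt2\}$ over all matched pairs $\{e,e'\}$. Every outcome then reveals $s_es_{e'}$ for a uniformly random pair. A single matching only yields disjoint pairs and gives no connectivity, so we alternate among a constant number of fixed matchings whose union is a connected graph $\Gamma$ on $\mathcal E$. Concretely, we order the edges as $e_1,\dots,e_M$ and use two matchings: one containing the pairs $(e_1,e_2),(e_3,e_4),\dots$ and the other containing $(e_2,e_3),(e_4,e_5),\dots$. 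Their union is a path. With $t=cM\ln M$ copies, a coupon-collector argument shows that every path link is observed with probability $1-o(1)$. Multiplying the relative signs along the path recovers $x$ up to a global complement.

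This gives $O(M\log M)$ copies, not $O(M)$. This step is the main obstacle. To remove the logarithm, we drop the requirement of exact recovery and instead let $\Gamma$ be a random $d$-regular expander-like union of $d=O(1)$ random matchings. With $\Theta(M)$ samples, each link is observed with constant probability, so the observed links form a random subgraph of $\Gamma$. That subgraph has a giant component covering all but an $\eta$-fraction of the edge coordinates, for an $\eta$ that can be made arbitrarily small by taking the constants large. Within the giant component we know $x$ up to a global flip; outside it the coordinates are unknown.

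\emph{Decision rule.} We accept the planted hypothesis if, for either global orientation, some $k$-set $S$ has every edge of $\binom S2$ that lies in the giant component present. Under $P_1$, the planted clique passes for the correct orientation, since an incomplete test only removes constraints. Under $P_0$, we need a union bound over $S$, over the two orientations, and over the placement of the unknown coordinates. The unknown set is independent of $x$: the measurement outcomes' link locations do not depend on the signs. Therefore, conditioned on that set, each $k$-set survives with probability $2^{-(m-u_S)}$, where $u_S$ is the number of its unknown internal edges. If the unknown coordinates were an i.i.d.\ $\eta$-fraction, the expected number of false cliques would be
\begin{align*}
2\binom nk\,\mathbb E\,2^{-m+u_S}\leqslant 2\binom nk 2^{-m}\left(\tfrac{1+\eta}{1}\right)^m\approx \binom nk 2^{-m(1-\log_2(1+\eta))}.
\end{align*}
This is $o(1)$ once $\eta$ is small enough relative to $\epsilon$, by the same estimate as $\mu_k$ in Proposition~\ref{prop:clique-qsample-detection}. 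The delicate point is that the unknown set is not i.i.d.\ but only weakly dependent. We would handle this either by a direct union bound over small components of the random graph, or by falling back to the $O(M\log M)$ exact-recovery version if the $O(M)$ claim proves hard to verify.
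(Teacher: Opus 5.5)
Your architecture matches the paper's: pair-parity measurements along matchings, one large sign-consistent component, and acceptance when some $k$-set has all recovered internal signs equal (your ``either orientation'' rule). Your i.i.d.\ heuristic even gives the right target bound $2\binom nk\bigl(\frac{1+\eta}2\bigr)^m$. However, the two steps that carry the $O(M)$ claim are left open, and falling back to $O(M\log M)$ copies does not prove the proposition.

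\textbf{The giant component is only asserted.} You claim it for percolation on a random $d$-regular union of matchings but give no argument. The paper avoids percolation theory by drawing a fresh uniformly random matching for every copy. The $L=\lceil c_\eta M\rceil$ retained pairs are then i.i.d.\ uniform over all $\binom M2$ pairs. Each pair crosses any cut with between $\eta M$ and $(1-\eta)M$ coordinates with probability at least $2\eta(1-\eta)$. A union bound over at most $2^M$ cuts gives failure probability $2^Me^{-2\eta(1-\eta)L}=o(1)$ once $2\eta(1-\eta)c_\eta>\ln2$. For $\eta<1/3$, if no cut in that range is uncrossed, there is a component of size $>(1-\eta)M$, and it is unique. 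Otherwise you could accumulate components into an uncrossed union of size in $[\eta M,(1-\eta)M]$.

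\textbf{The dependence of the unknown set is left unresolved.} This step needs no delicate analysis once you use symmetry. The pair labels are generated independently of $G$ by a process invariant under every permutation of the edge coordinates. Hence, conditional on $|S|=s$, the large component $S$ is a uniformly random $s$-subset independent of $G$. Sampling without replacement gives $\Pr[B\cap S=\varnothing]\leqslant(1-s/M)^{|B|}$ for every $B\subseteq F=\binom C2$. Expanding $2^{-|F\cap S|}=2^{-m}\prod_{e\in F}(1+\mathbf 1[e\notin S])$ then yields $\mathbb E\,2^{-|F\cap S|}\leqslant(1-\frac{s}{2M})^m\leqslant(\frac{1+\eta}2)^m$, which is exactly your i.i.d.\ bound.

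This exchangeability would also hold for your scheme if the $d$ matchings are uniformly random and independent of $G$. It would fail for an explicit expander, and in either case you would still need to prove the giant-component step. Without these two arguments, you have established only the $O(M\log M)$ exact-recovery detector.
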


The proof is given in Appendix~\ref{app:phase-linear-upper}.
The construction uses the pair-parity measurement from Hidden
Matching~\cite{BJK08,KR06} followed by a potentially expensive classical search. Together with Proposition~\ref{thm:phase-hidden-copy}, this
establishes copy complexity $\Theta(M)=\Theta(n^2)$ for
$k=\lceil c\log_2n\rceil$ with fixed $c>2$.
For $k=\lfloor n^{1/2-\varepsilon}\rfloor$ with fixed
$0<\varepsilon<1/2$, a gap remains between the lower bound
$\Omega(n^{1+2\varepsilon}\ln^2n)$ and the upper bound
$O(n^2)$.

\subsection{Optimal measurements in the large-copy limit}
\label{sec:phase-optimal-limit}

We next characterize the information retained by many phase
copies. In the limit, the sampled graph can be identified up
to complementation. We prove this using the parity compression
from Section~\ref{app:phase-parity}: as the weights $w_t$
become uniform on a parity class, the compressed states approach
character states forming an orthonormal basis indexed by
complement pairs. Detection in this limit therefore reduces
to a classical decision from the observed pair. We derive
the optimal rule, evaluate its performance, and compare it
with the pretty good measurement.

\subsubsection{The limiting states and their character basis}

We first study the weights $w_t$ from Eq.~\eqref{Eq:weights},
keeping $n,k$ fixed and letting $t\to\infty$.
Recall that $w_t$ is the distribution of the walk that starts
at $0\in\mathbb F_2^M$ and flips one uniformly chosen edge
coordinate at each step. Each step adds an independent
coordinate vector modulo two, so
$w_t=w_1^{*t}$, where $w_1(z)=M^{-1}\mathbf1[|z|=1]$
and $*$ denotes convolution on $\mathbb F_2^M$.
With $\widehat w(y)=\sum_z w(z)(-1)^{\langle y,z\rangle}$,
the convolution theorem gives
\begin{align}
    \widehat w_t(y)
        &=\widehat w_1(y)^t
         =\left(\frac1M\sum_{e\in\mathcal E}(-1)^{y_e}\right)^t
         =\left(1-\frac{2|y|}{M}\right)^t.
    \label{eq:walk-fourier}
\end{align}
See Levin and Peres~\cite{LP17} for the general spectral
framework for this analysis.

For $0<|y|<M$, the multiplier in Eq.~\eqref{eq:walk-fourier}
has absolute value less than one, so these modes decay.
The two remaining modes are
$\widehat w_t(0)=1$ and $\widehat w_t(\mathbf1)=(-1)^t$.
Fourier inversion therefore gives
\begin{align*}
    w_t(z)
        &=\frac{1+(-1)^{t+|z|}}{2^M}+o(1).
\end{align*}
For $z\in H_t$, the numerator is $2$, so the weights approach
$2^{1-M}=|H|^{-1}$. Outside $H_t$, the weights are zero at
every step. Thus the walk approaches the uniform distribution
on the parity class determined by $t$.

To compare the even and odd subsequences on a common space,
identify $H_t$ with the even-parity subgroup $H$.
Write $z=h\oplus a_t$, where $h\in H$, $a_t=0$ for even $t$,
and $a_t=\mathbf1_{\{e_*\}}$ for odd $t$, with $e_*$ a fixed
edge. For odd $t$, this simply flips one fixed coordinate.
Under the relabelling $\ket{h\oplus a_t}\mapsto\ket h$,
the graph-dependent phase in
Eq.~\eqref{eq:phase-compressed-character} factors as
\begin{align*}
    (-1)^{\langle x,h\oplus a_t\rangle}
        &=(-1)^{\langle x,a_t\rangle}
          (-1)^{\langle x,h\rangle}.
\end{align*}
The first factor is independent of $h$ and therefore changes
only the global phase. Since
$w_t(h\oplus a_t)\to|H|^{-1}$, the compressed pure-state
projector for the graph with edge vector $x$ converges to
the projector onto
\begin{align*}
    \ket{\chi_x}
        &:=\frac1{\sqrt{|H|}}
           \sum_{h\in H}(-1)^{\langle x,h\rangle}\ket h,
    & x&\in\mathbb F_2^M.
\end{align*}
We state convergence of projectors because global phases
disappear when forming the density matrices. We next determine which graphs these limiting states
distinguish. Their inner products are
\begin{align*}
    \langle\chi_x|\chi_y\rangle
        &=\frac1{|H|}
          \sum_{h\in H}(-1)^{\langle x\oplus y,h\rangle}
         =\mathbf1[x\oplus y\in\{0,\mathbf1\}].
\end{align*}
If $x\oplus y$ is $0$ or $\mathbf1$, every summand equals one,
since $h$ has even weight. Otherwise, $x\oplus y$ has two
unequal coordinates. Flipping those two coordinates of $h$
pairs each summand with its negative, so the sum vanishes.
Thus a graph and its complement give the same character state,
while distinct complement pairs give orthogonal states. To label these states without repetition, choose the member
of each complement pair whose last edge bit is zero:
\begin{align*}
    \mathcal Y&:=\{y\in\mathbb F_2^M:y_M=0\}.
\end{align*}
Complementation flips every bit, so each pair contains exactly
one element of $\mathcal Y$. The corresponding states
$\{\ket{\chi_y}:y\in\mathcal Y\}$ form an orthonormal basis
of $\mathbb C[H]$. The representative $y$ labels the unordered
pair; it need not be the graph that was sampled. To describe the planted probabilities in this basis, we count
the clique locations compatible with either member of a pair.
Define
\begin{align*}
    Z(y)&:=\sum_{C\in\binom{[n]}k}
       \mathbf1\left[
          \text{$y$ is constant on $\binom C2$}
       \right].
\end{align*}
If all internal edge bits are one, $C$ is a clique in $y$;
if they are all zero, it is a clique in the complement.
Thus $Z(y)$ counts the $k$-cliques in the two graphs, or
equivalently the $k$-cliques and $k$-vertex independent sets
in $y$.

For a uniform graph, the $m=\binom k2$ internal edge bits of
a fixed $k$-vertex set are independent and uniform. Exactly
two of their $2^m$ assignments are constant, so the uniform
mean of $Z$ is
\begin{align*}
    \mu&:=\binom nk\,2^{1-m}=2\mu_k.
\end{align*}
Because $Z(y)=Z(y\oplus\mathbf1)$, its uniform mean over
$\mathcal Y$ is also $\mu$. We normalize this count by setting
\begin{align*}
    L(y)&:=\frac{Z(y)}\mu,
    &
    \mathbb E_{y\sim\operatorname{Unif}(\mathcal Y)}L(y)&=1.
\end{align*}
The next lemma identifies $L$ as the likelihood ratio between
the planted and null laws on complement pairs.

\begin{lemma}[The limiting experiment]
\label{lem:equidist}
\label{lem:diag-A}
Fix $n,k$. After identifying each parity class with $H$ as
above, the compressed ensembles converge in trace norm as
$t\to\infty$ to
\begin{align}
    \rho_b^{(\infty)}
        &=\sum_{y\in\mathcal Y}\nu_b(y)
          \ket{\chi_y}\bra{\chi_y},
    & b&\in\{0,1\},
    \label{eq:rho-diag}
\end{align}
where $\nu_b(y)$ is the probability of the complement pair
$\{y,y\oplus\mathbf1\}$ under hypothesis $b$:
\begin{align}
    \nu_b(y)&=P_b(y)+P_b(y\oplus\mathbf1).
    \label{eq:phase-complement-experiment}
\end{align}
The null law is uniform on $\mathcal Y$, and the planted law
has likelihood ratio $L$:
\begin{align*}
    \nu_0(y)&=\frac1{|H|},
    &
    \nu_1(y)&=\frac{L(y)}{|H|}.
\end{align*}
In particular, $\rho_0^{(\infty)}=I/|H|$.
The difference
$\Delta_\infty:=\rho_1^{(\infty)}-\rho_0^{(\infty)}$
has eigenvectors $\ket{\chi_y}$ with eigenvalues
\begin{align*}
    \lambda_y&=\frac{L(y)-1}{|H|}.
\end{align*}
\end{lemma}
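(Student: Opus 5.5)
Fix $n,k$ and work in the parity coordinates of Eq.~\eqref{eq:phase-compressed-character}, after the relabelling $\ket{h\oplus a_t}\mapsto\ket h$ that places both parities on $\mathbb C[H]$. For each graph $x$ define the unnormalized limit-comparison vector
\begin{align*}
    \ket{v_x^{(t)}}&:=\sum_{h\in H}\sqrt{w_t(h\oplus a_t)}\,(-1)^{\langle x,h\rangle}\ket h ,
\end{align*}
which equals the compressed $t$-copy state up to the global phase $(-1)^t(-1)^{\langle x,a_t\rangle}$. The plan has three steps: first show that $\ket{v_x^{(t)}}\bra{v_x^{(t)}}\to\ket{\chi_x}\bra{\chi_x}$ in trace norm, uniformly in $x$; then average over $P_b$; finally read off the diagonal form.

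\emph{Step 1 (convergence).} We have
\begin{align*}
    \bigl\|\ket{v_x^{(t)}}-\ket{\chi_x}\bigr\|_2^2
       &=\sum_{h\in H}\Bigl(\sqrt{w_t(h\oplus a_t)}-|H|^{-1/2}\Bigr)^2 .
\end{align*}
The signs cancel, so this bound does not depend on $x$. The dimension $|H|$ is finite for fixed $n$, and Fourier inversion of Eq.~\eqref{eq:walk-fourier} gives pointwise convergence $w_t(h\oplus a_t)\to|H|^{-1}$. Hence the sum tends to zero. For unit vectors $u,v$ we have $\|uu^\dagger-vv^\dagger\|_1\leqslant2\|u-v\|_2$, so the projectors converge uniformly in $x$. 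Averaging over $x\sim P_b$ and applying the triangle inequality then gives
\begin{align*}
    \rho_b^{(t)}\to\sum_x P_b(x)\ket{\chi_x}\bra{\chi_x}
\end{align*}
in trace norm, for both hypotheses and along both parity subsequences. I expect the only subtlety in the whole proof to be here: one must check that the identification of $H_t$ with $H$ is graph-independent and that the leftover phase really is global. Both facts were already established in the text above.

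\emph{Step 2 (grouping).} The character computation shows that $\ket{\chi_x}\bra{\chi_x}$ depends only on the complement pair $\{x,x\oplus\mathbf1\}$. Grouping each pair onto its representative $y\in\mathcal Y$ yields Eq.~\eqref{eq:rho-diag} with $\nu_b(y)=P_b(y)+P_b(y\oplus\mathbf1)$, which is Eq.~\eqref{eq:phase-complement-experiment}. The states $\ket{\chi_y}$ for $y\in\mathcal Y$ form an orthonormal basis, so Eq.~\eqref{eq:rho-diag} is a spectral decomposition.

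\emph{Step 3 (the weights).} Under the null, $P_0(x)=2^{-M}$, so $\nu_0(y)=2^{1-M}=1/|H|$, and therefore $\rho_0^{(\infty)}=I/|H|$. Under the planted law,
\begin{align*}
    P_1(x)&=\binom nk^{-1}\sum_C 2^{-(M-m)}\mathbf1[x\text{ is all ones on }\tbinom C2],
\end{align*}
because the $M-m$ edges outside $\binom C2$ are uniform. Adding the analogous expression for $y\oplus\mathbf1$ turns the indicator "all ones on $\binom C2$" into "constant on $\binom C2$". This gives
\begin{align*}
    \nu_1(y)&=\binom nk^{-1}2^{m-M}Z(y).
\end{align*}
We have $\mu=\binom nk 2^{1-m}$, so $\binom nk^{-1}2^{m-M}=2^{1-M}/\mu$, and hence $\nu_1(y)=L(y)/|H|$. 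Since both limits are diagonal in the same basis, $\Delta_\infty$ has eigenvectors $\ket{\chi_y}$ with eigenvalues $\lambda_y=(L(y)-1)/|H|$.
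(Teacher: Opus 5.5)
Your proposal is correct and follows essentially the same route as the paper: each compressed projector converges to $\ket{\chi_x}\bra{\chi_x}$, you average over $P_b$, group complement pairs onto $\mathcal Y$, and compute $\nu_1(y)=2^{-(M-m)}Z(y)/\binom nk=L(y)/|H|$ using $\mu|H|=\binom nk 2^{M-m}$. The only difference is that you make the convergence step explicit, via an $x$-independent $\ell_2$ bound and $\|uu^\dagger-vv^\dagger\|_1\leqslant 2\|u-v\|_2$, whereas the paper invokes the preceding pointwise convergence of the weights and the finiteness of the set of graphs at fixed $n$.
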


\begin{proof}
For each graph $x$, the compressed pure-state projector
converges to $\ket{\chi_x}\bra{\chi_x}$. There are finitely
many graphs at fixed $n$, so we can average these limits:
\begin{align*}
    \rho_b^{(\infty)}
        &=\sum_{x\in\mathbb F_2^M}
          P_b(x)\ket{\chi_x}\bra{\chi_x}.
\end{align*}
The states indexed by $y$ and $y\oplus\mathbf1$ are identical.
Combining their contributions gives weight
$P_b(y)+P_b(y\oplus\mathbf1)$ to $\ket{\chi_y}$, proving
Eqs.~\eqref{eq:rho-diag} and~\eqref{eq:phase-complement-experiment}. Under the null hypothesis, every graph has probability
$2^{-M}$. Each complement pair therefore has probability
$2^{1-M}=|H|^{-1}$. Under the planted hypothesis, first fix a clique location $C$.
Its $m$ internal edges are present, while the remaining $M-m$
edge bits are independent and uniform. A graph has conditional
probability $2^{-(M-m)}$ if it contains the clique on $C$,
and zero otherwise. Averaging over the $\binom nk$ equally
likely locations assigns a contribution
$2^{-(M-m)}/\binom nk$ to each compatible location.

For the complement pair represented by $y$, the total number
of compatible locations is $Z(y)$. Since
$\mu|H|=\binom nk\,2^{M-m}$, we obtain
\begin{align*}
    \nu_1(y)
        &=\frac{2^{-(M-m)}}{\binom nk}\,Z(y)
         =\frac{Z(y)}{\mu|H|}
         =\frac{L(y)}{|H|}.
\end{align*}
Finally, the character states form an orthonormal basis, so
the uniform null weights give $\rho_0^{(\infty)}=I/|H|$.
Subtracting the null weight from the planted weight gives
$\lambda_y=(L(y)-1)/|H|$.
\end{proof}

The lemma reduces detection in the large-copy limit to a
classical decision problem. Measuring in the character basis
returns a label $y$ identifying the complement pair
$\{y,y\oplus\mathbf1\}$, with probability $\nu_b(y)$ under
hypothesis $b$. Since both limiting states are diagonal in
this basis, every measurement on them can be reproduced by
classical processing of $y$. It therefore remains to determine
how to decide between the hypotheses from this outcome.

\subsubsection{Helstrom and pretty good measurements}
\label{sec:pgm}
\label{app:phase-pgm}
\label{sec:trace}

By Lemma~\ref{lem:diag-A}, measuring the character basis
loses no information for detection in the limit. The outcome
$y\in\mathcal Y$ identifies a complement pair and has law
$\nu_b$ under hypothesis $b$. We now compare two ways of
deciding from this outcome, assuming equal priors.
Recall that $Z(y)$ counts the $k$-cliques in both graphs
of the pair, $\mu$ is its null expectation, and
$L(y)=Z(y)/\mu=\nu_1(y)/\nu_0(y)$. Let $f(y)\in[0,1]$ be the probability of accepting planting
after observing $y$. Its success probability is
\begin{align*}
    p_{\rm succ}(f)
        &=\frac12+\frac12\sum_{y\in\mathcal Y}
          \nu_0(y)\bigl(L(y)-1\bigr)f(y).
\end{align*}
Each term is maximized by accepting when $L(y)>1$ and
rejecting when $L(y)<1$. At equality, either decision is
optimal; we choose the null. The resulting Helstrom rule
and its accepting projector are
\begin{align*}
    p_{\rm H}(y)
        &=\mathbf1[Z(y)>\mu], &
    \Pi_1^{(\infty)}
        &=\sum_{\substack{y\in\mathcal Y\\Z(y)>\mu}}
          \ket{\chi_y}\bra{\chi_y}.
\end{align*}
When $\mu<1$, the integer-valued count $Z(y)$ exceeds
$\mu$ exactly when it is nonzero. The optimal test then
simply asks whether either graph in the pair contains
a $k$-clique.

\begin{proposition}[Limiting trace distance]
\label{prop:dtr-limit}
With $n,k$ fixed in the limit $t\to\infty$,
\begin{align*}
    D_\infty:=\lim_{t\to\infty}D_t
        &=\tfrac12\mathbb E_{\nu_0}|L(y)-1| \geqslant\Pr_{\nu_0}[Z(y)=0]
         \geqslant1-\mu.
\end{align*}
If $\mu<1$, then $D_\infty=\Pr_{\nu_0}[Z(y)=0]$.
Consequently, $D_\infty=1-o(1)$ whenever $\mu=o(1)$,
including every sequence
$k\geqslant(2+\epsilon)\log_2n$ with fixed $\epsilon>0$.
\end{proposition}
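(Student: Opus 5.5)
The plan has four steps: pass to the limit $t\to\infty$, compute the trace distance of the limiting ensembles, obtain the two lower bounds, and finally identify the $\mu<1$ case.

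\emph{Step 1: the limit.} By Lemma~\ref{lem:equidist}, after identifying each parity class $H_t$ with $H$, the compressed ensembles $\rho_b^{(t)}$ converge in trace norm to $\rho_b^{(\infty)}$. The identification is a graph-independent relabeling, so it preserves trace distance. The map $(\rho,\sigma)\mapsto\tfrac12\|\rho-\sigma\|_1$ is continuous: the triangle inequality gives $|D_t-\tfrac12\|\rho_1^{(\infty)}-\rho_0^{(\infty)}\|_1|\leqslant \tfrac12\sum_b\|\rho_b^{(t)}-\rho_b^{(\infty)}\|_1\to0$. Hence the limit $D_\infty$ exists and equals $\tfrac12\|\Delta_\infty\|_1$. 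The even and odd subsequences converge to the same limit under this identification, so no parity issue arises.

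\emph{Step 2: computing $\|\Delta_\infty\|_1$.} Lemma~\ref{lem:diag-A} shows that $\Delta_\infty$ is diagonal in the orthonormal basis $\{\ket{\chi_y}\}$ with eigenvalues $(L(y)-1)/|H|$. Since $\nu_0(y)=1/|H|$,
\begin{align*}
D_\infty=\tfrac12\sum_{y\in\mathcal Y}\frac{|L(y)-1|}{|H|}=\tfrac12\,\mathbb E_{\nu_0}|L(y)-1|.
\end{align*}

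\emph{Step 3: the lower bounds.} Because $\mathbb E_{\nu_0}L=1$, the positive and negative parts of $L-1$ have equal $\nu_0$-mass. Therefore $\tfrac12\mathbb E_{\nu_0}|L-1|=\mathbb E_{\nu_0}(1-L)_+$. The integrand $(1-L)_+$ is nonnegative everywhere and equals $1$ on $\{Z=0\}$, so $\mathbb E_{\nu_0}(1-L)_+\geqslant\Pr_{\nu_0}[Z=0]$. Markov's inequality for the nonnegative integer $Z$ with mean $\mu$ gives $\Pr_{\nu_0}[Z\geqslant1]\leqslant\mu$, hence $\Pr_{\nu_0}[Z=0]\geqslant 1-\mu$.

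\emph{Step 4: the case $\mu<1$.} If $\mu<1$ and $Z(y)\geqslant1$, then $L(y)=Z(y)/\mu>1$, so $(1-L)_+=0$ there. On $\{Z=0\}$ the integrand equals $1$, so $\mathbb E_{\nu_0}(1-L)_+=\Pr_{\nu_0}[Z=0]$ exactly.

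For the asymptotic claim, $D_\infty\leqslant1$ and $D_\infty\geqslant1-\mu$ give $D_\infty=1-o(1)$ whenever $\mu\to0$. Since $\mu=2\mu_k$, the estimate $\mu_k=o(1)$ for $k\geqslant(2+\epsilon)\log_2n$ from the proof of Proposition~\ref{prop:clique-qsample-detection} finishes the argument.

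The main obstacle is Step 1: checking that Lemma~\ref{lem:equidist}'s trace-norm convergence really transfers to the limit of $D_t$ under the identification. That lemma's convergence is only asserted at the level of pure-state projectors for each of finitely many graphs, so I would note that finiteness at fixed $n$ gives uniform convergence of the averaged ensembles. The remaining steps are routine.
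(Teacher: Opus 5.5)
Your proposal is correct and follows the paper's argument. Diagonalizing in the character basis gives $D_\infty=\tfrac12\mathbb E_{\nu_0}|L-1|$. The bound $\Pr_{\nu_0}[Z=0]$ comes from $L$ vanishing on $\{Z=0\}$; the paper phrases this as $\nu_1(Z=0)=0$, and for $\mu<1$ as the Helstrom rule accepting exactly when $Z>0$, which matches your $(1-L)_+$ computation. Markov with $\mu=2\mu_k=o(1)$ finishes, and your continuity step for $D_t\to D_\infty$ just spells out the paper's ``trace-norm convergence gives.''
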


\begin{proof}
The limiting states are diagonal with weights $\nu_0$
and $\nu_1=L\nu_0$, so trace-norm convergence gives
\begin{align*}
    D_\infty
        &=\operatorname{TV}(\nu_1,\nu_0)
         =\tfrac12\mathbb E_{\nu_0}|L-1|.
\end{align*}
Under planting, the observed pair contains a graph with
the planted clique, so $\nu_1(Z=0)=0$. Thus
$D_\infty\geqslant\Pr_{\nu_0}[Z=0]$. Since $Z$ is a
nonnegative integer with null mean $\mu$,
\begin{align*}
    \Pr_{\nu_0}[Z=0]
        &=1-\Pr_{\nu_0}[Z\geqslant1]
         \geqslant1-\mu.
\end{align*}
When $\mu<1$, accepting exactly on $Z>0$ is the Helstrom
rule, so its gap $\Pr_{\nu_0}[Z=0]$ equals $D_\infty$. Finally, if $k\geqslant(2+\epsilon)\log_2n$, then
\begin{align*}
    \log_2\mu
        &\leqslant1+k\log_2n-\frac{k(k-1)}2 \leqslant
          1-\frac{k}{2}(\epsilon\log_2n-1)
          \longrightarrow-\infty.
\end{align*}
Hence $\mu=o(1)$ in this regime.
\end{proof}

For comparison, the pretty good measurement (PGM) for the
limiting states is defined by
\begin{align*}
    \bar\rho^{(\infty)}
        &:=\frac{\rho_0^{(\infty)}+\rho_1^{(\infty)}}2,\\
    E_b^{(\infty)}
        &:=\tfrac12(\bar\rho^{(\infty)})^{-1/2}
           \rho_b^{(\infty)}(\bar\rho^{(\infty)})^{-1/2},
    \qquad b\in\{0,1\}.
\end{align*}
All these operators are diagonal in the character basis.
Substituting their eigenvalues gives
\begin{align*}
    E_1^{(\infty)}
        &=\sum_{y\in\mathcal Y}
          \frac{L(y)}{1+L(y)}
          \ket{\chi_y}\bra{\chi_y}.
\end{align*}
Thus the PGM also measures the character basis, then accepts
with probability
\begin{align*}
    p_{\rm PGM}(y)
        &=\frac{L(y)}{1+L(y)}
         =\frac{Z(y)}{Z(y)+\mu}.
\end{align*}
This is the posterior probability of planting given $y$.
When $\mu<1$, both rules reject $Z=0$. For $Z\geqslant1$,
the Helstrom rule accepts with certainty, while
\begin{align*}
    1-p_{\rm PGM}(y)
        &=\frac{\mu}{Z(y)+\mu}
         \leqslant\frac{\mu}{1+\mu}.
\end{align*}
Their acceptance probabilities therefore become uniformly
close as $\mu\to0$.

These formulas describe the limit at fixed $n,k$.
Appendix~\ref{app:phase-convergence} gives the finite-copy
estimate
\begin{align*}
    t\geqslant\frac{1+\delta}{4}M\ln M
    \quad\Longrightarrow\quad
    |D_t-D_\infty|=O(M^{-\delta/2}),
\end{align*}
for every fixed $\delta>0$, uniformly in $k$.
This quantifies the accuracy of the limiting description.
The detector in Proposition~\ref{prop:phase-linear-upper}
achieves detection using $O(M)$ copies without requiring
convergence to the full limiting experiment.

\subsection{Computational scope}
\label{sec:roadblock}
At $k=\lfloor n^{1/2-\varepsilon}\rfloor$, the number of copies $t$ of the phase state encoding which is both necessary and sufficient lies between $\Omega(n^{1+2\varepsilon}\ln^2n)$ and $O(n^2)$. Closing this gap is a statistical problem and determining whether the signal can be extracted in polynomial time remains a separate algorithmic question. With just a few more copies ($t = O(M\log M)$ copies), there is already a single copy measurement which, after $O(M\log M)$ repetitions, efficiently
recovers $[G]=\{G,\bar G\}$
(Proposition~\ref{prop:separate-reconstruction}). Conversely,
the classical representation of $[G]$ permits efficient preparation of
any polynomial number of phase copies. These input resources
therefore simulate one another with polynomial overhead when
subsequent quantum computation is allowed
(Proposition~\ref{prop:phase-access-equivalence}). 
The limiting Helstrom and PGM rules identify optimal statistical
decisions. Their dependence on monochromatic cliques does not
prove computational hardness: an efficient distinguisher need not
evaluate $Z(y)$ or implement either rule exactly. 

In this section, we analyzed the phase state encoding of the graph as a path towards quantum algorithms for planted clique. Although we show that $t = O(n^2)$ copies are sufficient, the circular nature of the Helstrom and pretty good measurements did not point to a path towards algorithmic efficiency. The encodings themselves are simple in that they don't depend on the distributions in any way, leaving room for more sophisticated constructions that exploit the structure of the underlying distributions. This motivates the next section where we consider encodings which arise naturally from the symmetries of the distributions.

\newpage
\section{Symmetry-Adapted Measurements}
\label{sec:symmetries}

We return to the mixed graph states $\sigma_b$ defined in
Section~\ref{sec:preliminaries} and ask whether efficiently implementable
symmetry transforms can expose the planted clique. We first describe the symmetries of the probability laws 
and separate edge-count information from edge arrangement. We then
introduce the Schur transform, an efficient change of basis adapted
to permutations of the edge qubits and show that its isotypic-label distribution (weak Schur sampling) depends only on edge count and has vanishing distinguishing power when $k=o(\sqrt n)$. We study two responses to this limitation. First, we compare
the other Schur sampling readouts and show that after measuring the label, retaining it together with multiplicity (the ${SU}(2)$ irrep register) is statistically equivalent to observing the edge count. The Specht register (the $S_n$-irrep register) alone permits near-perfect
discrimination when $k\geqslant(2+\varepsilon)\log_2n$ for fixed
$\varepsilon>0$, even without a label to select its measurement.
We use general support and dimension estimate to establish this
claim. We calculate the reduced states explicitly and identify
the operator components preserved when multiplicity is discarded
but the measured label is retained. Second, we change
the permutation action: the full group that permutes graphs within clique-count levels has an isotypic measurement that preserves almost all of
the distinguishing distance in this regime. These results identify
two concrete measurement targets. Their statistical power can be established; whether that power can be accessed efficiently remains open.

\subsection{Planted-clique distribution symmetries}
\label{sec:distribution-symmetries}

Let $\Omega=\{0,1\}^M$ be the set of graphs. To discuss symmetry of the
probability laws $P_0$ and $P_1$, regard them as functions in:
\begin{align*}
    \mathcal H&=\mathbb C^\Omega, &
    \langle f,g\rangle=\sum_{x\in\Omega}\overline{f(x)}g(x).
\end{align*}
The indicator functions $\delta_x(y)=\mathbf1[y=x]$ form an
orthonormal basis, so $f=\sum_x f(x)\delta_x$ and
$P_0, P_1$ denote vectors of probability values.  An edge permutation $\pi\in S_M$ acts by:
\begin{align*}
    (\pi x)_e&=x_{\pi^{-1}(e)},\qquad
    R(\pi)\delta_x=\delta_{\pi x},\qquad
    (R(\pi)f)(x)=f(\pi^{-1}x).
\end{align*}
Arbitrary edge permutations can change which edge positions share
a vertex. Vertex relabeling preserves this incidence relation.
For $n\geqslant3$, it defines an embedding
$\iota:S_n\hookrightarrow S_M$ and the subgroup action
\begin{align*}
    \iota(\gamma)\{u,v\}&=\{\gamma(u),\gamma(v)\},\qquad
    V(\gamma)=R(\iota(\gamma)),\quad \gamma\in S_n.
\end{align*}
Both $P_0$ and $P_1$ are invariant under vertex relabeling: the null is
uniform, and relabeling a uniformly chosen planted clique gives
another uniformly chosen planted clique. The null is also invariant
under every edge permutation, whereas the planted law generally is
not. Thus:
\begin{align*}
    V(\gamma)P_b&=P_b\quad(b=0,1),\qquad
    R(\pi)P_0=P_0.
\end{align*}

The functions fixed by every edge permutation depend only on edge
count $|x|$, because any two graphs with the same number of edges
lie in the same $S_M$ orbit. The orthogonal projector onto these
functions is
\begin{align*}
    \Pi_{\mathrm{edge}}&:=\frac1{M!}\sum_{\pi\in S_M}R(\pi),
    \qquad
    (\Pi_{\mathrm{edge}}P_b)(x)
       =\frac{\Pr_{P_b}(|X|=|x|)}{\binom M{|x|}}.
\end{align*}
Averaging therefore preserves the edge-count law and makes the
graph uniform conditional on that count. Since
$\Pi_{\mathrm{edge}}P_0=P_0$, we obtain the orthogonal decomposition
\begin{align}
    P_1-P_0
       &=\bigl(\Pi_{\mathrm{edge}}P_1-P_0\bigr)
         +(I-\Pi_{\mathrm{edge}})P_1.
    \label{eq:symmetry-signal-decomposition}
\end{align}
The first term records the difference in edge counts. The second
is vertex invariant and orthogonal to every function of edge count:
it records how the edges are arranged within each count class.
This is the structural part of the planted signal. 

The coherent qsamples from Subsection~\ref{subsec:clique_qsamples}
inherit the permutation symmetries of their probability laws:
a permutation that fixes $P_b$ also fixes $\ket{q_b}$, since
taking pointwise square roots commutes with permutations.
For mixed graph states, the same distributional symmetries give
conjugation identities:
\begin{align*}
    R(\pi)\sigma_0R(\pi)^\dagger&=\sigma_0,\qquad
    V(\gamma)\sigma_bV(\gamma)^\dagger=\sigma_b.
\end{align*}
These identities say that the density operators commute with the
corresponding group actions. They do not imply that the states are
supported on the subspace of invariant vectors. Under edge
permutations, for example, the null qsample lies entirely in the
trivial representation sector, whereas the maximally mixed null
$I/N$ has support in every sector. The two encodings therefore
give different representation-label statistics despite encoding
the same probability law. We now determine how much information
these labels reveal about planting in the mixed-state setting. 

There are additional symmetries of the two distributions under permutation of entire outcomes rather than their edge coordinates. 
Let $S_\Omega$ denote the group of all
permutations of $\Omega=\{0,1\}^M$. The common symmetry group
of the two probability laws is the  subgroup of $S_\Omega$ that stabilizes $P_1$. We study it in Sec.~\ref{sec:level-set-symmetries}.  

\subsection{Schur Sampling}

The Schur transform is a change of basis adapted to permutations
of the $M$ edge qubits. It simultaneously decomposes their action
into irreducible representations~\cite{BCH06}:
\begin{align*}
    \mathcal H
       &\cong\bigoplus_{i=0}^{\lfloor M/2\rfloor}
             S^{\lambda_i}\otimes\mathcal M_i,
       \qquad \lambda_i=(M-i,i),
       \notag\\
    R(\pi)
       &\cong\bigoplus_i
             \rho^{\lambda_i}(\pi)\otimes I_{r_i},
       \qquad r_i=M-2i+1.
\end{align*}
Here $S^{\lambda_i}$ is an irreducible representation of $S_M$,
called a \emph{Specht module}, and
$\mathcal M_i\cong\mathbb C^{r_i}$ indexes its $r_i$ equivalent
copies. The whole block
$S^{\lambda_i}\otimes\mathcal M_i$ is the \emph{isotypic component}
of type $\lambda_i$.

Within each block, edge permutations act on $S^{\lambda_i}$
and leave $\mathcal M_i$ unchanged. Collective rotations
$U^{\otimes M}$, with $U\in\mathrm{SU}(2)$, act on the other
factor: $\mathcal M_i$ carries the spin-$M/2-i$ representation,
while $S^{\lambda_i}$ is unchanged.

A Schur basis vector can therefore be indexed by
\begin{align*}
    \ket{i,\mu,a}_{\mathrm{Sch}},
    \qquad
    1\leqslant\mu\leqslant r_i,\qquad
    1\leqslant a\leqslant d_i,
    \qquad d_i=\dim S^{\lambda_i}.
\end{align*}
The index $i$ (the isotypic label) selects the block; $a$ (the $S_n$-irrep label) and $\mu$ (multiplicity)  select basis vectors in its two factors. Their ranges depend on $i$.
Thus the tensor-product structure belongs to each block,
while the full Hilbert space is the direct sum of these blocks. For circuit implementation, these coordinates can be encoded
in registers $\mathsf L$, $\mathsf M$, and $\mathsf S$.
Padding accommodates the different block dimensions, so only
the basis positions corresponding to valid triples are occupied.
We fix such an encoding and denote the ideal map into this
encoded subspace by $U_{\mathrm{Sch}}$, with initialized ancillas
implicit. For qubits, it admits an implementation of size
$\operatorname{poly}(M,\log(1/\epsilon))$ to accuracy
$\epsilon$~\cite{BCH06,BCH07}. 

\emph{Weak Schur sampling} measures only the block label $i$.
Equivalently, it projects onto the isotypic components without
resolving the state within them. For $i=0$, the $S_M$
representation is trivial, but its multiplicity is $M+1$.
Its isotypic component is therefore the entire
$(M+1)$-dimensional symmetric subspace
$\operatorname{ran}\Pi_{\mathrm{edge}}$. The null qsample
$\ket+^{\otimes M}$ is one vector in this subspace, so the test from
Subsection~\ref{subsec:clique_qsamples} resolves \emph{more} than just the weak
label. We now compare readouts of these registers on the mixed
graph input. Below the square-root scale, the weak label has
vanishing distinguishing power, and adding its multiplicity state
recovers exactly the information in edge count. Above the
fixed-margin logarithmic threshold, the Specht register alone
remains nearly perfectly informative.

\subsubsection{Weak Schur sampling see only edge count}
\label{sec:gpe}

Let $\Pi_i$ project onto the isotypic component
$S^{\lambda_i}\otimes\mathcal M_i$, and set:
\begin{align*}
    p_b(i)&=\operatorname{Tr}(\Pi_i\sigma_b),&
    D_{\textsf{L}}=\operatorname{TV}(p_1,p_0).
\end{align*}
Since permutations act within each component,
$R(\pi)^\dagger\Pi_iR(\pi)=\Pi_i$. Consequently, a graph and any
edge permutation of it give the same label probabilities.
Any two graphs with the same edge count are related by such a
permutation and the measurement loses
edge-arrangement information as a result.

\begin{lemma}[Weak Schur sampling sees only the edge count]
\label{prop:edge-covariance}
Suppose the encoding of a graph obeys
$\sigma_{\pi x}=R(\pi)\sigma_xR(\pi)^\dagger$ for every $\pi\in S_M$.
Then its conditional label distribution depends only on the edge count:
\begin{align*}
    \operatorname{Tr}(\Pi_i\sigma_x)&=p(i\mid |x|).
\end{align*}
Here $p(i\mid w)$ is the common label distribution for graphs
with $w$ edges. If $W_b$ is the edge count under $P_b$, then
$D_{\textsf{L}}\leqslant\operatorname{TV}(W_1,W_0)$.
\end{lemma}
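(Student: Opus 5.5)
The argument has two parts: invariance of the label probabilities under edge permutations, and a data-processing step for the mixtures.

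\emph{Invariance.} Fix a label $i$ and a graph $x$, and let $\pi\in S_M$. We already have $R(\pi)^\dagger\Pi_iR(\pi)=\Pi_i$, because $R(\pi)$ acts blockwise as $\rho^{\lambda_i}(\pi)\otimes I_{r_i}$ and so commutes with each isotypic projector. Using the covariance hypothesis and cyclicity of the trace,
\begin{align*}
    \operatorname{Tr}(\Pi_i\sigma_{\pi x})
    =\operatorname{Tr}\bigl(\Pi_iR(\pi)\sigma_xR(\pi)^\dagger\bigr)
    =\operatorname{Tr}\bigl(R(\pi)^\dagger\Pi_iR(\pi)\sigma_x\bigr)
    =\operatorname{Tr}(\Pi_i\sigma_x).
\end{align*}
Next, $S_M$ acts transitively on the graphs with a fixed number $w$ of edges: any two weight-$w$ strings are related by a coordinate permutation. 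Hence $\operatorname{Tr}(\Pi_i\sigma_x)$ takes a single value on each weight class, and we may call it $p(i\mid w)$. For the computational-basis encoding $\sigma_x=\ket{x}\!\bra{x}$ the hypothesis holds automatically, since $R(\pi)\ket{x}=\ket{\pi x}$.

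\emph{Data processing.} By linearity, $\sigma_b=\mathbb E_{x\sim P_b}\sigma_x$. Therefore
\begin{align*}
    p_b(i)=\mathbb E_{x\sim P_b}\,p(i\mid |x|)=\sum_w \Pr(W_b=w)\,p(i\mid w).
\end{align*}
So $p_b$ is the image of the edge-count law $W_b$ under one fixed Markov kernel $w\mapsto p(\cdot\mid w)$, and this kernel does not depend on $b$. Total variation contracts under a common kernel:
\begin{align*}
    \sum_i|p_1(i)-p_0(i)|\leqslant\sum_w|\Pr(W_1=w)-\Pr(W_0=w)|\sum_i p(i\mid w)=\sum_w|\Pr(W_1=w)-\Pr(W_0=w)|.
\end{align*}
Halving both sides gives $D_{\textsf L}\leqslant\operatorname{TV}(W_1,W_0)$.

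No step here is a real obstacle. The only point needing care is the commutation $R(\pi)^\dagger\Pi_iR(\pi)=\Pi_i$, which follows from the Schur decomposition. If we wanted to avoid relying on the blockwise form, we could instead write $\Pi_i$ as the central projector $\frac{d_i}{M!}\sum_{\pi}\chi^{\lambda_i}(\pi)R(\pi)$ and note that it commutes with every $R(\pi)$.
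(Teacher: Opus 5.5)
Your proof is correct and follows the paper's argument essentially step for step. You show that $\operatorname{Tr}(\Pi_i\sigma_x)$ is invariant under edge permutations via $R(\pi)^\dagger\Pi_iR(\pi)=\Pi_i$ and transitivity on weight classes, then write $p_b$ as the image of $W_b$ under the common kernel $p(\cdot\mid w)$ and bound the total variation by the triangle inequality, exactly as the paper does.
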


\begin{proof}
Covariance and commutation give
\begin{align*}
    \operatorname{Tr}(\Pi_i\sigma_{\pi x})
       &=\operatorname{Tr}
          \bigl(R(\pi)^\dagger\Pi_iR(\pi)\sigma_x\bigr)
        =\operatorname{Tr}(\Pi_i\sigma_x).
\end{align*}
The label probabilities are therefore constant on each edge-count
class, defining $p(i\mid w)$. Since
$\sigma_b=\sum_x P_b(x)\sigma_x$, grouping graphs by their edge
counts gives
\begin{align*}
    p_b(i)
       &=\operatorname{Tr}(\Pi_i\sigma_b)
       =\sum_x P_b(x)\operatorname{Tr}(\Pi_i\sigma_x)\\
       &=\sum_{w=0}^{M}\sum_{x:\,|x|=w}
           P_b(x)p(i\mid w)
       =\sum_{w=0}^{M}p(i\mid w)
           \sum_{x:\,|x|=w}P_b(x)
       =\sum_{w=0}^{M}p(i\mid w)\Pr(W_b=w).
\end{align*}
Thus the label is obtained from the edge count through the same
conditional distribution under both hypotheses. Using
$p(i\mid w)\geqslant0$ and $\sum_i p(i\mid w)=1$, we obtain
\begin{align*}
    D_{\textsf{L}}
       &=\frac12\sum_i
          \left|\sum_{w=0}^{M}p(i\mid w)
             \bigl(\Pr(W_1=w)-\Pr(W_0=w)\bigr)\right|\\
       &\leqslant\frac12\sum_{w=0}^{M}
          \left|\Pr(W_1=w)-\Pr(W_0=w)\right|
          \sum_i p(i\mid w) =\operatorname{TV}(W_1,W_0).
\end{align*}
\end{proof}

The encoding we use here satisfies the covariance condition.
The same argument applies to any measurement whose effects commute
with all $R(\pi)$. We now bound the remaining edge-count signal.

\begin{proposition}[Weak Schur sampling below the square-root scale]
\label{thm:weak-schur-vanishing}
For $n\geqslant3$ and $2\leqslant k\leqslant n$,
\begin{align*}
    D_{\textsf{L}}
       &\leqslant\operatorname{TV}(W_1,W_0)
        \leqslant\min\left\{1,\frac{m}{2\sqrt{M-m+1}}\right\}.
\end{align*}
When $k=o(\sqrt n)$, the stronger estimate
\begin{align*}
    D_{\textsf{L}}&=O(k^4/n^2)=o(1)
\end{align*}
holds. In particular, at $k=\lfloor n^{1/2-\varepsilon}\rfloor$
for fixed $0<\varepsilon<1/2$, the optimal equal-prior success
probability based only on the label is $\frac12+O(n^{-4\varepsilon})$.
\end{proposition}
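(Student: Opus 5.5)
The plan has two parts: make the label channel explicit, then exploit the reflection symmetry $w\mapsto M-w$. That symmetry kills the first-order effect of the $m/2$ mean shift, which is why $D_{\textsf L}$ can be much smaller than $\operatorname{TV}(W_1,W_0)\asymp k^2/n$.

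\emph{Step 1: the crude bound.} The first inequality is Lemma~\ref{prop:edge-covariance}. For the second, couple the two edge counts as $W_0=Y+B$ and $W_1=Y+m$, where $Y\sim\mathrm{Bin}(M-m,1/2)$ counts edges outside a fixed $\binom C2$ and $B\sim\mathrm{Bin}(m,1/2)$ is independent of $Y$. Shifting $Y$ by $s$ changes its law in total variation by at most $s\max_y\Pr(Y=y)$. The central binomial bound gives $\max_y\Pr(Y=y)\leqslant 1/\sqrt{M-m+1}$. Averaging over $B$, with $\mathbb E|m-B|=m/2$, yields
\begin{align*}
\operatorname{TV}(W_1,W_0)\leqslant\frac{m}{2\sqrt{M-m+1}}.
\end{align*}
The cap at $1$ is trivial.

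\emph{Step 2: the label channel is even in $w$.} For a basis state $\ket x$ with $|x|=w$, the weight-$w$ span is the permutation module $M^{(M-w,w)}\cong\bigoplus_{i\leqslant\min(w,M-w)}S^{\lambda_i}$, each irrep appearing once. Since $\sigma_x=\ket x\bra x$ and these states are permuted transitively, Lemma~\ref{prop:edge-covariance} gives
\begin{align*}
p(i\mid w)=\frac{d_i}{\binom Mw}\,\mathbf 1[i\leqslant\min(w,M-w)],
\end{align*}
which satisfies $p(\cdot\mid w)=p(\cdot\mid M-w)$. Hence the label factors through the folded variable $F=|W-M/2|$, and the same data-processing argument as in Lemma~\ref{prop:edge-covariance} yields
\begin{align*}
D_{\textsf L}\leqslant\operatorname{TV}\bigl(|W_1-M/2|,\;|W_0-M/2|\bigr).
\end{align*}

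\emph{Step 3: the folded comparison.} Write $\tilde Y=Y-(M-m)/2$, which is symmetric. Then
\begin{align*}
W_1-\tfrac M2=\tilde Y+\tfrac m2,\qquad W_0-\tfrac M2=\tilde Y+Z,
\end{align*}
where $Z=B-m/2$ is symmetric with $|Z|\leqslant m/2$. Since $\tilde Y$ is symmetric, the folded law of $\tilde Y+a$ has pmf $\Pr(\tilde Y=u-a)+\Pr(\tilde Y=u+a)$ at $u\geqslant 0$, which is even in $a$. Comparing each side with the folded law of $\tilde Y$ itself, it suffices to show, uniformly for $|a|\leqslant m/2$,
\begin{align*}
\sum_{u\geqslant 0}\Bigl|\Pr(\tilde Y=u-a)+\Pr(\tilde Y=u+a)-2\Pr(\tilde Y=u)\Bigr|=O\!\left(\frac{a^2}{M}\right)
\end{align*}
(for $u=0$, halve the pmfs). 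Averaging over $Z$ and using the triangle inequality then gives $D_{\textsf L}=O(m^2/M)=O(k^4/n^2)$.

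The sum above is a discrete second difference of the binomial pmf. I will split at $|u|\leqslant\sqrt M\log M$. In the bulk, the ratio $\Pr(\tilde Y=u\pm a)/\Pr(\tilde Y=u)$ equals $\exp\bigl(\mp 4ua/M'-2a^2/M'+O(a^3/M'^2+u a^3/M'^3)\bigr)$ with $M'=M-m$, by Stirling, or more cleanly by products of ratios $(n'-j)/(j+1)$. Taking the symmetric sum turns the linear term into a $\cosh$, so
\begin{align*}
\frac{\Pr(\tilde Y=u-a)+\Pr(\tilde Y=u+a)}{2\Pr(\tilde Y=u)}-1=O\!\left(\frac{a^2}{M}+\frac{a^2u^2}{M^2}\right).
\end{align*}
Summing against the pmf of $\tilde Y$, whose second moment is $M'/4$, gives $O(a^2/M)$. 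The tails contribute $e^{-\Omega(\log^2M)}$ by Hoeffding. This requires $a\,u/M\to0$ in the bulk, which holds because $k=o(\sqrt n)$ gives $m=o(n)=o(\sqrt M)$.

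\emph{Main obstacle.} I expect the uniform local-ratio expansion of the binomial, with explicit error terms valid on the whole bulk range, to be the delicate step. Everything else reduces to the explicit form of $p(i\mid w)$ and its evenness. Finally, the optimal label-only success probability is $\tfrac12(1+D_{\textsf L})$. At $k=\lfloor n^{1/2-\varepsilon}\rfloor$ we have $k^4/n^2\leqslant n^{-4\varepsilon}$, so this probability is $\tfrac12+O(n^{-4\varepsilon})$.
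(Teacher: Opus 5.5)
Your Steps 1 and 2 are sound. In fact, the folded distance in Step 2 is exactly the quantity the paper bounds. The paper's auxiliary law $Q$ (complement with probability $\tfrac12$, then permute edge positions uniformly) is invariant under $S_M$ and under complementation, so $\operatorname{TV}(Q,P_0)=\operatorname{TV}(|W_1-M/2|,|W_0-M/2|)$.

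The gap is in Step 3. The variable $\tilde Y$ lives on $\tfrac{M-m}{2}+\mathbb Z$, whereas every shift you use lies in $\tfrac m2+\mathbb Z$; this includes $a=m/2$ and every value of $Z=B-m/2$. Suppose $m=\binom k2$ is odd, which happens for every $k\equiv 2,3\pmod 4$. Then the folded laws of $\tilde Y+a$ and of $\tilde Y$ sit on the disjoint cosets $\tfrac M2+\mathbb Z$ and $\tfrac{M-m}2+\mathbb Z$. Your displayed sum therefore equals $2$ rather than $O(a^2/M)$, and the triangle inequality through the folded law of $\tilde Y$ yields nothing. You need a reference on the correct lattice. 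One option is the folded law of $\tilde Y+\tfrac12$ when $m$ is odd, with the symmetric expansion redone relative to it. Another is to compare $|\tilde Y+m/2|$ directly with $|\tilde Y+z|$ for each $z$ in the support of $Z$.

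Second, the bulk justification fails at the top of the regime. For $|u|\leqslant\sqrt M\log M$ and $|a|\leqslant m/2$ you only get $au/M\lesssim m\log M/\sqrt M$, and $m=o(\sqrt M)$ does not make this vanish. For $k^2=n/\log\log n$ it diverges, and there $\cosh(4ua/M')-1$ is not $O(a^2u^2/M^2)$. This is repairable without linearizing. Keep $e^{-2a^2/M'}\cosh(4ua/M')$ and use $\mathbb E\cosh(\lambda\tilde Y)=\cosh(\lambda/2)^{M'}\leqslant e^{\lambda^2M'/8}$. This gives $\mathbb E\cosh(4a\tilde Y/M')-1\leqslant e^{2a^2/M'}-1=O(a^2/M)$, since $m^2/M\to0$.

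With both repairs your local-limit route reaches $O(m^2/M)$, but the paper avoids all of this bookkeeping. It bounds the same distance by $\tfrac12\sqrt{\mathbb E_{P_0}(L-1)^2}$ with $L=Q/P_0$. It then evaluates the second moment exactly by Parseval on the cube: $\sum_{r\ \mathrm{even}}\binom mr^2/\binom Mr\leqslant\cosh\theta-1$ with $\theta=m^2/(M-m+1)$. That argument has no lattice or parity issue and no bulk/tail split, and it gives an explicit non-asymptotic bound for every $k$.
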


\begin{proof}
\emph{The edge-count bound.}
We can write $W_0=Y+Z$ and $W_1=Y+m$, where
$Y\sim\operatorname{Bin}(M-m,1/2)$ and
$Z\sim\operatorname{Bin}(m,1/2)$ are independent.
A unit shift of a unimodal probability mass function has
total-variation distance equal to its largest mass. Thus
\begin{align*}
    \operatorname{TV}(Y,Y+1)
       &=2^{-(M-m)}\binom{M-m}{\lfloor(M-m)/2\rfloor}
        \leqslant\frac1{\sqrt{M-m+1}},\\
    \operatorname{TV}(W_1,W_0)
       &\leqslant\mathbb E_Z\operatorname{TV}(Y+m,Y+Z) \leqslant\frac{\mathbb E[m-Z]}{\sqrt{M-m+1}}
        =\frac{m}{2\sqrt{M-m+1}}.
\end{align*}
The bound uses convexity and controls a larger shift by the
sum of its unit shifts. Together with
Lemma~\ref{prop:edge-covariance}, this already proves vanishing
advantage for $k=o(\sqrt n)$.

\emph{The refinement from complementation.}
The sharper bound uses an additional symmetry: the label cannot
distinguish a graph from its complement.
The collective bit flip $X^{\otimes M}$ commutes with every
$\Pi_i$ and sends $\ket{x}$ to $\ket{\bar x}$. Hence
\begin{align*}
    p(i\mid w)&=p(i\mid M-w).
\end{align*}
Thus the label cannot distinguish an excess of edges above $M/2$
from the corresponding deficit. We may complement the input with
probability $1/2$ without changing its label law. We may also
uniformly permute all edge positions, since this preserves edge
count. Together, these operations turn the planted law into the
following auxiliary law $Q$: choose a uniform $m$-element set of
edge positions, set all of them to one common fair bit, and leave
the other positions independent and uniform. Its isotypic-label
law is still $p_1$. Random complementation cancels the odd sign
moments; we now use this cancellation to obtain the sharper bound. For an edge subset $d$, write $\chi_d(x)=(-1)^{d\cdot x}$.
Under $Q$, an odd sign moment vanishes; an even one survives only
if every coordinate lies in the selected $m$-set. Thus
$\mathbb E_Q\chi_d=\binom m{|d|}/\binom M{|d|}$ for even $|d|$,
and zero for odd $|d|$. With $L=Q/P_0$, Parseval's identity for
the characters under the uniform law gives~\cite{ODonnell14}:
\begin{align*}
    \mathbb E_{P_0}(L-1)^2
       &=\sum_{\substack{2\leqslant r\leqslant m\\r\ {\rm even}}}
            \frac{\binom mr^2}{\binom Mr}.
\end{align*}
For $\theta=m^2/(M-m+1)$,
\begin{align*}
    \frac{\binom mr^2}{\binom Mr}
       &=\frac1{r!}\prod_{j=0}^{r-1}\frac{(m-j)^2}{M-j}
        \leqslant\frac{\theta^r}{r!}.
\end{align*}
Summing the even terms, then applying data processing and
Cauchy--Schwarz, gives
\begin{align*}
    D_{\textsf{L}}
       &\leqslant\operatorname{TV}(Q,P_0)
        \leqslant\frac12\sqrt{\mathbb E_{P_0}(L-1)^2}
        \leqslant\frac12\sqrt{\cosh\theta-1}.
\end{align*}
When $k=o(\sqrt n)$, $\theta=O(k^4/n^2)=o(1)$ and
$\cosh\theta-1=O(\theta^2)$, proving the sharper rate.
\end{proof}

The exact conditional label probabilities are given in
Appendix~\ref{sec:conditional-label-laws}. Related Fourier--Schur
limitations in hidden-subgroup problems concern different input
states~\cite{Childs06,MRS05}; the obstruction here follows directly
from the edge-permutation orbits.

\subsubsection{Comparing Schur readouts}
\label{sec:label-discards}
\label{sec:retained-channels}
\label{sec:specht-discrimination}
\label{sec:retained-signal}

The weak-label bound concerns one particular readout. We compare
it to measuring the isotypic label and retaining its
multiplicity state, or retaining the Specht register alone.
Write the ideal output as
\begin{align*}
    \widetilde\sigma_b
       &:=U_{\mathrm{Sch}}\sigma_bU_{\mathrm{Sch}}^\dagger.
\end{align*}
Consider first a measurement of $\mathsf S$ while ignoring the  $\textsf{M}, \textsf{L}$ registers. The measurement statistics are determined
by the reduced state
\begin{align*}
    \sigma_b^{\mathrm{Sp}}
       &:=\operatorname{Tr}_{\mathsf L,\mathsf M}
                       (\widetilde\sigma_b), &
    D_{\mathrm{Sp}}:=\tfrac12\|\sigma_1^{\mathrm{Sp}}
                                  -\sigma_0^{\mathrm{Sp}}\|_1.
\end{align*}
For the other readout, the label $\textsf{L}$ is explicitly measured and kept, but the Specht register is traced out:
\begin{align*}
    \tau_b^{\mathsf{LM}}
       &:=\sum_i\ketbra{i}{i}_{\mathsf L}\otimes
          \operatorname{Tr}_{\mathsf S}
          \bigl(\bra{i}_{\mathsf L}\widetilde\sigma_b
                         \ket{i}_{\mathsf L}\bigr),\qquad
    D_{\mathsf{LM}}
       :=\tfrac12\|\tau_1^{\mathsf{LM}}-\tau_0^{\mathsf{LM}}\|_1.
\end{align*}
This experiment removes coherences between different labels.
The following results give the comparison below for every fixed
$\varepsilon>0$ throughout
$(2+\varepsilon)\log_2n\leqslant k=o(\sqrt n)$:
\begin{center}
\renewcommand{\arraystretch}{1.15}
\begin{tabular}{@{}p{0.48\linewidth}p{0.46\linewidth}@{}}
\hline
Readout & Distinguishing distance\\
\hline
Weak Schur label & $D_{\textsf{L}}=O(k^4/n^2)$\\
Measured label and multiplicity
    & $D_{\mathsf{LM}}=\operatorname{TV}(W_1,W_0)=O(k^2/n)$\\
Specht alone, label erased & $D_{\mathrm{Sp}}=1-o(1)$\\
\hline
\end{tabular}
\end{center}
Here $W_b$ denotes edge count under $P_b$.
The first row is Proposition~\ref{thm:weak-schur-vanishing};
Proposition~\ref{prop:label-multiplicity-count} and
Corollary~\ref{thm:retain-specht} below establish the other two.
A label with little distinguishing power could still select a
useful measurement of the residual state. The last row therefore
answers a separate question: one measurement on the common Specht
register suffices without access to either of the other registers.

\begin{proposition}[Measured label and multiplicity recover exactly edge count]
\label{prop:label-multiplicity-count}
For the computational-basis graph ensembles,
\begin{align*}
    D_{\mathsf{LM}}&=\operatorname{TV}(W_1,W_0).
\end{align*}
More precisely, the retained state can be prepared from edge count,
and edge count can be recovered by measuring the retained state.
\end{proposition}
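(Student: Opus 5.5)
The plan is to compute the retained state $\tau_x^{\mathsf{LM}}$ for a single computational-basis graph $\ket{x}$ and show that it depends only on $w=|x|$. We then show that the states for different $w$ are perfectly distinguishable. Both directions of the statement follow from this. If $\tau_x^{\mathsf{LM}}=\tau(w)$, then $\tau_b^{\mathsf{LM}}=\sum_w \Pr(W_b=w)\,\tau(w)$, so the retained state is obtained from the edge count by a fixed preparation map. Data processing then gives $D_{\mathsf{LM}}\leqslant\operatorname{TV}(W_1,W_0)$. If in addition the $\tau(w)$ have mutually orthogonal supports, a fixed measurement recovers $w$ exactly, which gives the reverse inequality.

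For the first step I use the $\mathrm{SU}(2)$ structure. Collective $Z$-rotations $e^{i\phi\sum_j Z_j/2}$ act only on the multiplicity factor $\mathcal M_i$, and $\ket{x}$ is an eigenvector of $\sum_j Z_j$ with eigenvalue $M-2w$. The total $S_z$ value is therefore $M/2-w$. In the block $S^{\lambda_i}\otimes\mathcal M_i$, the component $\Pi_i\ket{x}$ lies in $S^{\lambda_i}\otimes\ket{i,m_z=M/2-w}$, where $\mathcal M_i$ is the spin-$(M/2-i)$ representation with its $S_z$ eigenbasis. This component is nonzero only if $i\leqslant\min(w,M-w)$. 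Tracing out $\mathsf S$ after projecting onto label $i$ gives
\[
\operatorname{Tr}_{\mathsf S}\bigl(\bra{i}_{\mathsf L}\widetilde\sigma_x\ket{i}_{\mathsf L}\bigr)=\|\Pi_i\ket{x}\|^2\,\ketbra{M/2-w}{M/2-w}_{\mathsf M}.
\]
Hence $\tau_x^{\mathsf{LM}}=\sum_i p(i\mid w)\ketbra{i}{i}\otimes\ketbra{M/2-w}{M/2-w}$. Lemma~\ref{prop:edge-covariance} already shows that $p(i\mid |x|)$ depends only on $w$, so $\tau_x^{\mathsf{LM}}=\tau(w)$.

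For the second step, measuring $\mathsf M$ in its weight ($S_z$) basis returns $M/2-w$ deterministically, which recovers $w$. Equivalently, the supports of $\tau(w)$ for distinct $w$ are orthogonal, since they involve different $S_z$ eigenvectors within each fixed label $i$. So $\tau_b^{\mathsf{LM}}$ is block diagonal in $w$ with weights $\Pr(W_b=w)$, and its trace distance is exactly $\operatorname{TV}(W_1,W_0)$.

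The main obstacle is the bookkeeping in the first step: justifying that the Schur basis of $\mathcal M_i$ can be taken as the $S_z$ eigenbasis, with the same $S_z$ labelling under the fixed encoding, and that the Specht partial trace leaves a rank-one multiplicity state. I would settle this by noting that $\sum_j Z_j$ commutes with $R(\pi)$ and acts as $I\otimes 2S_z$ on each block, and that any fixed basis choice for $\mathsf M$ changes the states only by a $w$-independent unitary, which preserves trace distance.
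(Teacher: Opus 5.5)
Your proof is correct and follows the same two-direction structure as the paper's: the retained state is a fixed function of edge count, which gives $D_{\mathsf{LM}}\leqslant\operatorname{TV}(W_1,W_0)$, and edge count is exactly readable from the multiplicity register given the label, which gives the reverse inequality. The paper gets the first part abstractly, from edge-permutation invariance of the readout and transitivity on each weight class. It gets the second by applying Schur's lemma to $\widehat W=\sum_e(\ket1\!\bra1)_e$, whose multiplicity factor $Q_i$ is precisely your $M/2-S_z$. Your explicit rank-one form $\sum_i p(i\mid w)\ketbra{i}{i}\otimes\ketbra{M/2-w}{M/2-w}$ makes that argument concrete and yields the equality in one step through orthogonal supports.
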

\begin{proof}
Let $\tau_x^{\mathsf{LM}}$ be the output of this readout on
$\ketbra{x}{x}$. Within each measured block, an edge permutation
acts only on the Specht factor being traced out. Hence
$\tau_{\pi x}^{\mathsf{LM}}=\tau_x^{\mathsf{LM}}$.
Edge permutations act transitively on graphs with a given count,
so there are states $\eta_w$ such that
\begin{align*}
    \tau_b^{\mathsf{LM}}
       &=\sum_w\Pr(W_b=w)\eta_w.
\end{align*}
Data processing gives
$D_{\mathsf{LM}}\leqslant\operatorname{TV}(W_1,W_0)$.
Conversely, the edge-count observable
\begin{align*}
    \widehat W&:=\sum_{e\in\mathcal E}
              (\ket1\!\bra1)_e\otimes I_{\mathcal E\setminus\{e\}}
\end{align*}
commutes with every edge permutation. In Schur coordinates it
therefore has the form
$\bigoplus_i I_{d_i}\otimes Q_i$ for Hermitian operators $Q_i$
on multiplicity. Measuring $Q_i$ conditional on the retained
label $i$ recovers edge count exactly. Data processing in this
direction gives the reverse inequality.
\end{proof}

For Specht alone, the needed estimate concerns the support left
after discarding a small register. The following lemma supplies
it for any isometry and a maximally mixed null.

\begin{lemma}[Discarding a small register]
\label{lem:small-register-rank}
Let $\rho_0=I_N/N$ and let $\rho_1$ be a density operator of rank $s$
on $\mathbb C^N$. For any isometry
$W:\mathbb C^N\to\mathcal K\otimes\mathcal F$, where
$\dim\mathcal F=d$, set
$\tau_b=\operatorname{Tr}_{\mathcal F}(W\rho_bW^\dagger)$. Then
\begin{align*}
    \tfrac12\|\tau_1-\tau_0\|_1&\geqslant1-\frac{d^2s}{N}.
\end{align*}
\end{lemma}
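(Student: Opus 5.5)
The plan is to exhibit a single projector on $\mathcal K$ that nearly separates $\tau_1$ from $\tau_0$, namely the projector onto the support of $\tau_1$. We use the variational characterization of trace distance:
\begin{align*}
\tfrac12\|\tau_1-\tau_0\|_1 \geqslant \operatorname{Tr}(P\tau_1)-\operatorname{Tr}(P\tau_0).
\end{align*}
Let $P$ be the orthogonal projector onto $\operatorname{supp}(\tau_1)\subseteq\mathcal K$. Then $\operatorname{Tr}(P\tau_1)=1$, so it suffices to show $\operatorname{Tr}(P\tau_0)\leqslant d^2s/N$.

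\emph{Rank of $\tau_1$.} Write $\rho_1=\sum_{j=1}^s p_j\ketbra{v_j}{v_j}$ and expand each $W\ket{v_j}\in\mathcal K\otimes\mathcal F$ in a basis $\{\ket f\}$ of $\mathcal F$:
\begin{align*}
W\ket{v_j}=\sum_{f=1}^{d}\ket{u_{jf}}\otimes\ket f .
\end{align*}
Then $\tau_1=\sum_j p_j\sum_f\ketbra{u_{jf}}{u_{jf}}$ is supported on $\operatorname{span}\{u_{jf}\}$. Hence $\operatorname{rank}\tau_1\leqslant ds$, and therefore $\operatorname{rank}P\leqslant ds$.

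\emph{Bounding $\operatorname{Tr}(P\tau_0)$.} We have
\begin{align*}
\operatorname{Tr}(P\tau_0)=\frac1N\operatorname{Tr}\bigl((P\otimes I_{\mathcal F})WW^\dagger\bigr).
\end{align*}
Since $WW^\dagger$ is a projector, it is bounded above by $I_{\mathcal K\otimes\mathcal F}$. Because $P\otimes I_{\mathcal F}\geqslant0$, this gives
\begin{align*}
\operatorname{Tr}\bigl((P\otimes I)WW^\dagger\bigr)\leqslant\operatorname{Tr}(P\otimes I_{\mathcal F})=d\operatorname{rank}P\leqslant d^2s.
\end{align*}
Dividing by $N$ yields $\operatorname{Tr}(P\tau_0)\leqslant d^2s/N$. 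Combining this with $\operatorname{Tr}(P\tau_1)=1$ gives the claimed bound $1-d^2s/N$.

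No step is a serious obstacle. The one point needing care is the operator inequality in the second step: the trace bound holds because $WW^\dagger\leqslant I$ and $P\otimes I\geqslant0$, so no structure of $W$ beyond being an isometry is required. The factor $d^2$ arises as one factor of $d$ from the rank growth under the partial trace and one from tensoring $P$ with $I_{\mathcal F}$.
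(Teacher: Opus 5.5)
Your proof is correct and follows essentially the same route as the paper: you test with the projector onto $\operatorname{supp}\tau_1$, bound its rank by $ds$ through the expansion across $\mathcal K\otimes\mathcal F$, and bound the null acceptance by $\operatorname{Tr}(P\otimes I_{\mathcal F})/N\leqslant d^2s/N$ using $WW^\dagger\leqslant I$. The paper states that last step as summing a null probability of at most $1/N$ over an orthonormal basis of $\mathcal A\otimes\mathcal F$, which is the same inequality.
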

\begin{proof}
Let $\mathcal A=\operatorname{supp}\tau_1$, and let
$\Pi_{\mathcal A}$ be its orthogonal projector. We test whether
the retained state lies in $\mathcal A$. The alternative passes
with certainty; we bound the probability that the null passes. First, $\rho_1$ is a mixture of $s$ orthogonal pure states.
After applying $W$, each has Schmidt rank at most $d$ across
$\mathcal K\otimes\mathcal F$. Discarding $\mathcal F$ therefore
leaves each component supported on at most $d$ dimensions.
Their combined support satisfies
\begin{align*}
    \dim\mathcal A&=\operatorname{rank}\tau_1\leqslant sd.
\end{align*}
On the full output space, our test is represented by
$\Pi_{\mathcal A}\otimes I_{\mathcal F}$: it accepts whenever
the retained register lies in $\mathcal A$, regardless of the
discarded register. Its accepted subspace is therefore
$\mathcal A\otimes\mathcal F$, of dimension
\begin{align*}
    \dim(\mathcal A\otimes\mathcal F)
       &=d\,\dim\mathcal A\leqslant sd^2.
\end{align*}

The encoded null is $W\rho_0W^\dagger=WW^\dagger/N$.
Since $WW^\dagger$ is an orthogonal projector, this state
assigns probability at most $1/N$ to any unit vector.
Summing over an orthonormal basis of the accepted subspace gives
\begin{align*}
    \operatorname{Tr}(\Pi_{\mathcal A}\tau_0)
       &=\frac1N\operatorname{Tr}\!\left[
           (\Pi_{\mathcal A}\otimes I_{\mathcal F})WW^\dagger
         \right] \leqslant\frac{\dim(\mathcal A\otimes\mathcal F)}{N}
        \leqslant\frac{d^2s}{N}.
\end{align*}
Because $\operatorname{Tr}(\Pi_{\mathcal A}\tau_1)=1$,
the acceptance-probability gap of this test yields
\begin{align*}
    \tfrac12\|\tau_1-\tau_0\|_1
       &\geqslant
         \operatorname{Tr}\bigl(\Pi_{\mathcal A}(\tau_1-\tau_0)\bigr)
        \geqslant1-\frac{d^2s}{N}.
\end{align*}
\end{proof}
In our setting,
the discarded space has polynomial dimension, whereas the planted
support occupies a superpolynomially small fraction of the input
space above the fixed-margin logarithmic threshold.

\begin{corollary}[Near-perfect discrimination from the Specht register]
\label{thm:retain-specht}
Let $\mathcal A_k$ be the set of graphs containing a $k$-clique,
and put $c_M=\sum_i r_i=O(M^2)$. Then
\begin{align*}
    D_{\mathrm{Sp}}&\geqslant1-c_M^2P_0(\mathcal A_k).
\end{align*}
Consequently, for every fixed $\varepsilon>0$ and
$k\geqslant(2+\varepsilon)\log_2n$, measuring the Specht register alone permits detection
with success probability $1-o(1)$.
\end{corollary}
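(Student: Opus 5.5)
We will apply Lemma~\ref{lem:small-register-rank} with $N=2^M$, $\rho_0=\sigma_0=I_N/N$ (the null is uniform on $\{0,1\}^M$), the isometry $W=U_{\mathrm{Sch}}$, and the discarded register $\mathcal F$ taken to be the joint $\mathsf L,\mathsf M$ registers. The one point needing care is the dimension of the discarded space. The encoded registers are padded, so we do not use the raw register dimension. Instead, we note that the occupied basis positions of the pair $(\mathsf L,\mathsf M)$ are the valid pairs $(i,\mu)$ with $1\leqslant\mu\leqslant r_i$. There are $\sum_i r_i=c_M$ of them, and $c_M\leqslant (M+1)(\lfloor M/2\rfloor+1)=O(M^2)$.

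We can therefore realize the Schur image as an isometry into $\mathcal K\otimes\mathcal F$ with $\dim\mathcal F=c_M$. Map $\ket{i,\mu,a}$ to $\ket{a}_{\mathcal K}\otimes\ket{(i,\mu)}_{\mathcal F}$, where $\mathcal K$ is the padded Specht register. Then $\operatorname{Tr}_{\mathcal F}$ gives exactly $\sigma_b^{\mathrm{Sp}}$. Tracing out the larger padded $\mathsf L\mathsf M$ register yields the same reduced state, because the unoccupied positions carry zero amplitude. We will check this identification explicitly, since it is what justifies $d=c_M$.

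Next we handle the rank hypothesis. Rather than use the rank of $\sigma_1$ directly, we replace it by a support count. The state $\sigma_1$ is diagonal and supported on the graphs containing a $k$-clique, because $P_1(\mathcal A_k)=1$. Hence its rank satisfies
\begin{align*}
    s\leqslant|\mathcal A_k|=N\,P_0(\mathcal A_k).
\end{align*}
Lemma~\ref{lem:small-register-rank} uses $s$ only through the dimension bound on $\operatorname{supp}\tau_1$, so an upper bound on the rank suffices. Substituting $d=c_M$ and this bound on $s$ into the lemma gives
\begin{align*}
    D_{\mathrm{Sp}}\geqslant1-\frac{c_M^2\,N P_0(\mathcal A_k)}{N}=1-c_M^2P_0(\mathcal A_k).
\end{align*}

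For the consequence, we bound $P_0(\mathcal A_k)\leqslant\mu_k=\binom nk2^{-m}$ by a union bound, as in Proposition~\ref{prop:clique-qsample-detection}. That proof shows $\mu_k\leqslant\bigl(e\sqrt2\,n^{-\varepsilon/2}/k\bigr)^k$. Since $c_M^2=O(n^8)$ and $k\geqslant(2+\varepsilon)\log_2 n$, we get
\begin{align*}
    \log_2\!\bigl(c_M^2\mu_k\bigr)\leqslant O(\log n)-\tfrac{\varepsilon}{2}k\log_2 n+O(k)\longrightarrow-\infty,
\end{align*}
because $k\log n=\Omega(\log^2 n)$ dominates $8\log_2 n$. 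Thus $D_{\mathrm{Sp}}=1-o(1)$, and the equal-prior success probability $(1+D_{\mathrm{Sp}})/2$ is $1-o(1)$.

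The main obstacle is the dimension bookkeeping for the discarded register: using the padded register size instead of $c_M$ could overstate $d$. Even the padded size is polynomial in $M$, however, so the asymptotic conclusion survives either way.
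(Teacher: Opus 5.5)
Your proposal is correct and follows the paper's proof step for step: apply Lemma~\ref{lem:small-register-rank} with the discarded $\mathsf L,\mathsf M$ space restricted to the $c_M$ valid $(i,\mu)$ positions, bound the planted rank by $|\mathcal A_k|=N\,P_0(\mathcal A_k)$, and finish with the union bound $P_0(\mathcal A_k)\leqslant\binom nk2^{-m}$, which dominates the polynomial factor $c_M^2$. The only difference is that the paper states the rank as an equality rather than an upper bound, which changes nothing.
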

\begin{proof}
For label $i$, the multiplicity index $\mu$ has $r_i$ possible values.
The ignored label and multiplicity registers therefore have only
$c_M=\sum_i r_i$ valid joint basis states $\ket{i,\mu}$.
Their unused basis states have zero amplitude, so we may restrict
the discarded space to this $c_M$-dimensional subspace.
The planted state is supported exactly on the basis vectors
indexed by $\mathcal A_k$, the graphs containing a $k$-clique.
Consequently,
$\operatorname{rank}(\sigma_1)/N=P_0(\mathcal A_k)$.
Apply Lemma~\ref{lem:small-register-rank} with $d=c_M$, retaining
Specht and discarding the other two registers.
The clique-probability estimate
from Subsection~\ref{subsec:clique_qsamples} gives
\begin{align*}
    P_0(\mathcal A_k)
       &\leqslant\binom nk2^{-m}
        =n^{-\Omega_\varepsilon(\log n)}
        \qquad\text{for }k\geqslant(2+\varepsilon)\log_2n.
\end{align*}
This dominates the polynomial factor $c_M^2$.
\end{proof}

The corollary gives a two-outcome measurement
$\{F,I_{\mathsf S}-F\}$ on the Specht register alone such that
\begin{align}
    \operatorname{Tr}\!\left[
       F(\sigma_1^{\mathrm{Sp}}-\sigma_0^{\mathrm{Sp}})
    \right]&=1-o(1).
    \label{eq:specht-common-readout}
\end{align}
This measurement requires no access to the discarded label or
multiplicity registers. The conclusion holds for any fixed
encoding of the Schur coordinates, although the effect $F$
may depend on that encoding. The remaining question is how to implement this measurement.
The proof uses the projector onto the support of
$\sigma_1^{\mathrm{Sp}}$, but provides no efficient circuit
for it. Moreover, the retained register is still large:
with compact storage, it occupies $M-O(\log M)$ qubits. The Schur transform itself can be approximated efficiently.
Taking its channel error to be $o(1)$ changes the distinguishing
distance by at most $o(1)$, so the same asymptotic conclusion
holds for the approximate output. We next compute the reduced
states explicitly and identify which components of the planted
signal they retain.

\subsubsection{Explicit reduced states}

We compute the Specht-only states by first simplifying the
state on the full graph register. Define the collective
rotation average
\begin{align}
    \mathcal C(A)
       &:=\int_{\mathrm{SU}(2)}
          U^{\otimes M}A(U^\dagger)^{\otimes M}\,dU,
    \label{eq:collective-rotation-channel}
\end{align}
where $dU$ is normalized Haar measure and the same $U$ acts
on every edge qubit. In Schur coordinates, this channel removes
coherences between different labels and replaces multiplicity
by a maximally mixed state within each block. It preserves
the measured label and Specht state jointly. Discarding the
label and multiplicity afterward therefore gives the same
Specht-only state as the original input.

To describe the averaged planted state, fix a candidate
clique $C$ and write $\mathcal E_C=\binom C2$ for its $m$
edge positions. Let $S_{\mathcal E_C}$ permute these positions
while fixing all others, and define
\begin{align}
    \Gamma_C
       &:=\frac1{m!}\sum_{\pi\in S_{\mathcal E_C}}R(\pi),
    &
    \Omega_k
       &:=\frac1{\binom nk}\sum_{|C|=k}\Gamma_C.
    \label{eq:specht-candidate-projectors}
\end{align}
The operator $\Gamma_C$ projects the clique-edge qubits onto
their symmetric subspace and acts as the identity on all other
qubits. Since the symmetric subspace of $m$ qubits has dimension
$m+1$,
\begin{align*}
    \operatorname{rank}\Gamma_C
       &=(m+1)2^{M-m}=N\tau_m,
    &
    \tau_m&:=(m+1)2^{-m}.
\end{align*}
Thus $\Omega_k$ is a positive contraction with
$\operatorname{Tr}\Omega_k=N\tau_m$.

\begin{proposition}[The planted state after collective averaging]
\label{prop:specht-projector-mixture}
The collectively averaged graph states are
\begin{align}
    \mathcal C(\sigma_0)&=\frac IN,
    &
    \mathcal C(\sigma_1)&=\frac{\Omega_k}{N\tau_m}.
    \label{eq:specht-projector-mixture}
\end{align}
Collective averaging preserves their Specht marginals:
\begin{align}
    \operatorname{Tr}_{\mathsf L,\mathsf M}
       \!\left(U_{\mathrm{Sch}}\mathcal C(\sigma_b)
                          U_{\mathrm{Sch}}^\dagger\right)
       &=\sigma_b^{\mathrm{Sp}}.
    \label{eq:collective-average-preserves-specht}
\end{align}
In particular,
\begin{align}
    \sigma_1^{\mathrm{Sp}}
       &=\frac1{N\tau_m}
          \operatorname{Tr}_{\mathsf L,\mathsf M}
          \!\left(U_{\mathrm{Sch}}\Omega_k
                            U_{\mathrm{Sch}}^\dagger\right).
    \label{eq:specht-planted-from-omega}
\end{align}
\end{proposition}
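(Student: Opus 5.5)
The proof splits into three steps: compute $\mathcal C$ on the null, compute it on each $\sigma_{P_C}$ and average, then check that the Specht marginal is unchanged.

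\emph{The null state.} $\mathcal C$ is unital because $\mathcal C(I)=I$, and $\sigma_0=I/N$. This gives $\mathcal C(\sigma_0)=I/N$.

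\emph{The planted state for a fixed clique.} Fix $C$ and write
\begin{align*}
\sigma_{P_C}=\ketbra{1}{1}^{\otimes m}_{\mathcal E_C}\otimes \frac{I}{2^{M-m}},
\end{align*}
so that $\sigma_1=\binom nk^{-1}\sum_C\sigma_{P_C}$. By linearity it suffices to show $\mathcal C(\sigma_{P_C})=\Gamma_C/(N\tau_m)$.

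First I would write $I=\int U^{\otimes (M-m)}(\cdot)$-invariantly. The identity on the complementary qubits is invariant under $U^{\otimes(M-m)}$ conjugation. Hence the twirl factorizes: $\mathcal C(\sigma_{P_C})$ equals the $m$-qubit twirl of $\ketbra{1}{1}^{\otimes m}$, tensored with $I/2^{M-m}$.

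The vector $\ket1^{\otimes m}$ lies in the symmetric subspace $\mathrm{Sym}^m(\mathbb C^2)$. This subspace carries the irreducible spin-$m/2$ representation of $\mathrm{SU}(2)$. By Schur's lemma, the Haar twirl of any state supported there is $\Pi_{\mathrm{sym}}/(m+1)$.

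Combining the two factors gives
\begin{align*}
\mathcal C(\sigma_{P_C})=\frac{\Pi_{\mathrm{sym},C}\otimes I}{(m+1)2^{M-m}}=\frac{\Gamma_C}{N\tau_m},
\end{align*}
using $N\tau_m=(m+1)2^{M-m}$. The operator $\Gamma_C$ is exactly the symmetrizer on the clique edges tensored with the identity elsewhere. Averaging over $C$ then yields $\Omega_k/(N\tau_m)$.

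\emph{Preservation of the Specht marginal.} Equation~\eqref{eq:collective-average-preserves-specht} holds for every input $A$. In Schur coordinates $U^{\otimes M}\cong\bigoplus_i I_{d_i}\otimes \nu_i(U)$, where the $\nu_i$ are pairwise inequivalent irreducible representations of $\mathrm{SU}(2)$ (spins $M/2-i$).

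Write $A=\sum_{i,j}A_{ij}$ in block form. The twirl kills off-diagonal blocks $i\ne j$, because the Haar integral of $\nu_i(U)\otimes\overline{\nu_j(U)}$ has no invariant vectors for inequivalent irreducibles. Within a diagonal block it maps $A_{ii}$ to $\operatorname{Tr}_{\mathsf M}(A_{ii})\otimes I_{r_i}/r_i$.

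Tracing out $\mathsf M$ and summing over $\mathsf L$ gives $\sum_i\operatorname{Tr}_{\mathsf M}A_{ii}$ in both cases. Off-diagonal blocks contribute nothing to $\operatorname{Tr}_{\mathsf L,\mathsf M}$ even before averaging, since $\mathsf L$ is traced. One point needs care: with padded registers, the Specht index $a$ of different blocks is encoded in a common register $\mathsf S$. Even so, the $\mathsf L$-trace only pairs equal labels, so the cross-label terms still vanish.

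Equation~\eqref{eq:specht-planted-from-omega} then follows by substituting $\mathcal C(\sigma_1)=\Omega_k/(N\tau_m)$ and using linearity.

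\emph{Main obstacle.} I expect the main difficulty to be the last step: stating the block-decomposition twirl identity cleanly in the encoded register picture, including the padding and the fact that $\mathsf L$ is traced rather than measured. The remaining steps are routine applications of Schur's lemma.
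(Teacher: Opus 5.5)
Your proposal is correct and follows the paper's proof: unitality for the null, factorization of the twirl into the clique and non-clique qubits, Schur's lemma on the irreducible symmetric subspace giving $\Pi_{\mathrm{sym}}^{(m)}/(m+1)$, and averaging over $C$. For the Specht marginal, the paper uses only the simpler observation you also make: each $U^{\otimes M}$ acts in Schur coordinates as a label-controlled unitary on $\mathsf M$, so it disappears under $\operatorname{Tr}_{\mathsf L,\mathsf M}$ even before averaging. Your block-by-block twirl is therefore not needed for this step, though it yields the formula $\sum_i\ketbra{i}{i}_{\mathsf L}\otimes (I_{r_i}/r_i)\otimes A_{b,i}$ that the paper uses afterwards.
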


\begin{proof}
Within each Schur block, collective rotations act only on
multiplicity. Their action disappears when the label and
multiplicity registers are traced out, proving
Eq.~\eqref{eq:collective-average-preserves-specht}.
The null $I/N$ is unchanged by unitary conjugation.

For a fixed planted set $C$, the clique edges are all present
and the remaining edges are independent uniform bits. Hence
\begin{align*}
    \sigma_C
       &=(\ket1\!\bra1)^{\otimes m}_{\mathcal E_C}
          \otimes
          \frac{I_{\mathcal E\setminus\mathcal E_C}}{2^{M-m}}.
\end{align*}
The maximally mixed factor is unchanged by collective rotations.
Every rotated clique-edge vector $(U\ket1)^{\otimes m}$ lies
in the symmetric subspace of the $m$ clique-edge qubits.
The averaged state is supported on this subspace and invariant
under collective rotations.

These rotations act irreducibly on the symmetric subspace.
Schur's lemma therefore makes the average proportional to
its projector $\Pi_{\mathrm{sym}}^{(m)}$. Normalizing its trace
gives
\begin{align*}
    \int_{\mathrm{SU}(2)}
       (U\ket1\!\bra1 U^\dagger)^{\otimes m}\,dU
       &=\frac{\Pi_{\mathrm{sym}}^{(m)}}{m+1}.
\end{align*}
This is the qubit case of~\cite[Proposition~6]{Harrow13}.
Consequently,
\begin{align*}
    \mathcal C(\sigma_C)
       &=\frac{(\Pi_{\mathrm{sym}}^{(m)})_{\mathcal E_C}}{m+1}
          \otimes
          \frac{I_{\mathcal E\setminus\mathcal E_C}}{2^{M-m}}
        =\frac{\Gamma_C}{N\tau_m}.
\end{align*}
Averaging over the uniformly chosen planted set $C$ proves
Eq.~\eqref{eq:specht-projector-mixture}. Taking the Specht
marginal then gives Eq.~\eqref{eq:specht-planted-from-omega}.
\end{proof}

We now extract the reduced states from this description.
All tensor products below follow the physical register order
$\mathsf L,\mathsf M,\mathsf S$. Operators within a block
are extended by zero outside the valid basis positions for
that label.

Since $\Omega_k$ is a linear combination of edge permutations,
it acts within each block only on the Specht factor. Write
\begin{align}
    U_{\mathrm{Sch}}\Omega_k U_{\mathrm{Sch}}^\dagger
       &=\sum_i\ketbra{i}{i}_{\mathsf L}\otimes
                    I_{r_i}\otimes B_{k,i}.
    \label{eq:omega-schur-blocks}
\end{align}
For the original input, define
\begin{align*}
    A_{b,i}
       &:=\operatorname{Tr}_{\mathsf M}
          \bigl(\bra{i}_{\mathsf L}\widetilde\sigma_b
                            \ket{i}_{\mathsf L}\bigr),
    &
    \operatorname{Tr}A_{b,i}&=p_b(i).
\end{align*}
These are subnormalized Specht states: $A_{b,i}$ includes
the probability of obtaining label $i$. When $p_b(i)>0$,
the conditional state given that label is $A_{b,i}/p_b(i)$. The collective $\mathrm{SU}(2)$ representations on the
multiplicity spaces are irreducible and mutually inequivalent.
Schur's lemma therefore gives~\cite{BRS07}
\begin{align*}
    U_{\mathrm{Sch}}\mathcal C(\sigma_b)U_{\mathrm{Sch}}^\dagger
       &=\sum_i\ketbra{i}{i}_{\mathsf L}\otimes
          \frac{I_{r_i}}{r_i}\otimes A_{b,i}.
\end{align*}
This formula makes the preserved information explicit:
the label probabilities and the corresponding Specht states
are unchanged, while multiplicity is replaced by
$I_{r_i}/r_i$. Comparing this expression with
Proposition~\ref{prop:specht-projector-mixture} and
Eq.~\eqref{eq:omega-schur-blocks} yields
\begin{align*}
    A_{0,i}&=\frac{r_i}{N}I_{d_i},
    \\
    A_{1,i}&=\frac{r_i}{N\tau_m}B_{k,i}.
\end{align*}
The factor $r_i$ comes from tracing the identity on the
multiplicity space.

If the measured label is retained, the output is
\begin{align*}
    \sum_i\ketbra{i}{i}_{\mathsf L}\otimes A_{b,i}.
\end{align*}
If the label is discarded, its contributions are instead
added on the common physical Specht register:
\begin{align}
    \sigma_b^{\mathrm{Sp}}&=\sum_i A_{b,i}.
    \label{eq:specht-only-exact-states}
\end{align}
The zero-padding convention is essential here: the operators
in this sum come from Specht spaces of different dimensions. The next subsection identifies the operator components
preserved by collective averaging. These describe the
information available when the measured label and Specht
register are retained together. With the label available,
a measurement may depend on $i$. On the Specht register alone,
it must act on the sum in
Eq.~\eqref{eq:specht-only-exact-states}, using a single effect
as in Eq.~\eqref{eq:specht-common-readout}.

\paragraph{Additional Analysis.}
Appendix~\ref{sec:operator-symmetry} analyzes the readout
that records the representation label, keeps the Specht
register, and discards multiplicity. Using collective
averaging, we show that its output still determines every
even-order edge-sign correlation, while odd-order components
vanish. Further edge averaging removes arrangement information
and leaves only the weak-label probabilities. We next consider the full group of permutations exchanging
graphs with the same clique count. Its representation label
alone permits near-perfect detection in the regime above.

\providecommand{\bigtimes}{\mathop{\vcenter{\hbox{\Large$\times$}}}\displaylimits}

\subsection{Level-set symmetries}
\label{sec:level-set-symmetries}
\label{sec:outcome-measurements}

Changing the permutation action changes what an isotypic label
can reveal. For the full group preserving clique count, all but
one dimension of the planted state's kernel form a single
isotypic component. Its projector has almost full rank when the
planted-to-null rank ratio vanishes, yet annihilates the planted
state. We establish this relation, then ask what
smaller actions could make the resulting signal accessible
to an efficient measurement.

\subsubsection{Clique-count levels and the planted rank ratio}
\label{sec:outcome-symmetries}

Assume $3\leqslant k\leqslant n$, and write $M=\binom n2$,
$N=2^M$, $m=\binom k2$, and $\Omega=\{0,1\}^M$.
The number of $k$-cliques in a graph is
\begin{align*}
    c_k(x)&:=\sum_{C\in\binom{[n]}k}
                    \prod_{e\in\binom C2}x_e,
    &\mu_k&:=\mathbb E_{P_0}[c_k]=\binom nk2^{-m}.
\end{align*}
For a fixed planted set $C$, the conditional probability of $x$
is $2^{-M+m}$ if $C$ is a clique in $x$, and zero otherwise.
Averaging over $C$ therefore gives
\begin{align*}
    P_0(x)&=\frac1N,\qquad
    P_1(x)=\frac{c_k(x)}{N\mu_k},\qquad
    \frac{P_1(x)}{P_0(x)}=\frac{c_k(x)}{\mu_k}.
\end{align*}
Thus clique count is a sufficient statistic. Its attainable
values and level sets are
\begin{align*}
    \mathcal J&:=\{c_k(x):x\in\Omega\},&
    \mathcal X_j&:=\{x:c_k(x)=j\},&
    N_j&:=|\mathcal X_j|.
\end{align*}
Since $P_0$ is uniform, the full common outcome-permutation group
consists of arbitrary, independent permutations within these levels:
\begin{align*}
    G&:=\operatorname{Stab}_{S_\Omega}(P_1)
       \cong\prod_{j\in\mathcal J}S_{\mathcal X_j}.
\end{align*}
This is a Young subgroup of $S_\Omega\cong S_N$, acting by
$U(\tau)\ket{x}=\ket{\tau x}$. It contains vertex relabelings
and can also exchange nonisomorphic graphs with the same clique
count. Its definition gives a statistical benchmark, without
an efficient implementation of the action. Let $E_j:=\sum_{x\in\mathcal X_j}\ketbra{x}{x}$.
The mixed graph states are
\begin{align*}
    \sigma_0&=\frac1N\sum_jE_j=\frac IN,&
    \sigma_1&=\frac1{N\mu_k}\sum_jjE_j,
\end{align*}
where sums over $j$ run over $\mathcal J$.

\begin{lemma}[The planted-to-null rank ratio]
\label{lem:outcome-rank-ratio}
Let $\mathcal A_k:=\{x:c_k(x)>0\}$. Then
\begin{align*}
    \operatorname{rank}\sigma_0&=N,\qquad
    \operatorname{rank}\sigma_1=N-N_0=NP_0(\mathcal A_k),
    \\
    2^{-m}&\leqslant
       \frac{\operatorname{rank}\sigma_1}
            {\operatorname{rank}\sigma_0}
       \leqslant\mu_k.
\end{align*}
For $k=\lfloor n^{1/2-\varepsilon}\rfloor$ with fixed
$0<\varepsilon<1/2$, the rank ratio is $2^{-(1+o(1))m}$.
\end{lemma}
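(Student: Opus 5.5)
The plan reads both ranks off the diagonal form of the mixed states, then estimates the fraction of graphs containing a $k$-clique from both sides.

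\emph{Ranks.} Since $\sigma_0=I/N$, we get $\operatorname{rank}\sigma_0=N$ at once. Both states are diagonal in the computational basis, and
\begin{align*}
    \sigma_1=\frac1{N\mu_k}\sum_j jE_j .
\end{align*}
So the rank of $\sigma_1$ is the number of basis vectors $\ket{x}$ with $c_k(x)>0$, namely
\begin{align*}
    |\mathcal A_k| = N-N_0 = N P_0(\mathcal A_k).
\end{align*}
Dividing by $N$, the rank ratio is exactly $P_0(\mathcal A_k)$. This reduces the lemma to two-sided bounds on the probability that $G(n,1/2)$ contains a $k$-clique.

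\emph{Two-sided bounds.} For the upper bound I will use Markov's inequality together with the union bound from Proposition~\ref{prop:clique-qsample-detection}:
\begin{align*}
    P_0(\mathcal A_k)\leqslant \mathbb E_{P_0}[c_k]=\mu_k .
\end{align*}
For the lower bound, fix one $k$-set $C$. The event that all $m$ edges of $\binom C2$ are present has probability $2^{-m}$ and implies $\mathcal A_k$. Hence $P_0(\mathcal A_k)\geqslant 2^{-m}$.

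\emph{Asymptotics.} For $k=\lfloor n^{1/2-\varepsilon}\rfloor$, the lower bound already gives ratio $\geqslant 2^{-m}$. For the upper bound I need $\log_2\mu_k=-m+o(m)$. Since $\log_2\binom nk\leqslant k\log_2 n$ and $m=\Theta(k^2)$, it suffices that $k\log_2 n=o(k^2)$, i.e.\ $\log n=o(k)$. This holds because $k$ grows polynomially in $n$. Then $-m\leqslant \log_2(\text{ratio})\leqslant -m+k\log_2 n=-(1-o(1))m$, so the ratio is $2^{-(1+o(1))m}$.

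The main point needing care is this last step: showing the combinatorial factor $\binom nk$ is negligible on the $2^{m}$ scale. It is immediate once $k\gg\log n$ is noted. The only other thing to check is the rank identification, which uses that $\sigma_1$ is diagonal with entries proportional to $c_k(x)$.
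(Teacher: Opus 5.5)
Your proposal is correct and follows essentially the same route as the paper. You read the rank off the diagonal entries proportional to $c_k(x)$, bound $P_0(\mathcal A_k)$ below by one fixed candidate clique ($2^{-m}$) and above by the union bound/first moment $\mu_k$, and obtain the asymptotics from $\log_2\binom nk\leqslant k\log_2 n=o(m)$ at $k=\lfloor n^{1/2-\varepsilon}\rfloor$.
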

\begin{proof}
The planted state has a positive diagonal entry exactly on
$\mathcal A_k$. Under the null, a fixed candidate is a clique
with probability $2^{-m}$, while a union bound over all candidates
gives $P_0(\mathcal A_k)\leqslant\mu_k$. At the stated clique
size, $\log_2\binom nk=O(k\log n)=o(m)$, giving the asymptotic ratio.
\end{proof}

The planted rank can still be large in absolute terms: it is at
least $2^{M-m}$. The small ratio means that its kernel occupies
almost the entire graph space. We now identify a single isotypic
component occupying all but one dimension of this kernel.

\subsubsection{An isotypic component of almost full rank}

On each level, define
\begin{align*}
    \mathcal V_j&:=\operatorname{ran}E_j,&
    \ket{u_j}&:=\frac1{\sqrt{N_j}}
                    \sum_{x\in\mathcal X_j}\ket{x},
    \\
    W_j&:=\mathcal V_j\cap\ket{u_j}^{\perp},&
    \mathcal V_j&=\operatorname{span}\{\ket{u_j}\}\oplus W_j.
\end{align*}
The uniform vector is fixed by $G$. Its orthogonal complement
$W_j$, of dimension $N_j-1$, is the irreducible standard
representation of $S_{\mathcal X_j}$ when $N_j\geqslant2$~\cite{Sagan01}.
These nonzero $W_j$ are pairwise inequivalent as representations
of $G$: different factors act nontrivially on them. All uniform
lines, however, carry the same trivial representation. Hence
\begin{align*}
    \mathcal H&\cong\mathbf1^{\oplus|\mathcal J|}
          \oplus\bigoplus_{j:N_j\geqslant2}W_j.
\end{align*}
The isotypic measurement therefore has projectors
$\Pi_j^G=E_j-\ketbra{u_j}{u_j}$ for $N_j\geqslant2$, together
with $\Pi_{\mathrm{triv}}^G=\sum_j\ketbra{u_j}{u_j}$.
In particular, the clique-free level gives the following projector.

\begin{proposition}[An isotypic projector annihilating the planted state]
\label{prop:outcome-kernel-projector}
\label{prop:outcome-weak-optimal}
The projector $\Pi_0^G=E_0-\ketbra{u_0}{u_0}$ is a single
isotypic projector satisfying
\begin{align*}
    \Pi_0^G\sigma_1&=0,&
    \operatorname{rank}\Pi_0^G
       &=N-\operatorname{rank}\sigma_1-1.
\end{align*}
Its probability under the null is its relative rank:
\begin{align*}
    \operatorname{Tr}(\Pi_0^G\sigma_0)
       &=1-\frac{\operatorname{rank}\sigma_1+1}{N}
        \geqslant1-\mu_k-\frac1N.
\end{align*}
Thus, whenever $\mu_k=o(1)$, this projector has rank $(1-o(1))N$
and occurs with probability $1-o(1)$ under the null, but never
under planting. If $\mu_k<1$, all other label outcomes favor planting.
\end{proposition}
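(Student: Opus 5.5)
The proof reads everything off the diagonal structure of the two states together with the decomposition $\mathcal V_0=\operatorname{span}\{\ket{u_0}\}\oplus W_0$ established just above.

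\emph{Isotypic claim.} By that decomposition, $\Pi_0^G=E_0-\ketbra{u_0}{u_0}$ is the projector onto $W_0$. For $N_0\geqslant2$, $W_0$ is the standard irrep of the factor $S_{\mathcal X_0}$ and is inequivalent to every other summand. Hence it is a full isotypic component, and the projector is a single isotypic projector. The case $N_0=1$, where $W_0=0$, is degenerate and excluded by $\mu_k<1$ in the regime of interest. I will note this rather than dwell on it.

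\emph{Annihilation and rank.} Since $\sigma_1=\frac1{N\mu_k}\sum_j jE_j$ and the $E_j$ are orthogonal projectors, we have $E_0\sigma_1=0$, because the $j=0$ term carries weight zero. The vector $\ket{u_0}$ lies in $\operatorname{ran}E_0$, so $\ketbra{u_0}{u_0}\sigma_1=\ketbra{u_0}{u_0}E_0\sigma_1=0$. Together these give $\Pi_0^G\sigma_1=0$. For the rank, $\operatorname{rank}\Pi_0^G=N_0-1$. Lemma~\ref{lem:outcome-rank-ratio} gives $N_0=N-\operatorname{rank}\sigma_1$, which yields the stated rank formula.

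\emph{Null probability.} Since $\sigma_0=I/N$, we get $\operatorname{Tr}(\Pi_0^G\sigma_0)=\operatorname{rank}\Pi_0^G/N=1-(\operatorname{rank}\sigma_1+1)/N$. Lemma~\ref{lem:outcome-rank-ratio} gives $\operatorname{rank}\sigma_1/N\leqslant\mu_k$, which yields the bound. When $\mu_k=o(1)$, both the relative rank and the null probability are $1-o(1)$, while the planted probability is $\operatorname{Tr}(\Pi_0^G\sigma_1)=0$.

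\emph{Other outcomes.} This is the only step needing care. The remaining outcomes are $\Pi_j^G$ for $j\geqslant1$ and $\Pi_{\mathrm{triv}}^G$. For $j\geqslant1$, the projector $\Pi_j^G$ lies under $E_j$, on which $\sigma_1$ acts as $j/(N\mu_k)$ and $\sigma_0$ as $1/N$. The likelihood ratio is therefore $j/\mu_k>1$ whenever $\mu_k<1$.

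For the trivial outcome, compute
\[
\operatorname{Tr}(\Pi_{\mathrm{triv}}^G\sigma_1)=\frac1{N\mu_k}\sum_{j\in\mathcal J} j=\frac{\sum_{j\in\mathcal J} j}{N\mu_k}
\]
and $\operatorname{Tr}(\Pi_{\mathrm{triv}}^G\sigma_0)=|\mathcal J|/N$. We must show $\sum_{j\in\mathcal J}j>\mu_k|\mathcal J|$. Since $\mathcal J\ni 0$ and contains some $j\geqslant1$, with all nonzero $j$ at least $1$, we have $\sum_{j\in\mathcal J} j\geqslant|\mathcal J|-1$. Comparing with $\mu_k|\mathcal J|$ then needs $|\mathcal J|-1>\mu_k|\mathcal J|$, which holds if $\mu_k<1-1/|\mathcal J|$.

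I would strengthen this using $\mathcal J\supseteq\{0,1,\dots\}$. The complete graph gives $j=\binom nk$, and the elements of $\mathcal J$ are distinct nonnegative integers. Hence $\sum_{j\in\mathcal J} j\geqslant\binom{|\mathcal J|}{2}$, so the ratio is at least $(|\mathcal J|-1)/(2\mu_k)>1$ once $|\mathcal J|\geqslant3$ and $\mu_k<1$. Since $0$, $1$ and $\binom nk$ are distinct values for $k\geqslant3$, this settles the claim.
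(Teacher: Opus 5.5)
Your treatment of the isotypic property, $\Pi_0^G\sigma_1=0$, the rank formula, and the null probability follows the paper's argument. Two small points need fixing.

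First, do not defer the case $N_0=1$ to the regime $\mu_k<1$. The isotypic claim is asserted for all $3\leqslant k\leqslant n$, and you give no argument that $\mu_k<1$ excludes $N_0=1$. The paper settles it in one line: for $k\geqslant3$ the empty graph and every one-edge graph are clique-free. Hence $N_0\geqslant M+1\geqslant2$ and $W_0\neq0$ unconditionally.

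Second, in the trivial-outcome step, $0$, $1$ and $\binom nk$ are not distinct when $k=n$. There $\binom nn=1$ and $\mathcal J=\{0,1\}$. Your bound still works in that case, since $\mu_n=2^{-M}$ gives $(|\mathcal J|-1)/(2\mu_k)=2^{M-1}>1$, but you need to say so.

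That final step also differs from the paper's route. You use only that the levels are distinct nonnegative integers, so $\sum_{j\in\mathcal J}j\geqslant\binom{|\mathcal J|}2$. This requires exhibiting $1\in\mathcal J$ and having $|\mathcal J|\geqslant3$.

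The paper instead writes the trivial-outcome probability difference as $N^{-1}\sum_{j\in\mathcal J}(j/\mu_k-1)$ and keeps two terms:
\begin{itemize}
\item $j=0$ contributes $-1$;
\item the complete-graph level $j=\binom nk$ contributes $\binom nk/\mu_k-1=2^m-1$.
\end{itemize}
Every other term is positive when $\mu_k<1$. This gives the explicit gap $(2^m-2)/N>0$ uniformly over $3\leqslant k\leqslant n$, including $k=n$, without counting levels or locating the value $1$.

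Your distinctness argument is valid once the $k=n$ case is patched. The paper's argument is shorter and quantitative, because the top level has likelihood ratio exactly $2^m$.
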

\begin{proof}
The planted kernel is exactly
$\mathcal V_0=\operatorname{span}\{\ket{u_0}\}\oplus W_0$.
Since $k\geqslant3$, the empty graph and all one-edge graphs lie
in $\mathcal X_0$, so $N_0\geqslant M+1$ and $W_0$ is nonzero.
It is a full isotypic component by the decomposition above.
Its projector annihilates $\sigma_1$ and has rank
$N_0-1=N-\operatorname{rank}\sigma_1-1$.
Finally, $\sigma_0=I/N$ and Lemma~\ref{lem:outcome-rank-ratio}
give the null probability. For the final claim, write $p_b^G$ for the label law. We have
\begin{align*}
    p_0^G(j)&=\frac{N_j-1}{N},&
    p_1^G(j)&=\frac{j(N_j-1)}{N\mu_k}.
\end{align*}
When $\mu_k<1$, every nonzero level favors planting. So does the pooled trivial
outcome: its probability difference is
$N^{-1}\sum_{j\in\mathcal J}(j/\mu_k-1)\geqslant(2^m-2)/N$,
using the terms $j=0$ and $j=\binom nk$ and positivity of the rest.
\end{proof}

Accepting the null on this outcome and planting otherwise gives
equal-prior success probability
\begin{align*}
    P_{\mathrm{succ}}
       &=1-\frac{\operatorname{rank}\sigma_1+1}{2N}.
\end{align*}
When $\mu_k<1$, only $W_0$ favors the null among the labels,
and only clique-free graphs favor the null among the original
outcomes. Thus the exact label distance is
\begin{align*}
    D_G:=\operatorname{TV}(p_1^G,p_0^G)
       &=\frac{N_0-1}{N}
        =\operatorname{TV}(P_1,P_0)-\frac1N.
\end{align*}
The lost dimension is the uniform vector $\ket{u_0}$, pooled
with the uniform vectors of the other levels. For fixed $\varepsilon>0$ and
$(2+\varepsilon)\log_2n\leqslant k=o(\sqrt n)$, we have
$\mu_k=o(1)$. On the same mixed inputs, the two actions therefore give
\begin{align*}
    D_{\mathsf L}&=o(1),&D_G&=1-o(1).
\end{align*}
The full-group label exposes the support separation that the
edge-permutation label misses. Efficiently measuring it is a
separate question.

\subsubsection{From the projector to an efficient distinguisher}
\label{sec:outcome-implementation-target}

The complementary effect accepts planting:
\begin{align*}
    F_G&:=I-\Pi_0^G,
    \\
    \operatorname{Tr}(F_G\sigma_1)&=1,&
    \operatorname{Tr}(F_G\sigma_0)&=P_0(\mathcal A_k)+\frac1N.
\end{align*}
Implementing this readout on \emph{every} graph would solve clique
existence, since
\begin{align*}
    \bra{x}\Pi_0^G\ket{x}
       &=\begin{cases}
            1-1/N_0,&c_k(x)=0,\\
            0,&c_k(x)>0.
          \end{cases}
\end{align*}
The error is at most $1/N_0\leqslant1/(M+1)$.
A uniform polynomial-size circuit family implementing this
readout for all graphs and clique sizes, even with sufficiently
small constant additive error, would therefore imply
$\mathrm{NP}\subseteq\mathrm{BQP}$~\cite{Karp72}.
Constructions that explicitly store group elements also face
a register-size obstruction:
\begin{align*}
    \log_2|G|&\geqslant\log_2(N_0!)
       =\Omega(M2^M)\qquad\text{when }\mu_k=o(1).
\end{align*}

Our detection task requires only an average guarantee.
An efficient measurement with planting-acceptance probability
$a(x)$ need only satisfy
\begin{align*}
    \mathbb E_{P_1}[a(X)]-\mathbb E_{P_0}[a(X)]
       &\geqslant\eta
\end{align*}
for a constant $\eta>0$, giving success at least $1/2+\eta/2$.
The preceding obstacles do not rule this out. We next consider
smaller actions whose labels might supply such a gap.

\subsubsection{Which smaller actions retain an informative label?}
\label{sec:outcome-subgroup-criterion}

A subgroup $H\leqslant G$ preserves clique-count levels, but
the same irreducible type can now occur in several levels.
Its labels need not distinguish the planted support from its
kernel. Their probabilities are determined by point stabilizers.
Write $H_x:=\{h\in H:hx=x\}$, and let $V_\lambda$ be an
irreducible $H$-representation with dimension $d_\lambda$ and
character $\chi_\lambda$. Denote its $H_x$-fixed subspace by
$V_\lambda^{H_x}$.

\begin{lemma}[Isotypic outcome probabilities from stabilizers]
\label{lem:outcome-stabilizers}
For the action $U(h)\ket{x}=\ket{hx}$ of a finite group $H$,
the isotypic measurement on $\ket{x}$ returns $\lambda$ with probability
\begin{align*}
    p^H(\lambda\mid x)
       &=\frac{d_\lambda}{|H|}
           \sum_{h\in H_x}\overline{\chi_\lambda(h)}
        =\frac{d_\lambda}{|H\cdot x|}
           \dim V_\lambda^{H_x}.
\end{align*}
\end{lemma}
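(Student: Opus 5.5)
The plan is to write the isotypic projector explicitly as a group average and evaluate its expectation on a basis state. For a finite group acting unitarily, the isotypic projector is
\begin{align*}
    \Pi_\lambda&=\frac{d_\lambda}{|H|}\sum_{h\in H}\overline{\chi_\lambda(h)}\,U(h).
\end{align*}
This is the standard central-idempotent formula; it follows from the Schur orthogonality relations for characters. Then
\begin{align*}
    p^H(\lambda\mid x)=\bra{x}\Pi_\lambda\ket{x}
    =\frac{d_\lambda}{|H|}\sum_{h\in H}\overline{\chi_\lambda(h)}\braket{x|hx}.
\end{align*}
Because $U(h)$ permutes computational basis states, $\braket{x|hx}=\mathbf 1[hx=x]$. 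The sum therefore restricts to the stabilizer $H_x$, which gives the first equality.

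For the second equality, the plan is to recognize the restricted sum as a projection onto fixed vectors. The operator
\begin{align*}
    P_x&=\frac1{|H_x|}\sum_{h\in H_x}\rho_\lambda(h)
\end{align*}
is the orthogonal projector onto $V_\lambda^{H_x}$. Its trace gives
\begin{align*}
    \dim V_\lambda^{H_x}=\frac1{|H_x|}\sum_{h\in H_x}\chi_\lambda(h).
\end{align*}
The sum in the first equality involves $\overline{\chi_\lambda}$ rather than $\chi_\lambda$. However, $H_x$ is a subgroup, so $h\mapsto h^{-1}$ is a bijection of $H_x$. Combined with $\overline{\chi_\lambda(h)}=\chi_\lambda(h^{-1})$, this shows the sum equals its conjugate. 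It is therefore real and equal to $|H_x|\dim V_\lambda^{H_x}$.

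Finally, the orbit-stabilizer theorem gives $|H|/|H_x|=|H\cdot x|$. Substituting yields $d_\lambda\dim V_\lambda^{H_x}/|H\cdot x|$.

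The only point requiring care is the projector formula. The paper uses "isotypic measurement" informally, so it has to be tied to the central idempotent. I would state that the $\Pi_\lambda$ are orthogonal projectors summing to $I$, each acting as the identity on the $\lambda$-isotypic component and as zero on the others. This is standard (Serre, or \cite{Sagan01}), so I would cite it rather than reprove it. Everything else is routine.

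As a sanity check, one can rederive the result via Frobenius reciprocity. The cyclic subspace generated by $\ket{x}$ is isomorphic to the permutation representation $\mathbb C[H/H_x]$, in which $\lambda$ appears with multiplicity $\dim V_\lambda^{H_x}$. The vector $\ket{x}$ is the normalized coset vector, so it contributes its weight $d_\lambda\dim V_\lambda^{H_x}/|H\cdot x|$. Summing over $\lambda$ gives $\sum_\lambda d_\lambda\dim V_\lambda^{H_x}=|H/H_x|$, confirming the probabilities sum to one.
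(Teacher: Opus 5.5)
Your proposal is correct and follows the paper's proof. Like the paper, you expand the character projector and keep only the $h\in H_x$ terms in $\bra{x}\Pi_\lambda\ket{x}$. You then identify the stabilizer character sum with $|H_x|\dim V_\lambda^{H_x}$ and finish with orbit--stabilizer. You also check that $\overline{\chi_\lambda(h)}=\chi_\lambda(h^{-1})$ and closure of $H_x$ under inversion let the conjugated sum be replaced by the unconjugated one; the paper leaves this detail implicit, and you handle it correctly.
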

\begin{proof}
In the character projector
$\Pi_\lambda^H=(d_\lambda/|H|)
\sum_h\overline{\chi_\lambda(h)}U(h)$,
only $h\in H_x$ contribute to the diagonal entry at $x$.
Their character average is $\dim V_\lambda^{H_x}$.
The second equality follows from $|H|/|H_x|=|H\cdot x|$.
\end{proof}

By Frobenius reciprocity, $\dim V_\lambda^{H_x}$ is also the
multiplicity of $V_\lambda$ in the orbit representation
$\operatorname{Ind}_{H_x}^H\mathbf1$; the formula is the corresponding
induction-sampling law~\cite[Eq.~(17)]{LH25}.
Averaging over the graph ensembles gives
\begin{align*}
    p_b^H(\lambda)
       &=\mathbb E_{X\sim P_b}\left[
           \frac{d_\lambda}{|H\cdot X|}
              \dim V_\lambda^{H_X}\right].
\end{align*}
One sample from this label distribution permits constant-advantage
detection exactly when $\operatorname{TV}(p_1^H,p_0^H)$ is bounded
below by a positive constant. An algorithm additionally needs an
efficient label measurement and decision rule. For example, if $H_x=\{1\}$, then
$p^H(\lambda\mid x)=d_\lambda^2/|H|$, independently of the orbit.
For vertex relabelings, $H_x$ is the graph's automorphism group.
All asymmetric graphs therefore give the same label law,
regardless of clique count. The distinguishing distance is at
most the sum, under the two hypotheses, of the probabilities of
a nontrivial automorphism.  The full group places almost the entire planted kernel in one
isotypic component. The computational question is whether a
smaller, efficiently implementable action can expose a useful
part of this separation through its labels.

\newpage
\section{Conclusion}
\label{sec:conclusion}

Motivated by the question of whether the planted-clique
computational--statistical gap persists under quantum computation,
we studied how quantum encodings and symmetry-adapted measurements
preserve or lose the information needed for detection.
The input is one classical graph, and we focus on
$k=\lfloor n^{1/2-\varepsilon}\rfloor$ for fixed
$0<\varepsilon<1/2$. In this regime the null and planted ensembles
are statistically almost perfectly distinguishable, while
polynomial-time classical detection is conjectured impossible.

Our results identify both losses of information and unresolved
measurement problems. Binary phase states provide a compact baseline
with a substantial copy requirement, even under unrestricted joint
measurements. At logarithmic clique sizes above the fixed-margin
threshold, the phase-copy lower and upper bounds match at
$\Theta(n^2)$. On the full graph register, weak Schur sampling has vanishing
distinguishing advantage, whereas retaining the label and Specht
register after discarding the multiplicity register preserves
near-perfect distinguishability. This remains true even if the label
is also discarded. The efficient Schur transform therefore prepares
informative reduced states, and our explicit formulas make their
discrimination a concrete algorithmic question. 

Lastly, the level-set symmetry analysis gives a complementary geometric picture: The planted mixture $\sigma_1$ has rank $o(N)$, while the
null state $\sigma_0$ has full rank. The full level-set
group exposes this separation through its representation
labels, giving near-perfect detection. Whether smaller,
efficiently implementable actions retain useful label
statistics remains open; we discuss a stabilizer formula helps
assess this possibility.

\subsection{Open directions}

\paragraph{Efficient measurements.}
The first target is an efficient two-outcome measurement on the
retained label and Specht register, allowing the readout on the
Specht register to depend on the label. The corresponding
collectively averaged states satisfy
\begin{align*}
    \mathcal C(\sigma_0)&=\frac{I}{N}, &
    \mathcal C(\sigma_1)&=\frac{\Omega_k}{N\tau_m},
\end{align*}
and their optimal planting-acceptance effect is
 $E_\star=\mathbf 1_{(\tau_m,\infty)}(\Omega_k)$.
It suffices to find a uniform polynomial-size implementation of a
collectively invariant effect $E=E_{n,k}$ satisfying
\begin{align*}
    0&\leqslant E\leqslant I, &
    \operatorname{Tr}\!\left[
       E\bigl(\mathcal C(\sigma_1)-\mathcal C(\sigma_0)\bigr)
    \right]\geqslant\eta,
\end{align*}
for some constant $\eta>0$ independent of $n$. This gives equal-prior
success probability at least $1/2+\eta/2$ and does not require
implementing $E_\star$ exactly. The retained-state formulas and the
even-degree components identified in
Subsection~\ref{sec:operator-symmetry} provide concrete starting
points for constructing such an effect.

A stronger question is whether an efficient distinguisher can also
dispense with the label. Corollary~\ref{thm:retain-specht} guarantees
near-perfect distinguishability in this experiment, and
Proposition~\ref{prop:specht-projector-mixture} specifies the states.
Here the same effect $F$ must act on the common retained Specht
register independently of the label, as in
Eq.~\eqref{eq:specht-only-effect}.

\paragraph{Efficient and informative subgroup actions.}
A second target is an explicitly specified subgroup of the
clique-count-preserving group $G$ whose isotypic measurement has a
constant distinguishing gap. Lemma~\ref{lem:outcome-stabilizers}
expresses its label probabilities through graph stabilizers, so a
proposed action can first be assessed for statistical signal.
The next requirements are an efficient measurement of the label
and an efficient decision rule using it. A useful construction
must establish all three properties; a small group or efficient
generators alone do not establish them. The distinguisher needs to
succeed on the planted ensembles and need not reproduce the
full-group measurement on every graph. In particular, the
worst-case difficulty of that full measurement leaves room for
an efficient average-case distinguisher using a different action.

\paragraph{Measurement complexity and low degree.}
Classical low-degree likelihood-ratio bounds quantify the signal
accessible to normalized polynomial statistics of bounded degree and provide
evidence for computational barriers~\cite{KWB19}. Can analogous
obstructions be formulated for quantum measurements on the
encodings studied here? For $t$ phase copies, the acceptance
probability of any fixed measurement has degree at most $2t$ in
the edge signs, giving a direct connection to polynomial tests.
For the retained Schur states, a natural question is whether one
can define a hierarchy of effects on the retained register and bound the
largest distinguishing gap attainable at each level. Such a
framework could identify which forms of measurement complexity
are necessary to access the retained signal. The challenge is to
connect these restrictions on effects to an explicit model of
quantum computation: polynomial circuit size alone does not imply
low degree.

Our results are structural and information-theoretic and do not
settle whether the computational--statistical gap survives quantum
computation. A uniform polynomial-time quantum algorithm achieving
constant advantage at $k=\lfloor n^{1/2-\varepsilon}\rfloor$ for fixed
$0<\varepsilon<1/2$ would enter the conjectured classical hard regime
and, under the corresponding classical planted-clique conjecture,
establish a quantum advantage with the same classical input
available to both models. It would also contradict the corresponding
constant-advantage quantum planted-clique hardness
assertion~\cite{SSW26}. This would narrow the gap; closing it down to
the logarithmic statistical threshold would require stronger
algorithmic results.

\subsection{Statement on AI usage}
We used Claude 5.1 Fable and ChatGPT 5.6 Sol for assistance with
exposition, proofs, argument checking, and editing. The authors
are responsible for the results and their verification.
The idea to study planted clique as a target for quantum
algorithms is ours. Majority of proofs in
Section~\ref{sec:info} were developed without generative AI,
except that GPT-5.6 suggested the Poissonization step in the
phase-copy lower bound. The ideas in Section~\ref{sec:symmetries}
were also primarily ours; GPT-5.6 helped organize their exposition and fill in details of some of the intuitive arguments.
We made extensive use of ChatGPT-6 Astra at Max or Ultra settings
during the writeup, with author guidance throughout.

\section*{Acknowledgments}
We thank Tselil Schramm and Sergey Bravyi for helpful discussions. 
This material is based upon work partially supported by the National Science Foundation under award No. 2016245 and 2440805, by the Air Force Office of Scientific Research under grant agreement FA9550-21-1-0392, and by the Shoucheng Zhang Graduate Fellowship.

\clearpage
\appendix

\section{Separate Phase Measurements and Computational Access}
\label{sec:reconstruction}

We use separate measurements to recover relative edge signs.
Recovering the whole graph up to complementation takes
$O(M\log M)$ copies, while detection above the logarithmic clique
threshold takes only $O(M)$ copies. Throughout, all copies encode
the same sampled graph, and $M=\binom n2$ and $m=\binom k2$.
The measurements and sign recovery are efficient; the final
classical search over vertex sets need not be.

\subsection{Reconstruction up to complementation}

The basic measurement is the pair-parity readout from Hidden
Matching~\cite{BJK08}; see also~\cite[Sec.~4]{KR06}.
For even $M$, choose a uniformly random perfect matching of the
edge coordinates and measure one copy in the basis
\begin{align*}
    \left\{\frac{\ket i+\ket j}{\sqrt2},
           \frac{\ket i-\ket j}{\sqrt2}:
           \{i,j\}\text{ is a matched pair}\right\}.
\end{align*}
The two outcomes for a matched pair have probabilities
$(s_i+s_j)^2/(2M)$ and $(s_i-s_j)^2/(2M)$.
Exactly one is nonzero, so the outcome reveals $s_i s_j$
without error. Each matched pair has probability $2/M$;
averaging over the matching makes the observed pair uniform
among all $\binom M2$ pairs, independently of $G$.

For odd $M\geqslant3$, choose one uniformly random unmatched
coordinate and a uniformly random perfect matching of the rest.
Include the unmatched coordinate's basis vector in the measurement
and discard singleton outcomes, which have probability $1/M$.
Each retained pair is again uniform. Repeating with fresh,
independent matchings therefore gives independent uniform pair
labels, together with their exact relative signs. For $M\geqslant3$,
each copy supplies a pair with probability at least $2/3$.
Consequently, $4L$ copies supply at least $L$ pairs except with
probability at most $e^{-L/2}$, by a Chernoff bound. For even $M$,
every copy supplies a pair.

Form a constraint graph whose vertices are the $M$ edge
coordinates and whose edges are the observed pairs. Label each
observed pair $\{i,j\}$ by $y_{ij}=s_i s_j$. In each connected
component, choose a reference coordinate $r$ and assign it the
recovered sign $\widehat s_r=+1$. Propagate this assignment along
a spanning tree using $\widehat s_j=y_{ij}\widehat s_i$.
Along a path $r=i_0,i_1,\ldots,i_\ell=i$, this gives
\begin{align*}
    \widehat s_i
       &=\prod_{a=1}^{\ell}y_{i_{a-1}i_a}
        =\prod_{a=1}^{\ell}s_{i_{a-1}}s_{i_a}
        =s_r s_i.
\end{align*}
The intermediate signs cancel in pairs. Thus all recovered
signs equal the true signs multiplied by the same unknown
factor $s_r$. Matching measurements have polynomial-size
circuits, and sign propagation is classical polynomial-time
computation.

\begin{proposition}[Reconstruction up to complementation]
\label{prop:separate-reconstruction}
For any fixed graph $G$, separate measurements on $O(M\log M)$
copies of $\ket{\psi_G}$ recover its edge signs up to a global
sign with probability $1-o(1)$. For any $0<\delta<1/2$,
$O(M(\log M+\log(1/\delta)))$ copies suffice to reduce the
failure probability to at most $\delta$.
\end{proposition}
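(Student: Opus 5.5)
The plan is to reduce reconstruction to connectivity of a random graph on the $M$ edge coordinates. By the preceding discussion, each pair-parity measurement yields, with probability at least $2/3$ (and always when $M$ is even), a uniformly random pair $\{i,j\}$ together with the exact value $s_is_j$. These draws are independent of $G$ and of one another. Sign propagation along a spanning tree then recovers $\widehat s_i=s_rs_i$ on each connected component. So the output equals the true sign vector up to one global sign exactly when the constraint graph is connected. The measurements are error-free, so the only failure mode is disconnection, and that event does not depend on $G$.

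First, I will convert copies into observed pairs. With $4L$ copies, the Chernoff bound stated above gives at least $L$ uniform independent pairs, except with probability $e^{-L/2}$. It therefore suffices to show that $L$ independent uniform pairs on $M$ vertices form a connected multigraph with probability at least $1-\delta/2$ once $L\geqslant C M(\ln M+\ln(1/\delta))$.

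For connectivity, I will use the standard cut union bound. If the graph is disconnected, some vertex set $S$ with $1\leqslant |S|=s\leqslant M/2$ receives no crossing pair. A single uniform pair crosses $(S,S^c)$ with probability
\begin{align*}
    \frac{s(M-s)}{\binom M2}\geqslant\frac{s}{M}.
\end{align*}
Hence
\begin{align*}
    \Pr[\text{disconnected}]
    \leqslant\sum_{s=1}^{\lfloor M/2\rfloor}\binom Ms\left(1-\frac sM\right)^L
    \leqslant\sum_{s\geqslant1}\exp\!\left(s\ln M-\frac{sL}{M}\right).
\end{align*}
Choosing $L=M(2\ln M+\ln(2/\delta))$ makes each term at most $(\delta/(2M))^s$, so the sum is at most $\delta/2$. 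Adding the Chernoff failure probability $e^{-L/2}\leqslant\delta/2$ gives total failure at most $\delta$, using $4L=O(M(\log M+\log(1/\delta)))$ copies. Taking $\delta=1/M$, or any $\delta=o(1)$ with $\log(1/\delta)=O(\log M)$, gives the first claim: $O(M\log M)$ copies and success $1-o(1)$.

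The main point to handle carefully is the connectivity bound, specifically that the union bound over cuts is not too lossy for small $s$. The estimate $\binom Ms\leqslant M^s$ combined with the crossing probability $s/M$ handles every $s$ uniformly, so the geometric sum closes with no separate case analysis. A smaller point is that for odd $M$ the number of retained pairs is random. This is handled by conditioning on having at least $L$ pairs: the retained pairs stay i.i.d. uniform regardless of how many there are, and extra pairs only help connectivity.
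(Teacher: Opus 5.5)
Your proposal is correct and is essentially the paper's proof. It reduces recovery to connectivity of the constraint graph, takes a union bound over cuts of size $s\leqslant M/2$ using crossing probability $\geqslant s/M$ and $\binom Ms\leqslant M^s$ to get a geometric series, and converts $4L$ copies into $L$ pairs by the Chernoff bound; the only difference is the constant in $L$, where the paper takes $L\geqslant M\ln(4M/\delta)$. One wording correction: you only need that connectivity implies correct recovery, since a disconnected constraint graph can still give the right signs by coincidence, so ``exactly when'' should read ``whenever''.
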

\begin{proof}
For $M=1$ the claim is immediate. Otherwise, it suffices to
make the constraint graph connected.
If it is disconnected, some component has size
$1\leqslant a\leqslant M/2$. A uniform pair crosses the boundary
of a fixed set of size $a$ with probability
\begin{align*}
    \frac{2a(M-a)}{M(M-1)}&\geqslant\frac aM.
\end{align*}
A union bound over these sets gives, after $L$ retained pairs,
\begin{align*}
    \Pr[\text{disconnected}]
       &\leqslant\sum_{a=1}^{\lfloor M/2\rfloor}
          \binom Ma e^{-La/M}
        \leqslant\sum_{a\geqslant1}(Me^{-L/M})^a.
\end{align*}
This is $o(1)$ for $L\geqslant cM\ln M$ with fixed $c>1$,
and at most $\delta/2$ for
$L\geqslant M\ln(4M/\delta)$. With this choice, the probability
of obtaining fewer than $L$ pairs from $4L$ copies is also
at most $\delta/2$. Once connected, the recovered signs specify
$G$ or its complement $\bar G$.
\end{proof}

Testing the recovered graph for a $k$-clique or a $k$-vertex
independent set always accepts planting when reconstruction
succeeds. Under the null, either structure occurs with
probability at most $2\binom nk2^{-m}$. Thus full reconstruction
already gives statistical detection whenever this bound is
$o(1)$; the next argument reduces the copy cost to $O(M)$.

\subsection{Detection from partial reconstruction}
\label{app:phase-linear-upper}

\begin{proof}[Proof of Proposition~\ref{prop:phase-linear-upper}]
Fix the threshold margin $\epsilon>0$. We will recover signs
on all but an $\eta$ fraction of the edge coordinates, where
$0<\eta<1/3$ is chosen, depending only on $\epsilon$, so that
\begin{align}
    b_\eta:=-\ln\frac{1+\eta}{2}
       &>\frac{2\ln2}{2+\epsilon}.
    \label{eq:phase-partial-recovery-fraction}
\end{align}
This is possible because $b_\eta\to\ln2$ as $\eta\downarrow0$.
The inequality ensures that the false-positive bound below
vanishes at the stated clique threshold.
Choose a constant $c_\eta$ satisfying
$2\eta(1-\eta)c_\eta>\ln2$, and set
$L=\lceil c_\eta M\rceil$. Measure $4L$ copies as above and
keep the first $L$ retained pairs. If fewer than $L$ pairs
are obtained, output the null hypothesis; this event has
probability at most $e^{-L/2}=o(1)$.

\emph{Recovering a large component.}
A cut with between $\eta M$ and $(1-\eta)M$ coordinates is
crossed by a uniform pair with probability at least
$2\eta(1-\eta)$. A union bound over at most $2^M$ cuts gives
\begin{align}
    \Pr[\text{some such cut has no observed crossing}]
       &\leqslant 2^M e^{-2\eta(1-\eta)L}=o(1).
    \label{eq:phase-large-component-bound}
\end{align}
If no component exceeds $(1-\eta)M$, a union of components
has size in $[\eta M,(1-\eta)M]$: either one component is
already in this range, or accumulate components smaller than
$\eta M$ until their total lies in $[\eta M,2\eta M)$.
Since $\eta<1/3$, this interval lies below $(1-\eta)M$.
Such a union has no crossing edge. Hence, with probability
$1-o(1)$, there is a unique component $S$ with
$|S|>(1-\eta)M$. Uniqueness follows because this size exceeds
$M/2$. If no such component exists, output the null hypothesis.

Recover the signs in $S$ up to their common flip. Accept
planting if there is a $k$-vertex set $C$ for which the
recovered signs on $\binom C2\cap S$ are all equal.
Equality of these signs is unaffected by their common flip.
For the planted set, all true internal signs are $+1$, so
it always passes when the large component is obtained.

\emph{Bounding false positives.}
Conditional on $|S|=s$, the unique large component is uniform
among coordinate subsets of size $s$ and independent of $G$.
This follows from the permutation symmetry of the pair labels.
Fix $s>(1-\eta)M$ and condition on this component size for
the following calculation. For a fixed candidate $C$, let
$F=\binom C2$. Under the null, conditional on $S$, the true
signs on $F\cap S$ are independent and uniform. If
$|F\cap S|=r\geqslant1$, they are all equal with probability
$2^{1-r}$; if $r=0$, the candidate passes automatically.
Thus the candidate's passing probability is at most
$2\mathbb E[2^{-|F\cap S|}]$, where the expectation is over
the uniform choice of $S$ of size $s$.

For any fixed $B\subseteq F$, uniformity of $S$ gives
\begin{align*}
    \Pr[B\cap S=\varnothing]
       &=\frac{\binom{M-s}{|B|}}{\binom M{|B|}}
        \leqslant\left(1-\frac sM\right)^{|B|}.
\end{align*}
If $|B|>M-s$, the probability is zero. Otherwise, the
inequality follows by multiplying the successive avoidance
probabilities when sampling without replacement.

To evaluate this expectation, expand
$2^{-|F\cap S|}=2^{-m}\prod_{e\in F}(1+\mathbf1[e\notin S])$:
\begin{align*}
    \mathbb E[2^{-|F\cap S|}]
       &=2^{-m}\sum_{B\subseteq F}\Pr[B\cap S=\varnothing]\\
       &\leqslant 2^{-m}\sum_{B\subseteq F}
                    \left(1-\frac sM\right)^{|B|}\\
       &=\left(1-\frac{s}{2M}\right)^m
        \leqslant\left(\frac{1+\eta}{2}\right)^m.
\end{align*}
The last bound holds for every $s>(1-\eta)M$. Averaging
over the component size and taking a union bound over all
$\binom nk$ candidates therefore gives
\begin{align*}
    \Pr[\text{false positive}]
       &\leqslant 2\binom nk
                   \left(\frac{1+\eta}{2}\right)^m=o(1).
\end{align*}
Indeed, the logarithm of this bound is at most
\begin{align*}
    \ln2+k\ln n-\frac{b_\eta k(k-1)}2
       &=\ln2-k\left(\frac{b_\eta(k-1)}2-\ln n\right).
\end{align*}
By Eq.~\eqref{eq:phase-partial-recovery-fraction} and
$k\geqslant(2+\epsilon)\log_2 n$, the expression in parentheses
is at least a positive constant times $\ln n$ for all
sufficiently large $n$. Hence the logarithm tends to $-\infty$.
The false-negative probability is bounded by the failure to
obtain the required pairs or large component. Both are $o(1)$,
and the total budget is $4L=O_\eta(M)=O_\epsilon(M)$ copies,
since $\eta$ was chosen as a function of $\epsilon$ alone.
The final search over vertex sets need not be efficient.
\end{proof}

\subsection{Quantum computational access after reconstruction}
\label{app:phase-access}

The phase states of $G$ and $\bar G$ differ only by a global
sign. Their natural classical counterpart is therefore
$[G]=\{G,\bar G\}$, represented by either adjacency matrix.

\begin{proposition}[Equivalence at polynomial copy budgets]
\label{prop:phase-access-equivalence}
Access to polynomially many phase-state copies is equivalent,
up to polynomial overhead and inverse-polynomial error, to
access to $[G]$ when both models allow polynomial-time quantum
processing.
\end{proposition}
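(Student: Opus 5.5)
The equivalence has two simulation directions, and I will prove each separately, tracking the error introduced.

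\emph{From $[G]$ to phase copies.} Given either adjacency string $x\in\{G,\bar G\}$, prepare $\ket{\psi_x}$ directly: start from a uniform superposition over the $M$ edge labels, then apply the sign oracle $\ket e\mapsto s_e(x)\ket e$. The oracle is computed reversibly from the classical adjacency string in $\operatorname{poly}(M)$ gates. If $M$ is not a power of two, prepare the uniform superposition over $[M]$ using standard exact or inverse-polynomially accurate amplitude methods. Since $\ket{\psi_{\bar G}}=-\ket{\psi_G}$, the choice of representative changes only a global phase. Hence $\psi_x^{\otimes t}=\psi_G^{\otimes t}$ for every $t$, and $t=\operatorname{poly}(n)$ copies cost polynomial time. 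Any polynomial-time quantum procedure that consumes phase copies can therefore be run on $[G]$ with polynomial overhead.

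\emph{From phase copies to $[G]$.} Invoke Proposition~\ref{prop:separate-reconstruction}. With $t=O(M(\log M+\log(1/\delta)))$ copies, the pair-parity measurements and classical sign propagation output a string equal to $G$ or $\bar G$, with failure probability at most $\delta$. Taking $\delta=1/\operatorname{poly}(n)$ keeps the copy count polynomial. The measurements have polynomial-size circuits, and the propagation is polynomial-time classical computation. A downstream procedure that uses $[G]$ then runs on this output. Its acceptance probability changes by at most $\delta$, since the success event leaves the input exactly $[G]$. This is the required inverse-polynomial error.

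\emph{Interpretation and expected obstacle.} ``Access to $[G]$'' must be read as access to a representative whose identity within the pair carries no further information. The reconstruction picks the representative with $\widehat s_r=+1$, which depends on $G$ only through the class. So any procedure meant to be a function of $[G]$ should be required to behave identically on both representatives, or should be composed with a random complementation; I will state this convention explicitly. The main point needing care is to formulate equivalence precisely, as mutual simulation of the two input models by polynomial-time quantum reductions with total-variation/trace-distance loss at most $\delta$. Once that is fixed, both directions are immediate from the state-preparation remark in Section~\ref{sec:phase-state-encoding} and Proposition~\ref{prop:separate-reconstruction}.
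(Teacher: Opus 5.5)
Your proposal is correct and follows the paper's proof. In one direction you invoke Proposition~\ref{prop:separate-reconstruction} with inverse-polynomial $\delta$; in the other you prepare $\ket{\psi_G}$ from either representative's sign table up to a global sign. Your convention about the representative corresponds to the paper's remark after the proof that averaging a test over $G$ and $\bar G$ makes it depend only on $[G]$.
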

\begin{proof}
From phase copies, Proposition~\ref{prop:separate-reconstruction}
recovers $[G]$ with failure probability at most $\delta$ using
$O(M(\log M+\log(1/\delta)))$ copies and polynomial-time
processing. Taking $\delta$ inverse-polynomial, we can then
run the desired quantum algorithm on the recovered description.

Conversely, either representative of $[G]$ gives a known sign
table. Prepare the uniform superposition of edge coordinates
and apply the corresponding diagonal phase circuit. This
prepares $\ket{\psi_G}$ up to a global sign with polynomial
cost. Repeating supplies any polynomial number of copies.
\end{proof}

This equivalence uses the full reconstruction budget and does
not address smaller copy budgets. It also retains quantum
post-processing, so it does not imply classical simulability.

Choosing a representative uniformly from $[G]$ leaves the null
law $P_0$ unchanged. Under planting, it gives the equal mixture
of a planted clique and a planted independent set. Both laws
are complement-invariant, so averaging a test over $G$ and
$\bar G$ makes it depend only on $[G]$ without changing either
acceptance probability.

\subsection{Approaching the limiting experiment}
\label{sec:finitet}
\label{app:phase-convergence}

The $O(M)$ detection bound above does not require convergence to
the entire limiting experiment. The following estimate serves a
different purpose: it quantifies that convergence uniformly in
$k$, so the limiting formulas remain informative when $n$ grows.

\begin{lemma}[Finite-copy convergence]
\label{lem:linear-rate}
For every integer $t\geqslant1$,
\begin{align*}
|D_t-D_\infty|
\leqslant2\sqrt{\exp(Me^{-4t/M})-1}.
\end{align*}
In particular, for fixed $\delta>0$,
\begin{align*}
t\geqslant\frac{1+\delta}{4}M\ln M
\quad\Longrightarrow\quad
|D_t-D_\infty|=O(M^{-\delta/2}),
\end{align*}
independently of $k$.
\end{lemma}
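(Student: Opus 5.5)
We compare the finite-$t$ compressed ensembles with the limiting ones. By the triangle inequality and Lemma~\ref{lem:equidist},
\begin{align*}
    |D_t-D_\infty|\leqslant\tfrac12\|\rho_1^{(t)}-\rho_1^{(\infty)}\|_1+\tfrac12\|\rho_0^{(t)}-\rho_0^{(\infty)}\|_1
    \leqslant\max_x\bigl\|\psi^{(t)}_x-\chi_x\bigr\|_1 .
\end{align*}
Here $\psi^{(t)}_x$ denotes the compressed projector of Eq.~\eqref{eq:phase-compressed-character}, transported to $H$ by the relabelling $\ket{h\oplus a_t}\mapsto\ket h$. The second inequality holds because both ensembles are $P_b$-mixtures of these pure states, and trace norm is convex.

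For pure states, $\tfrac12\|\psi-\chi\|_1=\sqrt{1-|\langle\psi|\chi\rangle|^2}$. After relabelling, the graph character is common to both vectors up to a global phase, so the overlap is graph independent:
\begin{align*}
    F_t:=\bigl|\langle\psi^{(t)}_x|\chi_x\rangle\bigr|=\sum_{z\in H_t}\sqrt{w_t(z)\,|H|^{-1}}.
\end{align*}
This is the Bhattacharyya coefficient between $w_t$ and the uniform law $u$ on $H_t$. Hence $|D_t-D_\infty|\leqslant 2\sqrt{1-F_t^2}$, and it remains to lower bound $F_t$.

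The main step is bounding $1-F_t^2$ by a $\chi^2$ quantity. Since $1-F^2\leqslant 2(1-F)=\sum(\sqrt{w}-\sqrt u)^2=H^2(w_t,u)$, I would use
\begin{align*}
    H^2(w_t,u)\leqslant\chi^2(w_t\|u)=|H|\sum_{z\in H_t}w_t(z)^2-1 .
\end{align*}
A sharper route is $F^2\geqslant 1/(1+\chi^2)$, which follows from Cauchy--Schwarz via $1=\sum w=\sum\sqrt{w/u}\,\sqrt{wu}$. Rearranged, this gives $1-F^2\leqslant\chi^2/(1+\chi^2)\leqslant\chi^2$.

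Next, compute $\chi^2$ by Fourier analysis on $\mathbb F_2^M$ using Eq.~\eqref{eq:walk-fourier}. Parseval gives
\begin{align*}
    2^M\sum_z w_t(z)^2=\sum_y\widehat w_t(y)^2=\sum_{j=0}^M\binom Mj\Bigl(1-\frac{2j}{M}\Bigr)^{2t}.
\end{align*}
The terms $j=0$ and $j=M$ each contribute $1$, and $|H|=2^{M-1}$, so
\begin{align*}
    \chi^2=\sum_{j=1}^{M-1}\binom Mj\Bigl(1-\frac{2j}M\Bigr)^{2t}.
\end{align*}
Pairing $j$ with $M-j$ and using $(1-2j/M)^{2t}\leqslant e^{-4tj/M}$ gives
\begin{align*}
    \chi^2\leqslant 2\bigl[(1+e^{-4t/M})^M-1\bigr]\leqslant2\bigl[\exp(Me^{-4t/M})-1\bigr].
\end{align*}
This yields $|D_t-D_\infty|\leqslant 2\sqrt{2(\exp(Me^{-4t/M})-1)}$, which is off from the stated bound by a factor $\sqrt2$.

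The main obstacle is this constant. To remove it, I would restrict the sum to $j\leqslant M/2$ by symmetry, using $\sum_{j\geqslant1}\binom Mj e^{-4tj/M}$ counted once, and absorb the middle and upper halves more carefully. Alternatively, I would use $\tfrac12\|\cdot\|_1$ per hypothesis, so that the triangle step contributes $\sqrt{1-F^2}$ twice rather than $2\sqrt{1-F^2}$. That alternative gives exactly $2\sqrt{\chi^2/2}$ if the $\chi^2$ bound is halved, so I would check which bookkeeping yields the stated constant. The asymptotic claim follows directly: if $t\geqslant\frac{1+\delta}4M\ln M$, then $Me^{-4t/M}\leqslant M^{-\delta}$. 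Hence $\exp(M^{-\delta})-1=O(M^{-\delta})$, and $|D_t-D_\infty|=O(M^{-\delta/2})$. Since $k$ never entered the argument, the bound is uniform in $k$.
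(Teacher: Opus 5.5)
Your route is the same as the paper's. The triangle inequality and convexity reduce everything to one graph-independent pure-state overlap, namely the Bhattacharyya coefficient of $w_t$ against the uniform law $u$ on $H_t$. Then $1-F_t^2\leqslant2(1-F_t)\leqslant\chi^2(w_t\|u)$, and $\chi^2$ is evaluated by Plancherel with Eq.~\eqref{eq:walk-fourier}, pairing $j$ with $M-j$ and using $1-2j/M\leqslant e^{-2j/M}$.

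The one real error is in evaluating $\chi^2$: you dropped the factor $|H_t|/2^M=\tfrac12$. Your own Parseval identity gives $2^M\sum_z w_t(z)^2=2+S$ with $S:=\sum_{j=1}^{M-1}\binom Mj(1-2j/M)^{2t}$. Hence
\begin{align*}
    \chi^2(w_t\|u)=2^{M-1}\sum_z w_t(z)^2-1=\tfrac12 S,
\end{align*}
not $S$. The terms $j$ and $M-j$ of $S$ coincide and the middle term $j=M/2$ vanishes, so the pairing cancels this $\tfrac12$ exactly:
\begin{align*}
    \chi^2(w_t\|u)=\sum_{1\leqslant j<M/2}\binom Mj\Bigl(1-\frac{2j}M\Bigr)^{2t}
    \leqslant\bigl(1+e^{-4t/M}\bigr)^M-1\leqslant\exp\bigl(Me^{-4t/M}\bigr)-1.
\end{align*}
Together with your first steps, this is precisely $|D_t-D_\infty|\leqslant2\sqrt{\exp(Me^{-4t/M})-1}$.

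As written, your proposal does not establish the first inequality with the stated constant, and neither remedy you suggest would fix it. Pairing with your value of $\chi^2$ still leaves the factor $2$; counting the sum ``once'' would just be a second error cancelling the first. Taking $\tfrac12\|\cdot\|_1$ per hypothesis gives $\sqrt{1-F_t^2}+\sqrt{1-F_t^2}=2\sqrt{1-F_t^2}$, which is exactly what you already used. The asymptotic claim $O(M^{-\delta/2})$ and its uniformity in $k$ are unaffected, and your derivation of them is fine.

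A minor point on your side remark: $F^2\geqslant1/(1+\chi^2)$ is true, but a single Cauchy--Schwarz on $\sum_z\sqrt{w/u}\,\sqrt{wu}$ only yields the trivial $1\leqslant\sum_z(w/u)\cdot\sum_z wu$. You would need a Cauchy--Schwarz weighted by $\sqrt{wu}$, followed by a second Cauchy--Schwarz. You do not need it, since $1-F^2\leqslant2(1-F)\leqslant\chi^2$ already suffices.
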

\begin{proof}
Work on the parity coset $H_t$ (see Eq.~\ref{eq:parity_coset}), translating the limiting states when $t$ is odd,
and let $u_t$ be uniform there. The finite and limiting pure
states associated with any graph have the same phases and
overlap
$\alpha_t:=\sum_z\sqrt{w_t(z)u_t(z)}$.
The pure-state trace-distance formula and convexity therefore give
\begin{align*}
\tfrac12\|\rho_b^{(t)}-\rho_b^{(\infty)}\|_1
\leqslant\sqrt{1-\alpha_t^2}
\leqslant\sqrt{\chi^2(w_t\|u_t)}.
\end{align*}
The second inequality follows from
\begin{align*}
1-\alpha_t^2
\leqslant2(1-\alpha_t)
=\sum_z(\sqrt{w_t(z)}-\sqrt{u_t(z)})^2 \leqslant\sum_z\frac{(w_t(z)-u_t(z))^2}{u_t(z)}
=\chi^2(w_t\|u_t).
\end{align*}
The triangle inequality then bounds
$|D_t-D_\infty|$ by $2\sqrt{\chi^2(w_t\|u_t)}$. Extending both laws by zero to the full cube, Plancherel and
Eq.~\eqref{eq:walk-fourier} give
\begin{align*}
\chi^2(w_t\|u_t)
&=\frac12\sum_{0<|y|<M}
       \left|1-\frac{2|y|}{M}\right|^{2t} \leqslant\sum_{j=1}^{\lfloor M/2\rfloor}
       \binom Mj e^{-4tj/M}
\leqslant\exp(Me^{-4t/M})-1.
\end{align*}
Here the modes $0,\mathbf1$ cancel, and the factor $1/2$ is
$|H_t|/2^M$. For the bound, pair complementary Fourier labels
and use $1-2j/M\leqslant e^{-2j/M}$ for $j\leqslant M/2$.
This proves the first claim. At the stated copy budget,
$Me^{-4t/M}\leqslant M^{-\delta}$, giving the rate.
\end{proof}

Thus $O(M\log M)$ copies also yield $D_t=1-o(1)$ whenever
$\mu=o(1)$. This is a consequence of approaching the full
limiting experiment; Proposition~\ref{prop:phase-linear-upper}
gives the sharper $O(M)$ detection guarantee for every fixed
margin above $2\log_2n$.

\section{Supporting Calculations for Schur Measurements}
\label{app:operator-reductions}

We calculate the exact weak Schur outcome probabilities referred to
in Section~\ref{sec:symmetries}, first for a fixed graph and then
for the null and planted ensembles.

\subsection{Exact block probabilities}
\label{sec:conditional-label-laws}

For a fixed computational-basis graph, the weak Schur label depends
only on its edge count. We now calculate that conditional law.
The weight-$w$ space
$\mathcal K_w=\operatorname{span}\{\ket{x}:|x|=w\}$ carries the
permutation representation on $w$-element subsets of the edge set.
Young's rule gives~\cite{Sagan01}
\begin{align*}
    \mathcal K_w
       &\cong\bigoplus_{i=0}^{\min(w,M-w)}S^{(M-i,i)}.
\end{align*}
Taking dimensions at successive weights yields
\begin{align*}
    d_i&=\binom Mi-\binom M{i-1},\qquad \binom M{-1}:=0.
\end{align*}

\begin{lemma}[Isotypic sampling of a graph]
\label{prop:graph-probe-label}
For a graph $x$ with $w$ edges,
\begin{align*}
    K(i\mid w)&=\bra{x}\Pi_i\ket{x}
       =
       \begin{cases}
           d_i/\binom Mw,&i\leqslant\min(w,M-w),\\
           0,&\text{otherwise}.
       \end{cases}
\end{align*}
\end{lemma}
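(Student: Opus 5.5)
Since $\Pi_i$ commutes with every $R(\pi)$, the diagonal entry $\bra{x}\Pi_i\ket{x}$ is constant on the $S_M$-orbit of $x$. That orbit is the set of all $\binom Mw$ graphs with $w$ edges, so Lemma~\ref{prop:edge-covariance} lets us write this common value as $K(i\mid w)$. The plan is to average the diagonal entry over the orbit and then read off a trace.

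First, $\Pi_i$ commutes with the weight operator $\widehat W$, because $\widehat W$ commutes with all edge permutations and $\Pi_i$ lies in the algebra generated by them. Hence $\Pi_i$ preserves each weight space $\mathcal K_w$. Let $P_w$ be the projector onto $\mathcal K_w$. Summing the constant diagonal entry over the orbit gives
\begin{align*}
    \binom Mw\,K(i\mid w)
       &=\sum_{|x|=w}\bra{x}\Pi_i\ket{x}
        =\operatorname{Tr}(\Pi_iP_w).
\end{align*}
Since $\Pi_i$ and $P_w$ commute, $\Pi_iP_w$ is the orthogonal projector onto the $\lambda_i$-isotypic component of the $S_M$-representation $\mathcal K_w$. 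Its trace is therefore the multiplicity of $S^{\lambda_i}$ in $\mathcal K_w$ times $d_i$.

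Second, I will use Young's rule as quoted above. It says $\mathcal K_w\cong\bigoplus_{i=0}^{\min(w,M-w)}S^{(M-i,i)}$, and each $S^{(M-i,i)}$ appears with multiplicity one. This follows from the Kostka numbers $K_{(M-i,i),(M-w,w)}\in\{0,1\}$ for two-row shapes. The multiplicity is therefore $1$ when $i\leqslant\min(w,M-w)$ and $0$ otherwise. Substituting into the trace identity gives the stated formula. As a consistency check, the dimension formula $d_i=\binom Mi-\binom M{i-1}$ follows by comparing $\dim\mathcal K_w$ at consecutive weights $w\leqslant M/2$. Summing over $i$ also recovers $\sum_{i\leqslant w}d_i=\binom Mw$, so the probabilities are normalized.

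The only delicate point is the identification of $\operatorname{Tr}(\Pi_iP_w)$ with $d_i$ times the multiplicity. This needs $\Pi_i$, defined globally on $\mathcal H$, to restrict on $\mathcal K_w$ to that subrepresentation's own isotypic projector. I will justify it with the character formula $\Pi_i=(d_i/M!)\sum_\pi\chi_{\lambda_i}(\pi)R(\pi)$. This expression is intrinsic to any representation containing $\mathcal K_w$ and is block-diagonal with respect to the weight decomposition $\mathcal H=\bigoplus_w\mathcal K_w$. Once that is in place, the rest is counting.
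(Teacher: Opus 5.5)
Your proposal is correct and follows the paper's argument. In both, the diagonal entries are constant on the edge-permutation orbit $\mathcal K_w$, their sum $\operatorname{Tr}(\Pi_i P_w)$ equals $d_i$ times the multiplicity of $S^{\lambda_i}$ in $\mathcal K_w$, which is $1$ or $0$ by Young's rule, and one then divides by $\binom Mw$. Your added justification via the character formula, that $\Pi_i$ is block-diagonal in the weight decomposition, makes explicit a step the paper leaves implicit.
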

\begin{proof}
Edge permutations act transitively on the computational basis
of $\mathcal K_w$, and $\Pi_i$ commutes with them. Its diagonal
entries on that basis are therefore equal. Their sum is the rank
of the projector restricted to $\mathcal K_w$, namely $d_i$ when
the representation occurs and zero otherwise. Divide by
$\dim\mathcal K_w=\binom Mw$.
\end{proof}

Averaging gives
\begin{align*}
    p_b(i)&=d_i\sum_{w=i}^{M-i}
               \frac{\Pr_{P_b}(|X|=w)}{\binom Mw}.
\end{align*}
Normalization follows from
$\sum_{i=0}^{\min(w,M-w)}d_i=\binom Mw$.

For the planted ensemble these probabilities also admit a closed
form. Put $s=M-m$ and use zero for binomial coefficients outside
their natural range. For a fixed planted set $D$, write the
symmetric-subspace projector from
Eq.~\eqref{eq:specht-candidate-projectors} in Schur coordinates as
\begin{align*}
    \Gamma_D&=\bigoplus_i\Gamma_{D,i}\otimes I_{r_i}.
\end{align*}
Thus $\Gamma_{D,i}$ is its projector on the $i$-th Specht space.
Its rank is
\begin{align}
    g_i&=\operatorname{rank}\Gamma_{D,i}
        =\binom si-\binom s{i-m-1}.
    \label{eq:conditional-seed-rank}
\end{align}
Each candidate projector has the same Specht-block ranks, so
$\operatorname{Tr}B_{k,i}=g_i$. The reduced-state formulas in
Proposition~\ref{prop:specht-projector-mixture} therefore give
\begin{align*}
    p_0(i)&=\frac{r_id_i}{N},\qquad
    p_1(i)=\frac{r_ig_i}{N\tau_m},\qquad
    \Lambda_i:=\frac{p_1(i)}{p_0(i)}
              =\frac{g_i}{\tau_m d_i}.
\end{align*}
To prove the rank formula, the trace-generating polynomial of
$\Gamma_D$ on Hamming-weight spaces is
$(1+z+\cdots+z^m)(1+z)^s$: the symmetric subspace of the $m$
selected qubits has one vector of each weight. Every two-row
Specht module occurs once at each allowed weight. Subtracting
the traces at weights $i$ and $i-1$ isolates its rank, giving
Eq.~\eqref{eq:conditional-seed-rank}.

\subsection{Operator symmetries}
\label{sec:operator-symmetry}

We now identify which edge correlations the Schur readouts
retain. The collective rotation average $\mathcal C$ retains
the measured label and Specht state. Averaging over edge
permutations afterward leaves only the weak label. Comparing
these two operations separates the information lost by
discarding multiplicity from the information lost by keeping
only the label. For diagonal graph inputs, collective averaging annihilates
odd-degree Fourier components, while every nonzero even-degree
component survives with an explicit reduction in norm.
The additional edge average keeps only the average coefficient
at each degree, removing differences between edge arrangements.
We illustrate this distinction using adjacent and disjoint
edge pairs, and derive an exact second-moment formula for
the weak-label distribution. A measurement effect $E$ detects the difference
$\Delta:=\sigma_1-\sigma_0$ through $\operatorname{Tr}(E\Delta)$.
To determine which part of this difference an edge-invariant
effect can detect, define
\begin{align*}
    \mathcal T(A)
       &:=\frac1{M!}\sum_{\pi\in S_M}
          R(\pi)AR(\pi)^\dagger.
\end{align*}
This is the orthogonal projector onto the edge-invariant
operators for the Hilbert--Schmidt inner product
$\langle A,B\rangle_{\mathrm{HS}}
=\operatorname{Tr}(A^\dagger B)$~\cite{BRS07}.
For diagonal inputs, it averages the probabilities within
each edge-count class. As in
Eq.~\eqref{eq:symmetry-signal-decomposition}, write
\begin{align*}
    \Delta
       &=\underbrace{\mathcal T(\Delta)}_{\Delta_{\mathrm{count}}}
          +\Delta_{\mathrm{str}},
    &
    \Delta_{\mathrm{str}}
       &:=(I-\mathcal T)(\Delta).
\end{align*}
The first component depends only on edge count; the second
has zero average within every edge-count class.

Every label projector satisfies $\mathcal T(\Pi_i)=\Pi_i$.
Since $\mathcal T$ is self-adjoint,
\begin{align*}
    \operatorname{Tr}(\Pi_i\Delta_{\mathrm{str}})
       &=\operatorname{Tr}
          \bigl(\Pi_i\mathcal T(\Delta_{\mathrm{str}})\bigr)=0.
\end{align*}
Thus the weak label detects only $\Delta_{\mathrm{count}}$,
recovering the limitation from Subsection~\ref{sec:gpe}.
Even when $i$ labels a nontrivial representation on the graph
Hilbert space, the projector $\Pi_i$ is invariant under
conjugation. Detecting $\Delta_{\mathrm{str}}$ requires an
effect that is not invariant under all edge permutations.

\subsubsection{Correlations retained after discarding multiplicity}

The collective rotation average $\mathcal C$ from
Eq.~\eqref{eq:collective-rotation-channel} replaces multiplicity
by a maximally mixed state conditional on the measured label.
Applying $\mathcal T$ afterward also makes the Specht factor
maximally mixed, leaving only the label probabilities:
\begin{align}
    \mathcal T\mathcal C(A)
       &=\sum_i\frac{\operatorname{Tr}(\Pi_iA)}{d_i r_i}\Pi_i.
    \label{eq:operator-label-channel}
\end{align}
We compare these two outputs through the edge-sign
correlations of the input distribution. For an edge subset $d$, define
\begin{align*}
    Z_d&:=\bigotimes_e Z_e^{\mathbf1[e\in d]},
    &
    a_d&:=\mathbb E_{P_1}(-1)^{d\cdot X},\notag\\
    \Delta&=\frac1N\sum_{d\ne\varnothing}a_dZ_d.
\end{align*}
The coefficient $a_d$ is the correlation of the edge signs
$\{(-1)^{X_e}:e\in d\}$~\cite{ODonnell14}; the corresponding
null correlation is zero for every nonempty $d$. Let $C$ be the uniformly random planted $k$-vertex set, and
let $V(d)$ be the set of endpoints of the edges in $d$, with
$v(d):=|V(d)|$. If $V(d)\subseteq C$, every selected edge is
forced to be present, so their sign product is $(-1)^{|d|}$.
Otherwise, the product contains an independent fair edge sign
and has conditional mean zero. Hence
\begin{align*}
    a_d
       &=(-1)^{|d|}\Pr[V(d)\subseteq C] =(-1)^{|d|}
          \frac{\binom{n-v(d)}{k-v(d)}}{\binom nk}.
\end{align*}
The numerator counts the planted sets containing every endpoint
of $d$ and is zero when $v(d)>k$. Let $\mathcal H_j:=\operatorname{span}\{Z_d:|d|=j\}$ be the
degree-$j$ diagonal Fourier subspace. The following identity
describes exactly how collective averaging acts on these
subspaces.

\begin{proposition}[Exact survival of even Fourier degrees]
\label{prop:operator-even-survival}
For edge subsets $d,d'$ and $j=|d|$,
\begin{align*}
    \frac1N
       \bigl\langle\mathcal C(Z_d),\mathcal C(Z_{d'})\bigr\rangle_{\mathrm{HS}}
       &=\mathbf1[d=d']
          \begin{cases}
             (j+1)^{-1},&j\text{ even},\\
             0,&j\text{ odd}.
          \end{cases}
\end{align*}
Thus $\mathcal C$ annihilates $\mathcal H_j$ for odd $j$.
For even $j$, its restriction to $\mathcal H_j$ is injective
and multiplies squared Hilbert--Schmidt norms by $(j+1)^{-1}$.
\end{proposition}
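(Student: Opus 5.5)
Since $\mathcal C$ is a Haar twirl, it is self-adjoint for the Hilbert--Schmidt inner product and idempotent, so
\begin{align*}
    \langle\mathcal C(Z_d),\mathcal C(Z_{d'})\rangle_{\mathrm{HS}}
       &=\langle Z_d,\mathcal C(Z_{d'})\rangle_{\mathrm{HS}}
        =\int_{\mathrm{SU}(2)}\operatorname{Tr}\!\left[Z_d\,U^{\otimes M}Z_{d'}(U^\dagger)^{\otimes M}\right]dU .
\end{align*}
The trace factorizes over edge qubits. On an edge outside $d\cup d'$ the factor is $\operatorname{Tr}I=2$. On an edge in exactly one of $d,d'$ it is $\operatorname{Tr}(UZU^\dagger)=0$ or $\operatorname{Tr}(Z)=0$. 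Hence the integrand vanishes identically unless $d=d'$, which gives the orthogonality $\mathbf 1[d=d']$. For $d=d'$ with $|d|=j$, the integrand is $2^{M-j}\,(\operatorname{Tr}[Z\,UZU^\dagger])^j$.

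Write $UZU^\dagger=\vec n\cdot\vec\sigma$ with $\vec n$ uniform on the Bloch sphere, since Haar measure on $\mathrm{SU}(2)$ pushes forward to the uniform measure on $S^2$ under the adjoint action. Then $\operatorname{Tr}[Z\,\vec n\cdot\vec\sigma]=2n_z$, and $n_z$ is uniform on $[-1,1]$ by Archimedes' theorem. The integral therefore equals
\begin{align*}
    2^{M-j}\,2^j\,\mathbb E[n_z^j]=N\cdot\begin{cases}(j+1)^{-1},&j\text{ even},\\ 0,&j\text{ odd}.\end{cases}
\end{align*}
Dividing by $N$ gives the stated formula.

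For the consequences, take $j$ odd. Then $\|\mathcal C(Z_d)\|_{\mathrm{HS}}=0$ for each $|d|=j$, so $\mathcal C$ vanishes on a spanning set of $\mathcal H_j$ and annihilates it. For $j$ even, the $\{Z_d\}_{|d|=j}$ are HS-orthogonal with squared norm $N$. The computed Gram matrix shows that the images are also orthogonal, with squared norm $N/(j+1)$. Thus, for $A=\sum_{|d|=j}c_dZ_d$,
\begin{align*}
    \|\mathcal C(A)\|_{\mathrm{HS}}^2=\frac{\|A\|_{\mathrm{HS}}^2}{j+1},
\end{align*}
which gives injectivity and the norm scaling.

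The only point needing care is the reduction of the Haar integral to the moment of $n_z$. It rests on the self-adjointness and idempotence of $\mathcal C$ and on the pushforward of Haar measure to $S^2$. Both are standard, and I would cite them rather than derive them.
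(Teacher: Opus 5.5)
Your proposal is correct and follows essentially the same route as the paper. You use that $\mathcal C$ is a Hilbert--Schmidt orthogonal projector to reduce to $\langle Z_d,\mathcal C(Z_{d'})\rangle_{\mathrm{HS}}$, get orthogonality from mismatched Pauli supports, and evaluate the diagonal term as the $j$-th moment of $n_z$, uniform on $[-1,1]$; your qubit-by-qubit factorization of the trace just spells out the support argument the paper states in one line.
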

\begin{proof}
For Haar-random $U$,
$UZU^\dagger=\mathbf n\cdot\boldsymbol\sigma$, where
$\mathbf n$ is uniform on the unit sphere.
Collective conjugation preserves the support of each Pauli
string, so strings on distinct supports remain orthogonal.
Since $\mathcal C$ is an orthogonal projector,
\begin{align*}
    \frac1N
       \bigl\langle\mathcal C(Z_d),\mathcal C(Z_{d'})\bigr\rangle_{\mathrm{HS}}
       &=\frac1N\langle Z_d,\mathcal C(Z_{d'})\rangle_{\mathrm{HS}}\\
       &=\mathbf1[d=d']\,\frac12\int_{-1}^1t^j\,dt.
\end{align*}
Here $t=n_z$ is uniform on $[-1,1]$.
The integral vanishes for odd $j$ and equals $(j+1)^{-1}$
for even $j$. Since the $Z_d$ are orthogonal with squared
norm $N$, this also proves the norm factor and injectivity.
\end{proof}

The simplest structural example involves two distinct edges:
\begin{align*}
    \mathcal C(Z_eZ_f)
       &=\frac13(X_eX_f+Y_eY_f+Z_eZ_f),
       \qquad e\ne f.
\end{align*}
For $n\geqslant4$ and $3\leqslant k<n$, the planted coefficient
for an adjacent pair ($v=3$) differs from that for a disjoint
pair ($v=4$). Collective averaging preserves this variation
between pairs. Edge averaging then assigns the same coefficient
to every pair and removes the variation. Thus the measured label
and Specht state retain arrangement information that the weak
label cannot detect.

We next identify the representation type of this structural
component. The relevant action is conjugation on operators.
For the diagonal encoding $D_f:=\sum_x f(x)\ketbra{x}{x}$,
\begin{align*}
    D_{R(\pi)f}
       &=R(\pi)D_fR(\pi)^\dagger.
\end{align*}
Vectorizing $A$ as
$|A\rangle\!\rangle:=\sum_{x,y}A_{xy}|x\rangle|y\rangle$
expresses this action as
\begin{align*}
    |R(\pi)AR(\pi)^\dagger\rangle\!\rangle
       &=\mathcal R(\pi)|A\rangle\!\rangle,
    &
    \mathcal R(\pi)
       &:=R(\pi)\otimes\overline{R(\pi)}.
\end{align*}
This notation describes the operator representation; it does
not assume access to $|\sigma_b\rangle\!\rangle$ as a quantum state. Conjugation sends $Z_d$ to $Z_{\pi d}$, so $\mathcal H_j$ is
the permutation module on $j$-element subsets of the edge
positions. Young's rule gives
\begin{align*}
    \mathcal H_j
       &\cong\operatorname{Ind}_{S_{M-j}\times S_j}^{S_M}\mathbf1
        \cong\bigoplus_{s=0}^{\min(j,M-j)}S^{(M-s,s)},
\end{align*}
with each type occurring once~\cite{Sagan01}.
The diagonal signal therefore has only two-row conjugation
types. Collective rotations commute with edge permutations,
so $\mathcal C$ does not mix these types. The same norm factor
from Proposition~\ref{prop:operator-even-survival} applies
within each type of a fixed even degree.

Vertex invariance restricts the types further. Let
$\mathsf Q_\lambda$ project onto an operator isotypic component.
Because both hypotheses are vertex invariant,
$\mathsf Q_\lambda(\Delta)$ is vertex invariant as well.
It can be nonzero only if $S^\lambda$ contains a vertex-fixed
vector. In particular,
\begin{align*}
    \mathsf Q_{(M-1,1)}(\Delta)&=0.
\end{align*}
The standard module consists of zero-sum functions on edge
positions. Vertex permutations act transitively on these
positions, so every invariant function is constant; the
zero-sum condition then forces it to vanish.
For the two-edge example above, removing the trivial component
therefore leaves only type $(M-2,2)$.

We now quantify the comparison between collective averaging
and weak Schur sampling. Write
$\|A\|_2^2:=\langle A,A\rangle_{\mathrm{HS}}$.

\begin{corollary}[Structural signal and weak-label second moment]
\label{cor:operator-structural-norm}
Let
\begin{align*}
    \bar a_j
       &:=\frac1{\binom Mj}\sum_{|d|=j}a_d
        =(-1)^j\frac{\binom mj}{\binom Mj}.
\end{align*}
Writing $\chi^2(p_1\|p_0):=\sum_i(p_1(i)-p_0(i))^2/p_0(i)$,
we have
\begin{align*}
    N\|\mathcal C(\Delta_{\mathrm{str}})\|_2^2
       &=\sum_{\substack{2\leqslant j\leqslant M\\j\ {\rm even}}}
          \frac1{j+1}\sum_{|d|=j}|a_d-\bar a_j|^2,
    \\
    \chi^2(p_1\|p_0)
       &=N\|\mathcal T\mathcal C(\Delta)\|_2^2  =\sum_{\substack{2\leqslant j\leqslant m\\j\ {\rm even}}}
          \frac{\binom mj^2}{(j+1)\binom Mj}.
\end{align*}
\end{corollary}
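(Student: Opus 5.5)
The plan is to expand everything in the diagonal Pauli basis $\{Z_d\}$ and use two commutation facts: $\mathcal C$ and $\mathcal T$ commute, and $\mathcal T$ maps each degree space $\mathcal H_j$ to itself. First, $\mathcal T$ acting on diagonal operators averages coefficients over edge-count classes in the Fourier picture. Indeed, conjugation by $R(\pi)$ sends $Z_d\mapsto Z_{\pi d}$, and $S_M$ acts transitively on $j$-subsets. Hence
\[
\mathcal T(\Delta)=\frac1N\sum_{j\geqslant1}\bar a_j\sum_{|d|=j}Z_d ,
\qquad
\Delta_{\mathrm{str}}=\frac1N\sum_{d\ne\varnothing}(a_d-\bar a_{|d|})Z_d .
\]
The value of $\bar a_j$ follows from the formula for $a_d$: the sum $\sum_{|d|=j}\Pr[V(d)\subseteq C]$ counts, in expectation, the $j$-subsets of the $m$ planted edges. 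This gives $\binom mj$, so $\bar a_j=(-1)^j\binom mj/\binom Mj$. This is the same moment already used under the auxiliary law $Q$ in Proposition~\ref{thm:weak-schur-vanishing}.

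For the first identity, apply $\mathcal C$ termwise and use Proposition~\ref{prop:operator-even-survival}. The images $\mathcal C(Z_d)$ are pairwise Hilbert--Schmidt orthogonal. Each has $N^{-1}\|\mathcal C(Z_d)\|_2^2=(|d|+1)^{-1}$ for even $|d|$, and $\mathcal C(Z_d)=0$ for odd $|d|$. Expanding the squared norm gives
\[
N\|\mathcal C(\Delta_{\mathrm{str}})\|_2^2=N\cdot\frac1{N^2}\sum_{d}\,|a_d-\bar a_{|d|}|^2\,\|\mathcal C(Z_d)\|_2^2 ,
\]
which is the claimed sum over even $j$. The $j=0$ term is absent because $\Delta$ has no identity component, and $j=M$ may be included harmlessly.

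For the second identity, I would start from Eq.~\eqref{eq:operator-label-channel}, which gives $\mathcal T\mathcal C(\Delta)=\sum_i \bigl(p_1(i)-p_0(i)\bigr)\Pi_i/(d_ir_i)$. The projectors $\Pi_i$ are orthogonal with $\|\Pi_i\|_2^2=d_ir_i$, and $p_0(i)=d_ir_i/N$. Therefore
\[
N\|\mathcal T\mathcal C(\Delta)\|_2^2=N\sum_i\frac{(p_1(i)-p_0(i))^2}{d_ir_i}=\chi^2(p_1\|p_0).
\]
Next, use $\mathcal T\mathcal C=\mathcal C\mathcal T$, which holds because collective rotations commute with edge permutations. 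This gives $\mathcal T\mathcal C(\Delta)=\mathcal C(\Delta_{\mathrm{count}})=N^{-1}\sum_j\bar a_j\,\mathcal C\bigl(\sum_{|d|=j}Z_d\bigr)$. Orthogonality from Proposition~\ref{prop:operator-even-survival} then yields $\sum_{j\ \mathrm{even}}\bar a_j^2\binom Mj/(j+1)$. Substituting $\bar a_j$ gives $\binom mj^2/((j+1)\binom Mj)$, and these terms vanish for $j>m$.

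The main obstacle is justifying $\mathcal T\mathcal C=\mathcal C\mathcal T$ as maps on all operators, and making sure Eq.~\eqref{eq:operator-label-channel} is applied with the correct normalization. Both $\mathcal C$ and $\mathcal T$ are twirls over commuting unitary group actions, $U^{\otimes M}$ and $R(\pi)$. Their composite is therefore the twirl over the product group, and this twirl is order-independent. The rest is bookkeeping with orthogonality of the $Z_d$ and of the $\Pi_i$.
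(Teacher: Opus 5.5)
Your proposal is correct and follows the paper's proof essentially step for step. Like the paper, you obtain $\bar a_j$ by counting $j$-subsets of planted edges, use the Fourier expansion in which $\mathcal T$ replaces $a_d$ by $\bar a_{|d|}$, apply Proposition~\ref{prop:operator-even-survival} to both expansions together with $\mathcal C\mathcal T=\mathcal T\mathcal C$, and use Eq.~\eqref{eq:operator-label-channel} for the $\chi^2$ identity. Your justification of that commutation via the commuting actions of $U^{\otimes M}$ and $R(\pi)$ supplies a detail the paper only asserts.
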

\begin{proof}
Each planted location contains exactly $\binom mj$ edge
subsets of size $j$. Summing the coefficients over these
subsets and averaging over the planted location gives
$\sum_{|d|=j}a_d=(-1)^j\binom mj$, proving the formula
for $\bar a_j$. Edge averaging replaces $a_d$ by $\bar a_{|d|}$.
The coefficients of $\Delta_{\mathrm{str}}$ are therefore
$a_d-\bar a_{|d|}$, while those of $\mathcal T(\Delta)$
are $\bar a_{|d|}$. Apply
Proposition~\ref{prop:operator-even-survival} to these two
expansions, using that $\mathcal C$ commutes with $\mathcal T$. Finally, Eq.~\eqref{eq:operator-label-channel} shows that
$\mathcal T\mathcal C(\Delta)$ is scalar on block $i$,
with value $(p_1(i)-p_0(i))/(d_i r_i)$. Since
$p_0(i)=d_i r_i/N$,
\begin{align*}
    N\|\mathcal T\mathcal C(\Delta)\|_2^2
       &=N\sum_i\frac{(p_1(i)-p_0(i))^2}{d_i r_i} =\chi^2(p_1\|p_0).
\end{align*}
\end{proof}

The first identity quantifies the arrangement-dependent
correlations retained by collective averaging. The second
shows what remains after edge averaging: only the mean
coefficient at each even degree. In particular,
\begin{align*}
    D_{\mathsf L}
       &\leqslant\tfrac12\sqrt{\chi^2(p_1\|p_0)}
        =O(k^4/n^2)
        \qquad\text{when }k=o(\sqrt n).
\end{align*}

These identities concern the measured label and Specht output,
and its reduction to the weak label. A Specht-only measurement
must satisfy an additional constraint: it must use one effect
on the common physical register, without access to the label.
On the original graph space, such an effect has the form
\begin{align}
    E_F
       &:=U_{\mathrm{Sch}}^\dagger
          (I_{\mathsf L}\otimes I_{\mathsf M}\otimes F)
          U_{\mathrm{Sch}},
    &
    0&\leqslant F\leqslant I_{\mathsf S}.
    \label{eq:specht-only-effect}
\end{align}
The isometry automatically compresses this operator to the
valid output space, and the same $F$ is used for every erased
label.\footnote{Averaging an effect over vertex permutations
preserves its acceptance probabilities under both hypotheses,
but need not preserve the form in
Eq.~\eqref{eq:specht-only-effect}.}
Since $\mathcal C(E_F)=E_F$,
\begin{align*}
    \operatorname{Tr}(E_F\Delta)
       &=\operatorname{Tr}\bigl(E_F\mathcal C(\Delta)\bigr).
\end{align*}
Thus only the correlations surviving collective averaging
can contribute to its distinguishing gap. Above the fixed-margin logarithmic threshold,
Corollary~\ref{thm:retain-specht} guarantees a Specht-only effect
achieving near-perfect discrimination. The present analysis
identifies the surviving correlations, but does not provide
an efficient implementation of that effect.

\bibliographystyle{amsalpha}
\bibliography{biblio-master}
\end{document}